\newtheorem{theorem}{Theorem}[section]
\newtheorem{lemma}[theorem]{Lemma}
\newtheorem{claim}[theorem]{Claim}
\newtheorem{observation}[theorem]{Observation}
\newtheorem{corollary}[theorem]{Corollary}
\newtheorem{definition}[theorem]{Definition}
\newtheorem{question}[theorem]{Question}
\def\eps{\e}
\def\cC{\mathcal{C}}
\def\cP{\mathcal{P}}
\def\cS{\mathcal{S}}
\def\cH{\mathcal{H}}
\def\cT{\mathcal{T}}
\newcommand{\Sketch}{\operatorname{sketch}}
\def\dist{\delta}
\newcommand{\ddim}{\ensuremath{d}}
\newcommand{\lca}{\ensuremath{\mathrm{lca}}}
\newcommand{\apices}{\ensuremath{\mathrm{Apices}}}
\title{Light Tree Covers, Routing, and Path-Reporting Oracles \\via Spanning Tree Covers in Doubling Graphs}
\date{}
\author{%
Hsien-Chih Chang%
\thanks{Department of Computer Science, Dartmouth College. Email: {\tt hsien-chih.chang@dartmouth.edu}.} 
\and 
Jonathan Conroy%
\thanks{Department of Computer Science, Dartmouth College. Email: {\tt jonathan.conroy.gr@dartmouth.edu}}  
\and 
Hung Le%
\thanks{Manning CICS, UMass Amherst. Email: {\tt hungle@cs.umass.edu}}  
\and
Shay Solomon%
\thanks{Tel Aviv University. Email: {\tt shayso@tauex.tau.ac.il}}  
\and
Cuong Than%
\thanks{Manning CICS, UMass Amherst. Email: {\tt cthan@cs.umass.edu}}  
}
\begin{document}
\maketitle
\begin{abstract}
A \emph{$(1+\eps)$-stretch tree cover} of an edge-weighted $n$-vertex graph $G$ is a collection of trees, where every pair of vertices has a
$(1+\eps)$-stretch path in one of the trees. 
The celebrated \emph{Dumbbell Theorem} by Arya \etal~[STOC'95] states that any set of $n$ points in $d$-dimensional Euclidean space admits a $(1+\eps)$-stretch tree cover with a constant  number of trees, where the constant depends on $\eps$ and the dimension $d$.
This result was generalized for arbitrary doubling metrics by Bartal \etal~[ICALP'19].
While the total number of edges in the tree covers of Arya \etal\ and Bartal \etal\ is $O(n)$, all known tree cover constructions incur a total 
\emph{lightness} of $\Omega(\log n)$;
whether one can get a tree cover of 
constant lightness has remained a longstanding open question, even for 2-dimensional point sets.

In this work we resolve this fundamental question 
in the affirmative, 
as a direct corollary of a new construction of $(1+\eps)$-stretch \emph{spanning} tree cover for doubling graphs; in a spanning tree cover,  every tree may only use edges of the input graph rather than the corresponding metric. 
To the best of our knowledge, this is the first constant-stretch
spanning tree cover construction (let alone for 
$(1+\eps)$-stretch) with a constant number of trees, for any nontrivial family of graphs.

Concrete applications of our spanning tree cover include:

\begin{enumerate}
\item A $(1+\eps)$-stretch tree cover construction, where both the number of trees and lightness are bounded by $O(1)$, for doubling graphs. In doubling metrics, we can also bound the maximum degree of each vertex by $O(1)$ (which is impossible to achieve in doubling graphs).
\item A compact $(1+\eps)$-stretch routing scheme in the labeled model
for doubling graphs, 
which uses the asymptotically \emph{optimal} (up to the dependencies on $\eps$ and $d$) bound of $O(\log n)$ bits
on all the involved measures (label, header, and routing tables sizes). 
This is a significant improvement over the works of Chan \etal~[SODA'05], Abraham \etal~[ICDCS'06], Konjevod \etal~[SODA'07], where the local memory usage either depends on the aspect ratio of the graph or is $\Omega(\log^3 n)$.
\item The first \emph{path-reporting} distance oracle for doubling graphs achieving \emph{optimal bounds} for all important parameters: $O(n)$ space, $(1+\eps)$-stretch, and $O(1)$ query time for  constant $d$ and $\eps$.

\end{enumerate}
\end{abstract}
\newpage
\section{Introduction}

\paragraph{Spanners.~} 
One of the most well-studied graph sparsifiers is the \EMPH{$t$-spanner}: a subgraph of a given edge-weighted graph $G$ that preserves pairwise distances up to a factor of $t$, called \EMPH{the stretch}. 
As one would expect, a $t$-spanner is often sparse (having a small number of edges), or light (having small total edge weights).%
\footnote{The \EMPH{sparsity} of a spanner $H$ is the ratio $|E_H|/|V|$ while the \EMPH{lightness} of $H$ is the ratio $w(E_H)/w(\mathrm{MST})$; here $w(E_H)$ is the total weight of edges in $E(H)$ and $w(\mathrm{MST})$ is the total weight of edges in the minimum spanning tree of $G$.}
Sparse spanners are well understood: constructions for various classes of graphs and metric spaces developed decades ago~\cite{ADDJS93,Cla87,Kei88,KG92,GR08Soda} are either optimal or conditionally optimal~\cite{ADDJS93,LS22}.
However, understanding light spanners has proved to be much more challenging, and for good reasons. 

At the conceptual level, every edge contributes equally to sparsity, while edges of different weights contribute differently to  lightness; in the most extreme case, one edge could contribute as much as $\Omega(n)$ other edges to the lightness.  This seemingly innocuous reason explains why, for several classes of graphs and metrics, constructing a light spanner was a long-standing open problem~\cite{DHN93,DNS95,GS02,Got15,CW15,BLWN17}.

A long line of research on light spanners over the last three decades~\cite{CDNS92,ADDJS93,DHN93, GS02, Got15, CW15, ENS15, FS20, BLWN17,BLW19, LS23, LST23, BF24} has discovered surprising results and techniques. 
It was curiously observed over the years that in all well-studied classes of graphs and metrics, lightness is only larger than sparsity by a factor of $\e^{-1}$. 
Representative examples (not meant to be exhaustive) include:

\begin{table}[h!]\small
\centering
\smallskip
\def\arraystretch{1.7}
\begin{tabular}{c:cc:cc:cc}
  \multicolumn{1}{c}{Input} & \multicolumn{2}{c}{Stretch} & \multicolumn{2}{c}{Sparsity (Ref.)} & \multicolumn{2}{c}{Lightness (Ref.)} \\ 
  \hline
  $\mathbb{R}^{O(1)}$ & $1+\eps$ & & $\Theta(\eps^{-d+1})$ & \cite{Kei88} & $\Theta(\eps^{-d})$ & \cite{LS22} \\
  \hdashline
  planar graphs & $1+\eps$ & & $\Theta(1)$ & & $\Theta(\e^{-1})$ & \cite{ADDJS93} \\
  \hdashline
  general graphs & \makecell{$2k-1$ \\ $(2k-1)(1+\eps)$} & \makecell{(sparse) \\ (light)} & $\Theta(n^{1/k})$%
  \footnotemark & \cite{ADDJS93} & $O(\e^{-1} n^{1/k})$ & \cite{LS23,Bodwin24} 
\end{tabular}
\end{table}

\footnotetext{For general graphs, the sparsity bound $O(n^{1/k})$ is optimal  assuming the Erd\H{o}s' Girth Conjecture.}

\noindent Whether the lightness bound $O(\e^{-1} n^{1/k})$ is tight for general graphs remains an outstanding open question; however, the recent (conditional) lower bound $\Omega(\e^{-1/k} n^{1/k})$ of Bodwin and Flics~\cite{BF24} suggested that this bound is probably optimal. 
A recent technical highlight is the work of Le and Solomon~\cite{LS23}, who showed that these seemingly disparate observations are instances of the same ``unified theme'': under a mild technical condition, they showed that, up to a factor of $O(\log(1/\eps))$:
\begin{equation}
\label{eq:unified-light}
\mathrm{Lightness}_{t(1+\eps)} \approx 
\e^{-1}\cdot\mathrm{Sparsity}_{t}
\end{equation}
where $\mathrm{Sparsity}_{t}$ and $\mathrm{Lightness}_{t}$ are the sparsity and lightness for $t$-spanners of the input graph classes.%
\footnote{See \cite[{Theorem~1.10(2)}]{LS23} for a precise statement of their result. 
In the case when $t = 1+\e$ there is an additional $+\e^{-2}$ additive term.
For well-studied classes of graphs and metrics, $\mathrm{Sparsity}_t \geq 1/\eps^2$ when $t=1+\eps$, and thus the additive factor $+1/\eps^2$ is dominated by  $\e^{-1}\cdot\mathrm{Sparsity}_{t}ps$.
Interestingly, they gave an example of a class of graphs that has a $(1+\eps)$-spanner with $O(1)$ sparsity, but any $(1+\eps)$-spanner must have lightness $\Omega(1/\eps^2)$, implying that the $+1/\eps^2$ additive term is unavoidable. 
}

\paragraph{Tree covers.~} While a spanner is a sparsifier that guarantees compactness (via sparsity and lightness), a \EMPH{tree cover} \cite{AP92,AKP94} is a sparsifier that seeks both \emph{compactness} and \emph{structural simplicity}.   
Formally, a tree cover of a graph $G$ is a set of trees $\mathcal{T}$ where each tree $T$ in $\mathcal{T}$ contains all vertices of $G$, and for every pair of vertices $(u, v)$, $d_T(u, v) \geq d_G(u, v)$. 
A \EMPH{$t$-tree cover} guarantees that for every pair of vertices $(u, v)$, there exists a tree $T$ in $\mathcal{T}$ that preserves the distance between $u$ and $v$ up to a factor of $t$, namely, $d_T(u,v)\leq t\cdot d_G(u,v)$.

The number of trees in the cover measures the \emph{sparsity} of a tree cover; and the fact that the cover contains only trees---the simplest type of connected graphs---signifies the structural simplicity. 
The structural simplicity of tree covers makes them a powerful tool for a wide range of applications where spanners fall short~\cite{AP92, AKP94, ADM+95, GKR01, BFN22, CCL+23, CCL+24a, CCL+24b}. 
For example, in network routing, designing efficient routing schemes for a spanner could be nearly as complex as doing so for the original graph. In contrast, routing schemes for tree covers are significantly simpler, as they reduce to the problem of routing on trees, where optimal routing schemes are known. Distance oracle construction is another application of tree cover: one could get a compact distance oracle for the input graph or metric space by constructing a distance oracle for every tree in the cover, which reduces to constructing a lowest common ancestor (LCA) data structure. Indeed, the best routing schemes and distance oracles for several classes of graphs and metrics are obtained via tree covers~\cite{TZ01,ACEFN20,CCL+23, CCL+24a,CCL+24b}.

However, constructing a tree cover with a small number of trees is much more demanding than constructing a sparse spanner due to the structural constraints imposed by a tree cover. 
After decades of research, the basic question of determining the optimal sparsity of $t$-tree covers remains open for all well-studied classes of graphs and metrics. 
In the very basic setting of $O(1)$-dimensional Euclidean space $\mathbb{R}^d$, Arya, Das, Mount, Salowe, and Smid~\cite{ADDJS93} constructed a tree cover with $O(\eps^{-d}\log 1/\eps)$ sparsity (number of trees) three decades ago; this bound was improved just a year ago to $O(\eps^{-d+1}\log (1/\eps))$~\cite{CCL+24b}, which is still $\log(1/\eps)$ factor away from the lower bound $\Omega(\eps^{-d+1})$. 
For planar graphs, a sequence of works has steadily improved the sparsity of $(1+\eps)$-tree cover from $O(\sqrt{n})$~\cite{GKR01} (here $\eps = 0$) to $O((\log n/\eps)^2)$~\cite{BFN22}, and recently to $O(\eps^{-3}\log(1/\eps))$~\cite{CCL+24b}; however, determining the sparsity as a function of $\eps$ remains widely open. 
For minor-free graphs, the sparsity even depends \emph{exponentially} on $1/\eps$~\cite{CCL+24a}. 
Interestingly, the technique developed for constructing a sparse tree cover of planar and minor-free graphs was used to solve the long-standing Steiner Point Removal problem in planar and minor-free graphs \cite{CCL+24b}, underscoring both the inherent difficulty and the deep connections between constructing a sparse tree cover and other major problems. 
One emerging phenomenon is that the optimal sparsity of $(1+\eps)$-tree cover seems to coincide with the optimal sparsity of $(1+\eps)$-spanner; one can interpret this as getting \ul{structural simplicity for free}. 
This is supported by the results in Euclidean and doubling metrics.
Nevertheless, we are very far from formally establishing this phenomenon for other classes of graphs and metrics. 

Although significant progress has been made on constructing sparse tree covers, the research on \emph{light} tree cover has seen almost no non-trivial progress. 
Given a $t$-tree cover $\mathcal{T}$ of a graph $G$, we define its \EMPH{individual lightness} to be the ratio $\max_{T\in \mathcal{T}} w(T) / w(\mathrm{MST})$ and its \EMPH{collective lightness} to be the ratio $\sum_{T\in \mathcal{T}} w(T) / w(\mathrm{MST})$ where $w(T)$ is the total weight of the edges of the tree $T$. 
The individual lightness measures how light each tree is in the cover, while the collective lightness measures the total lightness of all the trees in the cover. 
Observe that if a tree cover has sparsity $s$ and individual lightness $\iota$, then the collective lightness is at most $s\cdot\iota$; therefore, our focus is on constructing a sparse tree cover that is individually light. 
The state of the art for constructing an (individually or collectively) light tree cover has been extremely scarce:
even in the most well-studied Euclidean plane, light tree cover is not known to exist, and there is no viable (even conjectural) technique for the construction. 
In contrast, there are several different techniques~\cite{ADDJS93,CDNS92,DHN93,LS23} to construct a light spanner in $\R^2$.

All previous techniques for constructing a sparse tree cover in low-dimensional Euclidean/doubling metrics can be loosely interpreted as constructing a constant number of different hierarchical trees, such as quadtrees or net trees, by shifting.%
\footnote{The earliest sparse tree cover by \cite{ADM+95} for $\mathbb{R}^d$ could also be interpreted this way.}
However, a hierarchical tree could have (individual) lightness $\Omega(\log n)$, even in the basic case of the uniform line metric (a set of evenly spaced points on the line $\mathbb{R}$), hence the best that existing techniques could provide is a tree cover with individual lightness $O(\log n)$ (which is indeed achievable~\cite{FGN24}). 
One fundamental question is: 
\begin{question}
\label{ques:light-cover} 
Can we construct a $(1 + \eps)$-tree cover for point sets in $\mathbb{R}^d$ of sparsity $O_{\eps, d}(1)$ and individual lightness $O_{\eps, d}(1)$, independent of the number of points? Here $O_{\eps, d}(1)$ hides a dependency on $d$ and $\eps$.
\end{question}
Similar to the sparsity case, a positive answer to \Cref{ques:light-cover} could also be interpreted as getting structural simplicity for free as the union of all trees in the tree cover gives a light spanner.

\subsection{Main contribution}

The main result of this paper is to answer \Cref{ques:light-cover} affirmatively. Furthermore, our result is stronger in two ways. First, it applies to the wider family of doubling metrics\footnote{A metric space has doubling dimension $\ddim$ if every ball of radius $r > 0$ in the metric can be covered by at most $2^{\ddim}$ balls of radius $\frac{r}{2}$.}, and as discussed below it applies even more broadly. Second, every tree has a small vertex degree, so every tree is as  ``compact'' as possible.  

\begin{theorem}\label{thm:main} Given a point set  $P$ in a metric of constant doubling dimension $d$ and any parameter $\eps \in (0,1)$, there exists a $(1+\eps)$-tree cover for $P$ with sparsity $\eps^{-\Tilde{O}(d)}$ and individual lightness $\eps^{-O(d)}$. Furthermore, every tree in the tree cover has maximum degree bounded by $\eps^{-O(d)}$.
\end{theorem}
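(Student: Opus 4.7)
The plan is to derive \Cref{thm:main} as a direct corollary of the paper's central technical result: a $(1+\eps)$-stretch spanning tree cover for doubling graphs. The high-level idea is to first embed $P$ into a light $(1+\eps/4)$-spanner $H$, then invoke the spanning tree cover construction on $H$ viewed as a doubling graph; every tree in the resulting cover uses only edges of $H$, so both the stretch and the weight of every tree can be charged against $H$, which itself has lightness only $\eps^{-O(\ddim)}$.

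Concretely, for the first step I would use a standard light-spanner construction for doubling metrics (e.g., Borradaile--Le--Wulff-Nilsen, or the framework of Le--Solomon) to build $H = (P, E_H)$ with stretch $1+\eps/4$ and $w(H) \leq \eps^{-O(\ddim)} \cdot w(\mst(P))$; it is standard that the shortest-path metric of $H$ again has doubling dimension $O(\ddim)$, so $H$ is a doubling graph in the sense required by the spanning tree cover theorem. The second step then produces a family $\cT$ of spanning trees of $H$ with $|\cT| = \eps^{-\tilde O(\ddim)}$, each of weight at most $\eps^{-O(\ddim)} \cdot w(\mst(H))$, such that for every pair $u,v \in P$ some $T \in \cT$ satisfies $d_T(u,v) \leq (1+\eps/4)\, d_H(u,v)$. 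Composing the stretches via $d_H(u,v) \leq (1+\eps/4)\,\dist(u,v)$ and noting $w(\mst(H)) \leq (1+\eps/4)\, w(\mst(P))$ yields a $(1+\eps/2)$-tree cover with the promised sparsity and individual-lightness bounds.

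For the bounded-degree guarantee, I would post-process each tree $T$ using the classical doubling-metric degree-reduction transformation (in the spirit of Chan--Gupta--Maggs--Zhou or Gottlieb--Roditty): any high-degree vertex in $T$ is replaced by a small near-binary ``gadget'' subtree connecting its neighbors through a carefully chosen net cover, inflating distances by at most a $(1+\eps/4)$ factor and lightness by a constant while capping maximum degree at $\eps^{-O(\ddim)}$. Because \Cref{thm:main} is stated in the metric setting (rather than relative to a fixed host graph, as in the spanning case), we are free to introduce Steiner-like intermediate points, so this reduction applies directly; three compositions of $(1+\eps/4)$-type stretches still fit within the target $(1+\eps)$.

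The main obstacle is of course not the corollary but the spanning tree cover theorem for doubling graphs itself. Classical tree cover constructions (Arya \etal, Bartal \etal) build a constant number of net-trees that use ``shortcut'' edges between arbitrary metric points, and naively replacing each such shortcut by its shortest $H$-path inflates individual lightness by $\Theta(\log n)$---exactly the $\log n$ barrier called out in the introduction. Producing genuine spanning trees of $H$ in which each tree is individually $\eps^{-O(\ddim)}$-light, while still covering all pairs with stretch $1+\eps$ using only $\eps^{-\tilde O(\ddim)}$ trees, is the heart of the paper and cannot be avoided by this reduction.
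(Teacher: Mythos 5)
Your core reduction matches the paper exactly: construct a light $(1+\eps)$-spanner $H$ for $P$, observe that its shortest-path metric has doubling dimension $O(d)$, invoke the spanning tree cover theorem on $H$, and then charge each tree's weight against $w(H)$ to get individual lightness $\eps^{-O(d)}$, with stretches composing to $(1+\eps)^{O(1)}$ and a final rescaling. This is the paper's argument verbatim for the sparsity and lightness claims.

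The one place you diverge is the bounded-degree guarantee, and here your route is both more complicated and less justified than necessary. The paper simply uses a spanner $H$ that has \emph{both} lightness $\eps^{-O(d)}$ \emph{and} maximum vertex degree $\eps^{-O(d)}$ — the constructions you yourself cite (Borradaile--Le--Wulff-Nilsen, Le--Solomon, Gottlieb) already deliver both properties simultaneously in doubling metrics. Since each tree in the spanning tree cover is a subgraph of $H$, every tree then trivially inherits the degree bound $\eps^{-O(d)}$, with no post-processing whatsoever. Your alternative — running a Steiner gadget degree reduction on each output tree — would in principle work in the metric setting, but the sketch is vague on exactly the points that matter: the Gottlieb--Roditty and Chan--Gupta--Maggs--Zhou transformations you invoke are designed for spanners (graphs), where rewiring edges through net points can create cycles, and adapting them to preserve tree-ness while controlling both stretch and lightness is not a one-line citation. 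You should either (a) drop the post-processing entirely and just start from a bounded-degree light spanner, which is what the paper does and is strictly simpler, or (b) if you insist on post-processing, give a self-contained gadget construction (e.g., replacing the star at a high-degree vertex by a balanced low-degree Steiner tree of zero- or near-zero-weight edges) and explicitly verify tree-ness, stretch, and lightness for that gadget. As written, the degree-reduction step is a gap, albeit an easily avoidable one.
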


The $\Tilde{O}(g)$ notation hides a polylog factor in $g$.  As discussed above, the construction of a light tree cover in \Cref{thm:main} seems to require a new and completely different technique from all those used in previous work.  This is the key technical contribution of our work---we construct a light tree cover from a tree cover \emph{where every tree in the cover is a spanning tree} of the input graph. 

\paragraph{Spanning tree cover.~} 
We say that a tree cover $\mathcal{T}$ is a \EMPH{spanning tree cover} of a graph $G$ if every tree in $\mathcal{T}$ is a spanning subgraph of $G$. 
(Clearly, the notion of spanning tree cover only applies to graphs.) 
The \EMPH{doubling dimension of a graph} $G$ is the doubling dimension of its shortest path metric. 
Our goal is to construct a spanning $(1+\eps)$-tree cover with small sparsity (small number of trees) for a graph $G$ with a constant doubling dimension $d$; this question was indeed asked explicitly by Abraham, Gavoille, Goldberg, and Malkhi~\cite{AGGM06} almost two decades ago in the context of network routing.

All existing techniques~\cite{CCL+24a,CCL+24b,CCL+23,BFN22} are effective for constructing $(1+\eps)$-tree cover for {\em metric spaces}. 
These techniques, when tailored to the graph setting, will produce trees that contain edges and vertices that do not belong to the input graph (called \EMPH{Steiner edges} and \EMPH{Steiner vertices}), which are forbidden in a spanning tree cover. 
A natural idea is to replace a Steiner edge with a corresponding shortest path between the two endpoints. 
The issue is that there could be multiple Steiner edges in a single tree, and doing multiple replacements may create cycles. 
The fact that replacing Steiner edges with shortest paths does not work is hardly surprising: this is the key issue whenever one tries to construct a spanning version of a non-spanning object. 
A prime example is designing metric embedding into non-spanning trees vs spanning trees: the well-known FRT embedding~\cite{FRT04} gives a stochastic embedding of arbitrary graphs and metric spaces into a non-spanning tree with (optimal) expected distortion $O(\log n)$, while if one insists on an embedding into a \emph{spanning} tree, the best known expected distortion is $O(\log n\log\log n)$~\cite{AN19} despite significant research efforts and innovative technical ideas aimed at closing the gap~\cite{EEST08,ABN08,KMP11, AN19}. 

Here, we construct a spanning tree cover for doubling graphs based on two ideas: (i) a family of hierarchical partitions with \emph{strong diameter} guarantee and (ii) a collection of shortest paths that serve as a distance sketch called a \emph{preservable set}. 
Idea (i) partly imitates the construction of non-spanning tree cover~\cite{BFN22}; the difference here is that our clusters in the hierarchies of partitions have strong diameter guarantee so that the tree cover we build has the option to take the edges of the input graph, even though naively doing so will blow up the stretch to $\Omega(\log n)$.  
Idea (ii) is the most innovative one: roughly speaking, when we recursively construct a subtree of a given cluster, the preservable set technique seeks to preserve \emph{two shortest paths} (one of which is outside the cluster) in the recursion. 
One path plays the role of the ``spine'' or ``highway'' in the subtree, and the other path preserves the distance between a dedicated pair of vertices whose distance scale is the same as the scale of the cluster.  The former path guarantees that the recursion does not blow up the diameter of a cluster when we replace the cluster with a recursively constructed spanning subtree, while the latter path guarantees the stretch $1+\eps$ for the dedicated pair. 
(See \Cref{subsec:overview} for a more in-depth technical overview.) 
A surprising fact, perhaps, is that preserving a single shortest path is not sufficient to get a stretch of $1+\eps$ while preserving three or more paths seems impossible because they could form cycles. 
That is, our technique hits a sweet spot where preserving two paths is doable and, at the same time, sufficient for our purpose. 
Our end result is the first spanning tree cover for doubling graphs with a small number of trees.

\begin{theorem}
\label{thm:spanning-main}
Let $G $ be an edge-weighted graph with a constant doubling dimension $d$. 
For any $\eps \in (0,1)$, we can construct a spanning $(1+\eps)$-tree cover for $G$ with $\eps^{-\tilde{O}(d)}$ trees.  
\end{theorem}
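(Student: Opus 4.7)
The plan is to assemble the tree cover from a family of $\eps^{-\tilde{O}(d)}$ \emph{shifted strong-diameter hierarchical partitions} of $G$, producing exactly one spanning tree per shift profile. At each scale $2^i$ the partition consists of clusters of radius $\Theta(2^i/\eps)$, and the strong-diameter property guarantees that every pair of vertices inside a cluster $C$ is connected in $G[C]$ by a path of length $O(\mathrm{diam}(C))$. By standard shifting arguments for doubling metrics, $\eps^{-O(d)}$ shifts per scale suffice so that every pair $(u,v)$ with $d_G(u,v)\in[2^i,2^{i+1})$ sits inside some cluster at scale $i$ at distance $\Omega(2^i)$ from its boundary, under at least one shift. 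Propagating per-scale shifts through all scales simultaneously via a single coloring scheme gives $\eps^{-\tilde{O}(d)}$ total shift profiles; this is where the tree count in the statement comes from.

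\paragraph{Recursive construction with a preservable set of size two.}

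Fix a shift profile and its partition hierarchy $\cP_0, \cP_1, \ldots$. I would build the associated spanning tree $T$ bottom-up: for each cluster $C$ at level $i$, the recursion produces a spanning tree $T_C$ of $C$ together with two marked shortest $G$-paths, the \emph{preservable set} $\Pi_C = \{\pi^{\mathrm{sp}}_C, \pi^{\mathrm{ded}}_C\}$. The \emph{spine} $\pi^{\mathrm{sp}}_C$ is dictated by the parent call, while the \emph{dedicated path} $\pi^{\mathrm{ded}}_C$ connects a designated pair $(x_C,y_C)$ at scale $2^i$. The dedicated pairs are chosen so that, across all clusters and all shift profiles, every vertex pair of $G$ is dedicated in at least one cluster, which immediately yields $d_T(u,v)=d_G(u,v)$ in the corresponding tree. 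To build $T_C$ from its children $C_1,\ldots,C_k$, I push spines downward: for each child $C_j$, set $\pi^{\mathrm{sp}}_{C_j}$ to the restriction of $\pi^{\mathrm{sp}}_C\cup\pi^{\mathrm{ded}}_C$ to $C_j$ (shortcut into a single sub-path using strong diameter), choose $(x_{C_j},y_{C_j})$ independently, recurse to obtain $T_{C_j}$, and stitch the subtrees together using at most $k-1$ additional $G$-edges between touching children. Doubling-metric packing bounds $k$ by $\eps^{-O(d)}$, which will matter for controlling the stretch contributed by each level.

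\paragraph{Stretch analysis and the main obstacle.}

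For a pair $(u,v)$ at scale $2^i$ that is the dedicated pair of a well-padded cluster $C$ in some shift profile, the dedicated-path invariant yields $d_{T_C}(u,v)=d_G(u,v)$, while the detour through the higher levels of $T$ contributes only $O(\eps\cdot 2^i)$ thanks to padding and strong diameter, giving stretch $1+\eps$. The hard part---and the reason the excerpt stresses that the preservable set has size \emph{exactly} two---is maintaining the invariant through the recursion: a single child $C_j$ may intersect \emph{both} parental paths $\pi^{\mathrm{sp}}_C$ and $\pi^{\mathrm{ded}}_C$ while also needing to preserve its own dedicated pair, which naively demands three paths and risks forming cycles. I would handle this by arranging the dedicated pairs $(x_C,y_C)$ coherently across the hierarchy so that whenever both parental paths pierce the same child $C_j$, their pieces inside $C_j$ can be merged into a single shortest sub-path via one extra shortcut edge---affordable because both paths are at scale $2^i$ while $C_j$ has diameter only $\Theta(2^{i-1}/\eps)$. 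Verifying that this merge always succeeds and that the composite stretch telescopes to $1+\eps$ is the crux of the proof, and is precisely what justifies the strong-diameter hypothesis and the $\eps^{-\tilde{O}(d)}$ tree budget.
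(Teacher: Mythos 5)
Your high-level narrative tracks the paper's overview (strong-diameter hierarchical partitions, a recursion that threads a "highway" and a "dedicated path" through each cluster, stitching subtrees with inter-cluster edges), but several of the mechanisms you propose to make this rigorous differ from the paper's construction in ways that would break the argument.

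\textbf{The preservable set is not a set of two paths.} In the paper's formalization (\Cref{def:preservable}), a preservable set $\cP$ for a cluster $\widehat G$ is a collection of \emph{many} vertex-disjoint paths in $\widehat G \cup \pi$ such that every child cluster is touched by \emph{exactly one} of them. The "two paths" language in the overview refers only to the two paths that carry distance guarantees (the inherited highway $\pi$ and the shortest path $P_{xy}$ for the dedicated subcluster pair); the remaining paths in $\cP$ exist purely to cover the other children and maintain acyclicity. Each child then inherits its \emph{one} touching path, not a merged restriction of two parental paths.

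\textbf{Your merge step introduces unaffordable error.} You propose that when both parental paths pierce the same child $C_j$, you "merge their pieces into a single shortest sub-path via one extra shortcut edge," and you justify this as affordable because "$C_j$ has diameter only $\Theta(2^{i-1}/\eps)$." But in your own parametrization the dedicated pair is at distance $\Theta(2^i)$, so a shortcut of length $\Theta(2^{i-1}/\eps)$ is a factor $\Theta(1/\eps)$ \emph{larger} than the pair distance, not smaller — this destroys the $1+\eps$ stretch rather than preserving it. The paper avoids this entirely by never merging: when $P_{xy}$'s clusters intersect $\pi$'s clusters, the construction \emph{deletes} the overlapping middle piece of $P_{xy}$ and reconnects through fake edges of weight $10\eps\mu^i$; the additive cost is bounded by $O(\eps\mu^i)$ (Claim following \Cref{obs:same_clust_dist}). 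Crucially, the recursion jumps $\ell=\lceil\log_\mu(1/\eps)\rceil$ levels at a time so that the child clusters have diameter $\eps\mu^i$, not $\mu^{i-1}$, making those fake edges genuinely negligible. Your factor-2-per-level refinement does not have this property.

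\textbf{You claim exact preservation; the paper does not achieve or need it.} You assert $d_{T_C}(u,v)=d_G(u,v)$ for the dedicated pair. The actual guarantee is $d_H(x,y)\le d_{\widehat G}(x,y)+44\eps\mu^i$ (\Cref{lm:1+e_distort_preserve_set}(\ref{it:preserve_pair})): the dedicated path may be broken by the highway, and the reconnection goes through $\pi$ and fake edges, incurring additive error. The argument then leverages the lower bound $d_G(u,v)>\mu^i/\rho$ from the pair-preserving property to convert this additive error into a $(1+O(\rho\eps))$ multiplicative factor.

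\textbf{The stitching and the diameter bound are the crux and are missing.} "Stitch the subtrees together using at most $k-1$ additional $G$-edges between touching children" is not enough: arbitrary edges between touching children can create cycles with the preserved paths, which is precisely why the paper defines the inter-cluster edges $I$ during the preservable-set construction (the gluing procedure in Step~2 adds one inter-cluster edge each time a new path is glued, keeping the sketch graph a tree). Relatedly, you assert "the detour through the higher levels of $T$ contributes only $O(\eps\cdot 2^i)$," but establishing that the spanning tree has diameter $O(\mu^i)$ — even though it was built by recursively replacing fake edges with subtrees — is the hardest step. The paper proves this via a two-layer induction (\Cref{clm:smll-diam}) showing $d_H(u,\pi)\le 4\mu^i + O(\log_\mu(1/\eps)\cdot\eps\mu^i)$ for every vertex $u$, which relies delicately on the geometric decay of cluster diameters across the $\ell$ sub-levels processed by the gluing procedure. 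Without an argument of this kind, the diameter could grow by a constant factor per level and blow up to $\Omega(\log n\cdot\mu^i)$, making the $O(\eps\cdot 2^i)$ detour claim unsupported.
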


We now showcase three applications of \Cref{thm:spanning-main}. First, as a direct corollary, we give the first construction of a light tree cover in Euclidean and doubling metrics, thereby proving \Cref{thm:main}, as mentioned earlier in the introduction. 
Second, we resolve a longstanding open problem: designing a $(1+\eps)$-stretch routing scheme in doubling graphs that achieves the asymptotically \emph{optimal} (up to the dependencies on $\eps$ and $d$) bound of $O(\log n)$ bits
on all the involved measures (label, header, and routing table sizes). 
Third, we show how to obtain a path-reporting distance oracle in doubling graphs with the optimal space and query time.
We believe that our spanning tree cover construction from \Cref{thm:spanning-main} will lead to further applications beyond those mentioned in this paper.

\paragraph{Light tree cover.~}
We now prove \Cref{thm:main} using \Cref{thm:spanning-main}.
First, given a set of points $P$ in a metric of doubling dimension $d$, we construct a $(1+\eps)$-spanner $H$ for $P$ with lightness and maximum vertex degree both bounded by $\eps^{-O(d)}$~\cite{Got15,BLW19,LS23}. It is well-known that $H$ has doubling dimension $O(d)$~\cite[Proposition~3]{Talwar04}). 
By \Cref{thm:spanning-main}, we can construct a spanning  $(1+\eps)$-tree cover $\mathcal{T}$ for $H$ that has $\eps^{-\tilde{O}(d)}$ trees. 
Observe that for every $T$ in $\mathcal{T}$, the weight of $T$ is at most the weight of $H$, which is $\eps^{-O(d)} \cdot w(\mathrm{MST})$, implying that the individual lightness of $\mathcal{T}$ is $\eps^{-O(d)}$. 
The stretch of $T$ with respect to the pairwise distances in $P$ is $(1+\eps)^2 \leq 1 + 3\eps$, which can be reduced to $1+\eps$ by scaling.

Our \Cref{thm:main} gives a light tree cover for point sets in the Euclidean plane as a special case. 
The typical path taken is that one solves a problem in Euclidean spaces (using Euclidean geometry) and then generalizes the technique to doubling metrics by using the packing bound in place of geometric primitives.
From this point of view, our path to the solution is unusual:  we use techniques developed specifically for doubling graphs to solve a problem for point sets in Euclidean space. More precisely, we construct a Euclidean $(1+\eps)$-spanner, use the fact that the spanner has doubling dimension $O(d)$, and apply our spanning tree cover theorem for doubling graphs. Obtaining a direct proof using Euclidean geometry without going through the route of doubling graphs remains challenging.  Our technique also gives as a corollary the spanning version of \Cref{thm:main}: a light spanning tree cover for doubling graphs. We note that, unlike the metric case, doubling graphs might not have a spanning tree cover of bounded degree: a counterexample is the star graph with exponentially increasing edge weights.  

\begin{corollary}\label{cor:light-graph} Let $G = (V,E,w)$ be an edge-weighted graphs with constant doubling dimension $d$. For any parameter $\eps \in (0,1)$, we can construct a $(1+\eps)$-tree cover for $G$ with sparsity $\eps^{-\Tilde{O}(d)}$ and individual lightness $\eps^{-O(d)}$. 
\end{corollary}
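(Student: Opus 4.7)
The plan is to mimic the derivation of \Cref{thm:main} from \Cref{thm:spanning-main} sketched in the ``Light tree cover'' paragraph above, carried out so that every tree in the final cover remains a subgraph of $G$. First, I would construct a spanning $(1+\eps)$-spanner $H$ of $G$ of lightness $\eps^{-O(d)}$, using any of the known light-spanner constructions for doubling graphs (Gottlieb, Borradaile--Le--Wulff-Nilsen, or Le--Solomon, as cited in the proof of \Cref{thm:main}). A standard argument (the one invoked in that proof via \cite[Proposition~3]{Talwar04}) shows that since the shortest-path metric of $H$ is within a $(1+\eps)$ factor of that of $G$, the graph $H$ itself has doubling dimension $O(d)$.

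Next, I would apply \Cref{thm:spanning-main} to $H$, obtaining a spanning $(1+\eps)$-tree cover $\cT$ of $H$ with $\eps^{-\tilde{O}(d)}$ trees. Each $T \in \cT$ is a spanning subgraph of $H$, and hence a spanning subgraph of $G$, so $\cT$ is indeed a (spanning) tree cover of $G$. The stretch composes: for any $u,v \in V(G)$, the tree $T \in \cT$ witnessing the $(1+\eps)$-stretch of the pair $(u,v)$ in $H$ also satisfies $d_T(u,v) \leq (1+\eps)\,d_H(u,v) \leq (1+\eps)^2\,d_G(u,v)$, which rescales to a $(1+\eps)$-stretch by a constant-factor change of $\eps$. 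For lightness, since each $T$ is a subgraph of $H$, we have $w(T) \leq w(H) \leq \eps^{-O(d)}\cdot w(\mst(G))$, giving the claimed $\eps^{-O(d)}$ individual lightness; the sparsity bound $\eps^{-\tilde{O}(d)}$ is inherited directly from \Cref{thm:spanning-main}.

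There is essentially no new obstacle inside the corollary itself; all the heavy lifting was already done in \Cref{thm:spanning-main}, and the corollary is a bookkeeping exercise confirming that the bounds compose in the correct direction. The one subtle point to get right is that the intermediate object $H$ must be both a \emph{spanning} subgraph of $G$ (so that the trees in $\cT$ end up as subgraphs of $G$, not merely of the metric closure of $G$) and of doubling dimension $O(d)$ (so that the sparsity bound from \Cref{thm:spanning-main} applies with the same asymptotic dependence on $d$). Both properties are standard guarantees of the cited light-spanner constructions for doubling graphs, so no further work is required beyond selecting a construction that delivers a subgraph spanner rather than a Steiner spanner.
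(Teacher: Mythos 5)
Your approach is essentially the same as the paper's, and the composition argument (stretch $(1+\eps)^2$, lightness $w(T)\leq w(H)$, sparsity inherited from Theorem~\ref{thm:spanning-main}) is correct. There is, however, one imprecision worth flagging. You write that being a spanning subgraph of $G$ is ``a standard guarantee of the cited light-spanner constructions for doubling graphs,'' but the constructions you cite (Gottlieb, Borradaile--Le--Wulff-Nilsen, Le--Solomon) are light-spanner constructions for doubling \emph{metrics}: applied to the shortest-path metric of $G$, they produce a graph $K$ whose edges need not be edges of $G$. The paper handles this explicitly by replacing each edge $(u,v)$ of $K$ with a shortest $u$--$v$ path in $G$; the union $H$ of these paths is then a subgraph of $G$, is still a $(1+\eps)$-spanner, and has $w(H) \leq w(K) \leq \eps^{-O(d)} w(\mst)$ since edges are counted once even if several replacement paths reuse them. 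This replacement step is the small piece of ``further work'' your proposal skips, even as you correctly identify that a subgraph spanner is what's needed. It is routine, so the gap is minor, but it should be stated rather than attributed to the cited constructions as a built-in guarantee.
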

\begin{proof}
    We construct a $(1+\eps)$-spanner $H$ for $G$ with lightness $\eps^{-O(d)}$ by constructing a light spanner $K$ for the shortest path metric of $G$ and then replace every edge of $K$ with the shortest path between its endpoints in $G$.  Finally, we construct a spanning tree cover for $H$.
\end{proof}

\paragraph{Compact routing.~} 
Consider a distributed network of vertices in which each vertex has an arbitrary \emph{label}.
A \EMPH{routing scheme} is a distributed algorithm for routing messages 
from any source to any destination vertex in such a network, given the destination's label.
The most basic trade-off  is between the {\em (local) memory} at a vertex, used primarily to store {\em its routing table}, and the {\em stretch} of the routing scheme, which is the maximum ratio over all pairs between the length of the routing path induced by the scheme and the distance between that pair.
Other important quality measures are the sizes of {\em labels} and {\em (packet) headers}---which are typically dominated by the local memory usage due to the routing tables, and the {\em routing decision time}; refer to \Cref{S:routing} for more details on these quality measures.  

In the {\em name-independent} model, the designer of the routing scheme has no control over the labels of the vertices, which may be chosen adversarially.
In the {\em name-dependent model} (or in {\em labeled routing schemes}), the designer of the routing scheme can preprocess the network to assign each vertex a unique label.  
The edges incident to each vertex are given \EMPH{port numbers}; in the \EMPH{designer-port} model, these port numbers can be chosen by the algorithm designer in preprocessing, whereas in the \EMPH{fixed-port} model, these port numbers are chosen adversarially.

A routing scheme is called \EMPH{compact} if the local memory usage is low, which typically means polylogarithmic in the network size.
There is an extensive and influential body of work on compact routing schemes in general graphs (see, e.g., \cite{van1987interval,frederickson1988designing,eilam1998compact,cowen2001compact,TZ01,AGMNT04,GS11,chechik2013compact}), as well as in restricted graph classes, such as trees,  power-law graphs, graphs excluding a fixed minor and unit disk graphs, to name a few (see, e.g.,  \cite{TZ01,FG01,fraigniaud2002space,brady2006compact,chen2009compact,chen2012compact,kuhn2003ad,yan2012compact,kaplan2018routing,bodlaender1997interval,abraham2005compact}); in particular, compact routing schemes have been well studied also in Euclidean and doubling metrics, as well as in doubling graphs---which is the focus of this work \cite{Talwar04,Slivkins05,CGMZ16,AGGM06,KRX06,KRXY07,KRX07,GR08Soda,KRX08,ACGP10,KRXZ11,CCL+24b}. 

In the {name-independent} model, there exist networks of constant doubling dimension for which any routing scheme with a local memory usage of  $o(n^{(\eps/60)^2})$ 
must incur
 a stretch of at least $9-\eps$, for any $\eps > 0$
\cite{KRX06}, and there are $(9+\eps)$-stretch routing schemes with a local memory usage of 
 $O_{\eps,d}(\log^3 n)$ bits
\cite{AGGM06,KRX06,KRX07}.
We are interested in the regime of lower stretch, and in particular stretch $1+\eps$, and  henceforth restrict the attention to the name-dependent model.

In the name-dependent model, there are $(1+\eps)$-stretch routing schemes with 
a local memory usage (in bits) of either  $O_{\eps,d}(\log \Phi \cdot \log n)$%
\footnote{We denote by $\Phi$ the \EMPH{aspect ratio} of the graph, that is, the ratio  between the maximum and minimum distances.}~\cite{CGMZ16,Slivkins05,AGGM06}
or  
$O_{\eps,d}(\log^3 n)$ \cite{AGGM06,KRX07};
some of these works achieve the same bounds for the local memory usage, but increasingly better bounds for the sizes of headers and labels or for the dependencies on $\eps$ and $d$; for the sake of brevity, we focus on the bottom-line bound of local memory usage, ignoring dependencies on $\eps$ and $d$.
We note that for doubling (and Euclidean) metrics, there are $(1+\eps)$-stretch routing schemes with the asymptotically optimal (ignoring dependencies on $\eps$ and $d$) local memory usage of $O_{\eps,d}(\log n)$  \cite{GR08Soda,CCL+24b} bits.

A longstanding question is whether one can narrow the gap for $(1+\eps)$-stretch routing schemes in the name-dependent model---between doubling metrics and doubling graphs. In this work we settle this question completely using our spanning tree cover in \Cref{thm:spanning-main}. Not only does our $(1+\eps)$-stretch routing scheme
achieve the asymptotically optimal (ignoring dependencies on $\eps$ and $d$) local memory usage of
$O_{\eps,d}(\log n)$ bits, it does so in the fixed-port model, and it also achieves a routing decision time of $O_{\eps,d}(1)$. Our routing scheme, summarized in the statement below, is presented in \Cref{{S:routing}}.

\begin{restatable}{theorem}{RoutThm}
\label{thm:routing}
    Let $G$ be a graph with doubling dimension $\ddim$. For any $\e \in (0,1)$, there is a $(1+\e)$-stretch labeled routing scheme in the fixed-port model where the sizes of labels, headers, and routing tables are $\e^{-O(\ddim)}\cdot \log n$ bits. The routing decision time is $\e^{-O(\ddim)}$.
\end{restatable}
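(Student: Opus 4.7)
The plan is to reduce routing in $G$ to routing on each of the spanning trees given by \Cref{thm:spanning-main}. Apply \Cref{thm:spanning-main} to obtain a spanning $(1+\e)$-tree cover $\mathcal{T}=\{T_1,\dots,T_k\}$ with $k=\e^{-\tilde{O}(\ddim)}$. The critical advantage over non-spanning tree covers is that each $T_i\subseteq G$, so a routing path along $T_i$ is a legal routing path in $G$ and the port numbers of $T_i$-edges at each vertex are exactly the fixed $G$-ports; no port renaming is needed and the fixed-port model is respected for free.

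The first main step is to install on each $T_i$ a compact labeled routing scheme for trees in the fixed-port model, such as that of Thorup and Zwick~\cite{TZ01} or Fraigniaud and Gavoille~\cite{FG01}, which achieves label, header, and routing-table sizes of $O(\log n)$ bits with $O(1)$ routing decision time. Let $\ell_i(v)$ and $\tau_i(v)$ denote the per-tree label and routing table of $v$ in $T_i$. The global label of $v$ is the concatenation $L(v)=(\ell_1(v),\dots,\ell_k(v))$ augmented with a small tree-selection sketch (see below), and the global routing table of $v$ is $(\tau_1(v),\dots,\tau_k(v))$.  Once an index $i^\star$ is chosen by the source and written into the header, every intermediate vertex $u$ simply reads $i^\star$ and the $T_{i^\star}$-label of the destination from the header and consults $\tau_{i^\star}(u)$ to pick the next port in $O(1)$ time.

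The second, more delicate step is the tree-selector: from $s$'s local memory and $t$'s label alone, $s$ must output in $\e^{-O(\ddim)}$ time an index $i^\star$ with $d_{T_{i^\star}}(s,t)\le (1+\e)\,d_G(s,t)$. I plan to piggy-back on the hierarchical strong-diameter partitions underlying \Cref{thm:spanning-main}: each tree $T_i$ is associated with a scale, and at each scale the trees are indexed by a small ``color'' such that every pair $(s,t)$ is $(1+\e)$-stretched by the tree of the correct color at the scale $\lceil\log(d_G(s,t))\rceil$. Encoding in $L(v)$, for each of the $O(\log(1/\e))$ relevant scales around $v$, the cluster ID and color of the cluster containing $v$ adds only $\e^{-\tilde{O}(\ddim)}\log n$ bits; $s$ then walks from the coarsest scale downward comparing its own cluster IDs with those in $L(t)$, stops at the first scale where they agree, and reads off the color to determine $i^\star$.

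With this machinery, the stretch is $(1+\e)$ by the guarantee of \Cref{thm:spanning-main}. Summing over the $k=\e^{-\tilde{O}(\ddim)}$ trees, the label, header, and routing-table sizes are each $\e^{-\tilde{O}(\ddim)}\log n=\e^{-O(\ddim)}\log n$ bits (absorbing the $\mathrm{polylog}(1/\e)$ factor into the exponent for constant $\ddim$ and scaling $\e$), and each forwarding decision takes $O(1)$ tree-routing time after a single $\e^{-O(\ddim)}$-time selection at the source. The main obstacle I anticipate is the tree-selector: rather than treating \Cref{thm:spanning-main} as a black box, the proof must expose the scale/colour structure of the partition hierarchy on which the tree cover is built so that a scale-and-color lookup from the two labels is both sufficient to pick a correct tree and compact enough to fit inside the $\e^{-O(\ddim)}\log n$-bit budget.
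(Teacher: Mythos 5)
Your reduction to tree routing via the spanning tree cover is the right high-level idea, and your tree-selector sketch in the second step is close in spirit to the paper's \Cref{lem:find-correct-tree}. However, the first step of your plan contains a genuine gap. You claim that Thorup--Zwick~\cite{TZ01} or Fraigniaud--Gavoille~\cite{FG01} give an $O(\log n)$-bit labeled routing scheme for trees \emph{in the fixed-port model}. That is false: those $O(\log n)$-bit bounds hold only in the designer-port model. In the fixed-port model there is a $\Theta(\log^2 n/\log\log n)$ lower bound on label/table size for exact tree routing (due to Fraigniaud and Gavoille), and no better bound is known even for stretch $1+\e$. So you cannot simply "install" a known compact fixed-port tree router on each $T_i$; your construction would inherit the $\Omega(\log^2 n/\log\log n)$ cost, missing the claimed $\e^{-O(\ddim)}\log n$ bound.

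The paper circumvents this precisely because it does \emph{not} build the spanning tree cover directly on $G$. Instead it first constructs a $(1+\e)$-approximate \emph{greedy spanner} $G'$ of $G$ and builds the spanning tree cover on $G'$. The greedy spanner has a structural property (\Cref{clm:almost-degree}) that every vertex is incident to at most $\alpha=\e^{-O(\ddim)}$ tree edges of weight in any dyadic range $[\ell,2\ell]$. This "almost bounded-degree" property is then exploited by a bespoke variant of interval routing (\Cref{lem:routing-spanning-tree}): the routing table at each vertex stores only $O(\beta)$ of its children/siblings in DFS order, for $\beta=O(\alpha\log(1/\e))$, and the algorithm allows itself to backtrack through a bounded number of cheap subtrees before descending; the geometric growth of edge weights guaranteed by the greedy-spanner property bounds the total backtracking overhead by an $\e$-fraction, giving stretch $1+\e$ and $\e^{-O(\ddim)}\log n$-bit tables in the fixed-port model. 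Without the greedy-spanner preprocessing and this property-specific routing algorithm, your step 1 does not go through.
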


Another important advantage of our routing scheme over previous ones is that, loosely speaking, 
we carry out the routing protocol on top of a single tree in our tree cover. 
Now, the problem of labeled routing on trees has been well studied and there are efficient labeled routing schemes on trees (see in particular \cite{TZ01,FG01}); moreover, even from a practical perspective, it appears to be inherently more effective to route on trees rather than on basically any other graph class. 
Equipped with our spanning tree cover in \Cref{thm:spanning-main}, our routing algorithm consists of two stages. First, we need to determine a ``correct'' tree to route on for the input source and destination vertices,
i.e., one that approximately preserves their distance.
We achieve this using an appropriate labeling scheme, described in \Cref{ssec:find-correct-tree}.
In the second stage we route on top of that tree.
Alas, in general trees, fixed-port routing requires a local memory usage of $\Theta(\log^2 n/ \log \log n)$ bits: this is tight if one demands exact routing (stretch $1$) \cite{fraigniaud2002space}, and nothing better is known even if one allows a stretch of $1+\e$.
We demonstrate that we can bypass this lower bound for the {\em trees that we care about} (which come from our spanning tree cover). To this end, we first need to preprocess the input graph $G$ by constructing a greedy $(1+\eps)$-spanner $G'$ for $G$ \cite{ADDJS93}, and then construct a spanning tree cover for $G'$. We show that the resulting trees have ``convenient''  properties that make them amenable to very compact routing schemes. This is the focus of \Cref{ssec:routing-spanning-tree}.

\paragraph{Path-reporting distance oracle.~} 
An \EMPH{approximate distance oracle with stretch $t$} for an edge-weighted graph $G$ is a compact data structure for answering distance queries: given any two vertices $u$ and $v$ in $V_G$, return a distance estimate $\Tilde{d}(u, v)$ such that $d_G(u, v) \leq \Tilde{d}(u, v) \leq t\cdot d_G(u, v)$. A distance oracle is \EMPH{path-reporting} if, in addition to the distance estimate $\Tilde{d}(u, v)$, the oracle returns a path $\tilde{P}(u,v)$ in $G$ of length $\Tilde{d}(u, v)$. 
The \EMPH{query time} of a path-reporting distance is $q$ if for every pair $(u, v)$, it outputs $\tilde{P}(u,v)$ in time $q + O(|\tilde{P}(u,v)|)$, where $|\tilde{P}(u,v)|$ is the number of edges in $\tilde{P}(u,v)$. 
(It is often the case that reporting $\tilde{P}(u,v)$ takes more time than computing the distance estimate $\Tilde{d}(u, v)$, and hence the query time definition focuses only on the time it takes to find $\tilde{P}(u,v)$.)

A distance oracle construction often exhibits a trade-off between the space, stretch, and query time. The \EMPH{space} of a distance oracle is the number of machine words needed to store the oracle. In general graphs, the seminal result by Thorup and Zwick~\cite{TZ05} gives the first path-reporting distance oracle (PRDO) with space $O(k \cdot n^{1 + 1/k})$, stretch $2k-1$, and query time $O(k)$ for any given positive integer parameter $k$. The query time then was improved to $O(\log k)$ by Wulff-Nilsen~\cite{WN13}, but the space remained $O(k \cdot n^{1 + 1/k})$. 
One important question from a practical point of view is to construct a PRDO with \emph{linear space} and \emph{$O(\log n)$ stretch and query time}. 
Regardless of the choice of stretch parameter $k$, the oracle by  Thorup and Zwick~\cite{TZ05} (and Wulff-Nilsen~\cite{WN13}) has space $\Omega(kn^{1+1/k}) = \Omega(n\log n)$. Various attempts have been made to answer this question~\cite{WN13, EP16, ENW16, ES23, NS24, CZ24}, leading to the recent result by Chechik and Zhang~\cite{CZ24} who constructed an oracle with $O(n)$ space, $O(\log n)$ stretch and $O(\log \log \log n)$ query time. 

In the linear space regime, it is not possible to obtain stretch below $O(\log n)$ for general graphs, even for non-path-reporting oracle~\cite{TZ05}. On the other hand, a practical distance oracle should ideally have stretch  $1+\eps$ for any given fixed $\eps\in (0,1)$. For this, one has to assume additional structures on the input graph. One popular assumption is that the input graph has a bounded doubling dimension~\cite{AGGM06,FM20,KSGKSS24}. 
In this setting, Filtser \cite{filtser2019strong} constructed a PRDO with stretch $O(d)$, space $O(n \cdot d \cdot \log\log \Phi)$ and query time $O(\log \log \Phi)$ where $\Phi$ is the aspect ratio.  
While the space of Filtser's oracle is super-linear, the distortion is better than $O(\log n)$ for constant doubling dimension. On the other hand, a \emph{non-path-reporting} distance oracle with \emph{linear space} and $1+\eps$ stretch for a constant $\eps\in (0,1)$ and constant dimension $d$ is known for a long time~\cite{HM06,BKG11} (it should be noted that Filtser's construction has a  polynomial dependence on the dimension, which is better than the exponential dependence in ~\cite{HM06,BKG11}; but as mentioned, such dependencies are not the main focus of this paper). Can one design a path-reporting distance oracle in doubling graphs matching the bounds known for non-path-reporting ones? In this work, we resolve this question completely using our spanning tree cover in \Cref{thm:spanning-main}: our oracle achieves the optimal stretch, space, and query time.

\begin{theorem}
\label{thm:oracle} 
Let $G$ be an edge-weighted graph with a constant doubling dimension $d$. 
For any given parameter $\eps \in (0,1)$, we can construct a path-reporting distance oracle for $G$ with $\eps^{-\tilde{O}(d)} n$ space, $1+\eps$ stretch, and $\eps^{-\tilde{O}(d)}$ query time.   For constant $d$ and $\eps$, our oracle has $O(n)$ space, $1+\eps$ stretch, and $O(1)$ query time.
\end{theorem}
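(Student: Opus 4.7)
}

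The plan is to reduce the construction of a path-reporting distance oracle for $G$ to path-reporting distance oracles on a constant number of spanning trees. First I would invoke \Cref{thm:spanning-main} to obtain a collection $\mathcal{T}=\{T_1,\ldots,T_k\}$ of spanning trees of $G$ with $k=\eps^{-\tilde O(d)}$ such that for every pair $(u,v)$ there exists some $T_i\in\mathcal{T}$ with $d_{T_i}(u,v)\leq (1+\eps)\cdot d_G(u,v)$, while always $d_{T_i}(u,v)\geq d_G(u,v)$. Crucially, every $T_i$ is a subgraph of $G$, so any path inside a $T_i$ is already a path in $G$.

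Next I would build, for each $T_i$, a standard tree path-reporting structure. Concretely, I root $T_i$ at an arbitrary vertex $r_i$, store the cumulative weight $W_i(v):=d_{T_i}(r_i,v)$ at every vertex $v$, maintain parent pointers with edge weights, and preprocess $T_i$ for constant-time lowest-common-ancestor queries using the Harel--Tarjan / Bender--Farach-Colton data structure in $O(n)$ time and space. With these ingredients, $d_{T_i}(u,v)=W_i(u)+W_i(v)-2W_i(\lca_i(u,v))$ is computable in $O(1)$ time, and the explicit path $P_{T_i}(u,v)$ can be emitted in $O(|P_{T_i}(u,v)|)$ time by walking parent pointers from $u$ up to $\lca_i(u,v)$ and then from $v$ up to $\lca_i(u,v)$ and reversing. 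Over all $k$ trees this uses $k\cdot O(n)=\eps^{-\tilde O(d)}\cdot n$ space.

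The query procedure on input $(u,v)$ is: for each $i\in\{1,\ldots,k\}$ compute $d_{T_i}(u,v)$ in $O(1)$ time, let $i^\star:=\arg\min_i d_{T_i}(u,v)$, output $\tilde d(u,v):=d_{T_{i^\star}}(u,v)$ as the distance estimate, and report the path $P_{T_{i^\star}}(u,v)$ in $O(|P_{T_{i^\star}}(u,v)|)$ time by the tree-walk above. Correctness of the stretch is immediate: the tree $T_j$ guaranteed by the tree-cover property witnesses $\tilde d(u,v)\leq d_{T_j}(u,v)\leq (1+\eps)\cdot d_G(u,v)$, while $\tilde d(u,v)=d_{T_{i^\star}}(u,v)\geq d_G(u,v)$ since $T_{i^\star}$ is a spanning subgraph of $G$. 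The reported path lives in $T_{i^\star}\subseteq G$ and has the correct length. The query time excluding the path output is $O(k)=\eps^{-\tilde O(d)}$.

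I do not expect significant obstacles in this reduction: essentially all of the technical work has already been done in \Cref{thm:spanning-main}, and what remains is the standard toolkit for weighted trees. The only subtle point worth double-checking is that the path length charged to the output is measured in the number of edges of $P_{T_{i^\star}}(u,v)$ rather than its weight, but this is exactly how $|\tilde P(u,v)|$ is defined in the oracle model, and walking parent pointers pays $O(1)$ per reported edge. For constant $d$ and $\eps$ we get $k=O(1)$, and thus the oracle achieves $O(n)$ space, $(1+\eps)$ stretch, and $O(1)$ query time before emitting the path, matching the bounds promised in the statement.
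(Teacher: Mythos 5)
Your proposal is correct and takes essentially the same approach as the paper: query an exact tree distance/path oracle on each of the $\eps^{-\tilde O(d)}$ spanning trees from \Cref{thm:spanning-main}, take the minimum, and report the path in the winning tree. The only cosmetic difference is that the paper invokes an existing exact tree path-reporting oracle as a black box, whereas you spell out the standard LCA-plus-cumulative-weights implementation; the structure and bounds are identical.
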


\begin{proof}
   Let $\mathcal{T}$ be a spanning $(1+\eps)$-tree cover for $G$ in \Cref{thm:spanning-main}. Then for each tree $T$ in $\mathcal{T}$, we construct an \EMPH{exact} path-reporting distance oracle $\mathcal{D}_T$ that has $O(n)$ space, and $O(1)$ query time (e.g.~\cite{LW21}). 
   The distance returned by  $\mathcal{D}_T$ is exactly $d_T(u,v)$. 
   The distance oracle for $G$ contains all oracles $\{\mathcal{D}_T\}_{T\in \mathcal{T}}$.  To query the distance between $u$ and $v$, we query $d_T(u,v)$ in $O(1)$ time using $\mathcal{D}_T$ for every $T\in \mathcal{T}$ and return $\min_{T\in \mathcal{T}}d_T(u,v)$. To find a $(1+\eps)$-approximate path, we query the path from $u$ to $v$ in $T^*$  by calling $\mathcal{D}_{T^*}$ where $T^* \coloneqq \arg\min_{T\in \mathcal{T}}d_T(u,v)$.  Observe that the query time is $O(|\mathcal{T}|) = \eps^{-\tilde{O}(d)}$ and the total space is $O(n \cdot |\mathcal{T}|) = \eps^{-\tilde{O}(d)}n$. The stretch bound follows from the $(1+\eps)$ stretch of $\mathcal{T}$.
\end{proof}

\subsection{Organization}

In \Cref{subsec:overview}, we give a high-level overview of the spanning tree cover construction (proof of \Cref{thm:spanning-main}). The formal construction of the spanning tree cover and all the needed technical components will be presented in \Cref{sec:spanning-alg}, \Cref{S:preservable-construct}, and \Cref{sec:HPF-construct}. Finally, in \Cref{S:routing}, we construct our routing scheme (proof of  \Cref{thm:routing}).

\section{Spanning Tree Cover: An Overview}\label{subsec:overview}

As we briefly discussed above, we construct our spanning tree cover from (1) a \emph{strong-diameter hierarchical partition family}, and (2) a \emph{preservable set}. While most of our innovative ideas are for defining and constructing a preservable set, the strong-diameter hierarchical partition lays an important groundwork for these ideas, and therefore, we will discuss it first. The definition of a family of hierarchical partitions is somewhat technical, but once the right definition is set up, one can see its connection to tree covers. 

\begin{figure}[!htb]
    \centering
    \includegraphics[width=1.0\textwidth]{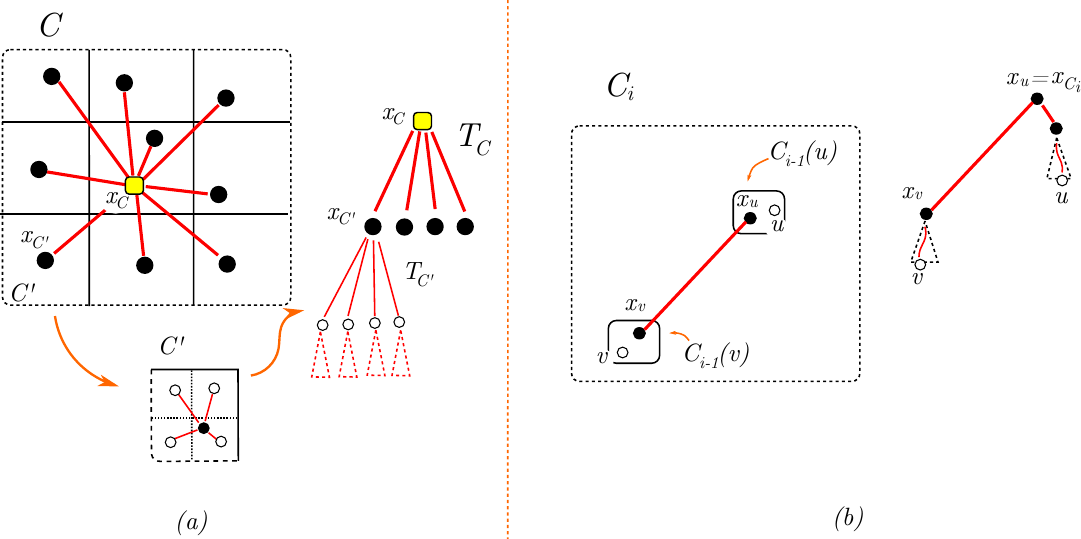}
    \caption{(a) A tree obtained from a hierarchy $\mathbb{P}^i \in \mathfrak{P}$ by connecting the representative $x_C$ of every cluster $C$ to those of its children. 
    In this example, $x_C$ has 9 children, but we only show 4 for a clearer picture. 
    (b) A $(1+\eps)$-stretch path from $u$ to $v$; clusters $C_{i-1}(u)$ and $C_{i-1}(v)$ have diameter roughly $\eps \cdot d_G(u,v)$.}
    \label{fig:hiearchy-to-tree}%
\end{figure}

\paragraph{Strong-diameter hierarchical partition.} A \EMPH{$\mu$-hierarchical partition}\footnote{Our definitions of the partitioning tools in this section (hierarchical partition and HPF) follow \cite{BFN22}, though similar ideas appeared earlier \cite{KLMN05}. The HPF is essentially a hierarchical version of sparse cover \cite{AP90}.} of $G$ is a sequence of partitions $\mathbb{P} = (P_0, P_1, \ldots, P_{i_\mathrm{max}})$ such that (1) every cluster in $P_0$ consists of a single vertex, and $P_{i_\mathrm{max}}$ consists of one cluster containing all vertices;
(2) for every $i$, every cluster $C$ in $P_i$ has (weak) diameter at most $\mu^i$ (that is, $d_G(x,y)\leq \mu^i$ for every $x,y\in C$; the distance is in graph $G$, not $G[C]$); (3) for every $i$, every cluster in $P_i$ is the \emph{union} of some clusters in $P_{i-1}$. One can view $\mathbb{P}$ as a tree where children of a cluster $C$ in the tree are the clusters at one level below whose union is $C$. 
A \EMPH{$(\rho, \mu)$-hierarchical partition family} (for short, a \EMPH{$(\rho, \mu)$-HPF}) is a set of  $\mu$-hierarchical partitions $\mathfrak{P} = \set{\mathbb{P}^1, \mathbb{P}^2, \ldots, \mathbb{P}^\sigma}$, such that for any vertex $v$ in $G$ and any scale $i \in \N$, there exists a hierarchical partition $\mathbb{P}^j \in \mathfrak{P}$ and a partition \smash{$P^j_i \in \mathbb{P}^j$} such that the ball $B(v, \mu^i/\rho)$\footnote{We use the notation $B(v,r)$ to denote the radius-$r$ ball centered at $v$.} is in some cluster $C$ in $P^j_i$.
The parameters $\sigma$, $\rho$, and $\mu$ are called the \EMPH{size}, the \EMPH{padding}, and the \EMPH{growth factor} of the HPF, respectively.

If one wants to construct a \emph{non-spanning} tree cover for doubling graphs (and metrics), then a $(\rho, \mu)$-hierarchical partition family  $\mathfrak{P}$ gives such a construction: for every hierarchy $\mathbb{P}^i \in \mathfrak{P}$, construct a tree $T_i$ by (a) choosing an appropriate vertex $x_C$ in each cluster $C$ as a representative of each cluster, and (b) connecting $x_C$ to the representative $x_{C'}$ by an edge if cluster $C'$ is the child of $C$ in $\mathbb{P}^i$; see \Cref{fig:hiearchy-to-tree}(a). 
Steiner edges of $T_i$ are those in step (b) since the edge $(x_C, x_{C'})$ may not be in $G$. The number of trees is $\sigma$.%
\footnote{In reality, the number of trees in the cover is the slightly larger $O(\sigma)$, due to a technical subtlety in choosing the representative $x_C$ in step (a), which we omit here for a simpler exposition.}
For the stretch of a pair $(u,v)$, the padding property of $\mathfrak{P}$ implies that there is a cluster $C_i$ of diameter $O(d_G(u,v))$ at some level $i$ of some partition $\mathbb{P}^j \in \mathfrak{P}$ that ``preserves'' the distance between $u$ and $v$.
More precisely, cluster $C_i$ that contains both $C_{i-1}(u)$ and  $C_{i-1}(v)$, which are the clusters at level $i-1$ containing $u$ and $v$ respectively. 
Moreover, a careful implementation of step (a) in doubling metrics (involving some duplication of the hierarchies in $\mathfrak{P}$) guarantees the existence of $\mathbb{P}^j$ and $C_i$ such that the representative $x_{C_i}$ of cluster $C_i$ satisfies  $x_{C_i} \in C_{i-1}(u) \cup C_{i-1}(v)$. Hence there is a direct edge between a vertex $x_u \in C_{i-1}(u)$ and a vertex $x_v \in C_{i-1}(v)$ (one of which is $x_{C_i}$, the representative of $C_i$); see \Cref{fig:hiearchy-to-tree}(b). 
By choosing the growth factor $\mu$ to be about $1/\eps$, both $d_G(u,x_u)$ and  $d_G(v,x_v)$ are around $\eps\cdot d_G(u,v)$ and hence the path $u$--$x_u$--$x_v$--$v$ in $T_i$ is a $(1+O(\eps))$-stretch path between $u$ and $v$. We note that existing non-spanning tree cover constructions~\cite{BFN22,CCL+24b} can also be reformulated as a construction from a hierarchical partition family. 

For the purpose of building a spanning tree cover, we cannot afford to include Steiner edges like $(x_C,x_{C'})$ in step (b); a necessary step is to guarantee the existence of a short path from $x_C$ to $x_{C'}$ in $G[C]$, the graph induced by vertices in $C$. 
This motivates us to construct a family of hierarchies with \emph{strong} diameter guarantee. 
We say that a $(\rho, \mu)$-hierarchical partition family  $\mathfrak{P}$ is of \EMPH{strong-diameter} if for every $j$, every cluster $C$ in $\mathbb{P}^j$ induces a subgraph $G[C]$ of diameter at most $\mu^i$. 
Known techniques~\cite{KLMN05,FL22B} construct a weak-diameter family of hierarchies by first overlaying clusters at level $i$ and clusters at level $i-1$, and then ``wiggling'' the boundary of clusters at level $i$ by assigning each cluster at level $i-1$ to an (arbitrary) intersecting cluster at level $i$ to get the hierarchical property. 
The wiggling process is problematic for strong diameter---even if we start with partitions of clusters with strong diameter for both level $i$ and $i-1$---since the wiggling effectively removes some vertices from one cluster at level $i$ and adds these vertices to another cluster at the same level. 
Removing vertices could blow up the strong diameter of the cluster, even to $+\infty$ (by disconnecting the cluster). 

We resolve this issue by observing that we can afford to assign a level-$(i-1)$ cluster $C_{i-1}$ to any level-$i$ cluster $C_i$ (even if $C_i$ does not intersect $C_{i-1}$) as long as the ``additive distortion'' is $O(\mu^{i-1})$, meaning that the distance from $C_{i-1}$ to $C_i$ is only larger than the distance from $C_{i-1}$ to its closest cluster at level $i$ by an additive amount $+O(\mu^{i-1})$. 
This flexibility enables us to use the \emph{cluster aggregation} technique recently introduced by Busch \etal~\cite{BDR+12,BCF+23} for solving the universal Steiner tree problem in doubling graphs to construct our strong-diameter family of hierarchies.

\begin{figure}[!htb]
    \centering
    \includegraphics[width=0.8\textwidth]{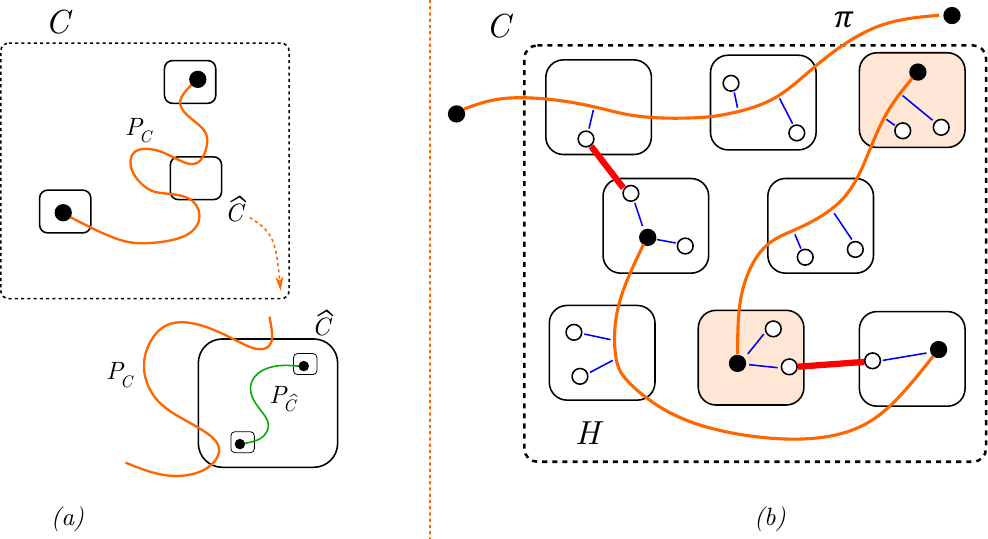}
    \caption{(a) The spanning tree $T_{\widehat C}$ for $\widehat C$ has to preserve both the subpath $\pi_C$ inside $\widehat C$, as well as the shortest path $\pi_{\widehat C}$ associated with $\widehat C$. 
    (b) A sketch graph $H$ with respect to preservable set $\cP$ (the orange paths; this includes the highway $\pi$ of $C$, and the path between the preserving pair of $C$ belonging to shaded clusters), inter-cluster edges $I$ (the bold red edges), and fake edges (the thin blue edges).}
    \label{fig:sketch-graph}%
\end{figure}

\paragraph{Preservable set and sketch graph.}  
In step (b) above, for \emph{every child} cluster $C'$ of $C$,  we connect $x_C$  to $x_{C'}$ by a Steiner edge.  
Since $C$ has a strong diameter, a simple idea to avoid Steiner edges is to connect $x_C$ to every child representative $x_{C'}$ by a shortest path in $G[C]$. 
The issue with this idea is that when we recursively process a child cluster  $C'$ of $C$, we have to add more shortest paths inside $G[C']$, and these paths may interact with the shortest paths in $G[C]$ and form cycles. 
We tackle this problem by further imposing an additional \EMPH{pair-preserving property} on the family $\mathfrak{P}$ (see \Cref{def:pair_HPF}):
\begin{itemize}
\item (P1) It suffices to connect $x_C$ to a \emph{single} child representative $x_{C'}$ by a shortest path, denoted~$P_C$. 
We call $P_C$ the \EMPH{shortest path associated with $C$} and the pair $(x_C, x_{C'})$ a \EMPH{preserving pair} of $C$. 
\item (P2) We can relax the distance guarantee of other shortest paths (from $x_C$ to other child representatives) to $O(\mu^i)$, where $\mu^i$ is the upper bound on the diameter of $C$. 
\end{itemize}
A more intuitive way to interpret (P1) and (P2) is that we have to construct a spanning tree $T_C$ for $C$ of diameter $O(\mu^i)$ containing the shortest path $P_C$. 
Other than its endpoint clusters, path $P_C$ might intersect other child cluster $\hat{C}$ of $C$; hence when we construct the spanning tree $T_{\hat{C}}$ for $\hat{C}$, it has to preserve \emph{both} paths: the subpaths $P_C$ inside $\hat{C}$, and its own shortest path $P_{\hat{C}}$ (see Figure~\ref{fig:sketch-graph}(a)). 
This is the starting point of our idea of a \emph{preservable set} and its associated \emph{sketch graph}, which we will introduce next. 

\smallskip
Let $C$ be a cluster in a hierarchy with diameter $\mu^i$, and let $\mathcal{C}$ denote the set of child clusters of $C$ in the same hierarchy. 
Clusters in  $\mathcal{C}$ have a diameter of roughly $\eps \mu^i$.  
Let $\pi$ be a shortest path in $G$ such that $\pi\cap G[C]\not=\emptyset$ and $\pi$ might contain vertices not in $G[C]$; the path $\pi$ is associated with some ancestral cluster of $C$ in (P1) and is called the \EMPH{highway} of $C$. 
A set of \emph{vertex-disjoint paths} $\mathcal{P}$ in $G[C] \cup \pi$ is a \EMPH{preservable set} with respect to $(G[C], \mathcal{C}, \pi)$ if (i) $\mathcal{P}$ contains $\pi$ and (ii) every cluster $C'\in \cC$ is touched by \emph{exactly one path $\pi_{C'}$} in the preservable set $\mathcal{P}$. 

The definition of the preservable set directly encodes (P1). 
To resolve (P2), the preservable set  $\mathcal{P}$ has to satisfy another condition formulated via the \emph{sketch graph}; see \Cref{fig:sketch-graph}(b). Roughly speaking, the \EMPH{sketch graph} has $C \cup \pi$ as a vertex set. The edge set contains three types of edges: 
(a) all edges of paths in  $\mathcal{P}$, (b) for every cluster $C'$ in $\mathcal{C}$, a \ul{fake edge} between every vertex $v\in C'$ and the closest vertex in $\pi_{C'} \cap C$ of weight $O(\eps \mu^i)$---here $\pi_{C'}$ is the unique path in $\cP$ touching $C$---and (c) some edges between clusters in $\mathcal{C}$, called \EMPH{inter-cluster edges}. See the formal definition in \Cref{def:sketch}. 
The idea of fake edges in (b) is that when we recursively construct a spanning tree $T_{C'}$ of $C'$ (passing down the path $\pi_{C'}$ as its highway) containing edges of $\pi_{C'}$ in $C'$, we can remove all the fake edges in (b) and add in the edges of $T_{C'}\setminus \pi_{C'}$; the weight bound $O(\eps \mu^i)$ of the fake edges are the diameter upper bound of  $T_{C'}$, which we can guarantee recursively. 

Our key technical contribution is the preservable set lemma (\Cref{lm:1+e_distort_preserve_set}) showing that we can construct a sketch graph that (i) is a tree, (ii) preserves the distance between the preserving pair of $C$ in (P1), and (iii) has diameter $O(\mu^i)$. 
Properties (ii) and (iii) resolve (P1) and (P2) mentioned above. Note that the sketch graph is not a spanning tree since it contains fake edges; as pointed out above, we can convert it to a spanning tree by replacing fake edges inside a cluster with edges of a spanning tree constructed recursively for that cluster. 

\paragraph{Spanning tree cover construction.} 
Once equipped with the two objects, the spanning tree cover construction can be summarized as follows.
Given the input graph $G$,
construct a strong-diameter, pair-preserving $(O(1),\mu)$-HPF $\mathfrak{P}$ of size $O_{\e, \ddim}(1)$  such that each cluster is associated with a preserving pair in (P1).  For each hierarchy $\mathbb{P} \in \mathfrak{P}$, construct a tree $T_{\mathbb{P}}$ by calling the recursive procedure
\textsc{PathPreservingTree}$(\mathbb{P}, G, \varnothing)$;
then return $\mathcal{T} \coloneqq \{T_{\mathbb{P}}: \mathbb{P} \in \mathfrak{P}\}$.
\textsc{PathPreservingTree}$(\mathbb{P}, G[C], \pi)$ takes three arguments: a $\mu$-hierarchical partition $\mathbb{P}$, a cluster $C$ in the hierarchy, and a path $\pi$. 
It returns a spanning tree of $G[C] \cup \pi$ using the following steps.
\begin{enumerate}
    \item  
    Let $\mathcal{C}$ be the set of children of $C$ in the hierarchy $\mathbb{P}$.
    \item  
    Let $\mathcal{P}$ be a preservable set with respect to $(G[C], \mathcal{C}, \pi)$, and $H$ be its sketch graph.
    \item  
    For each cluster $C' \in \mathcal{C}$, let $T_{C'} \gets$ \textsc{PathPreservingTree}$(\mathbb{P}, C, \pi_{C'})$. 
    Recall that $\pi_{C'}$ is the (unique) path in the preservable set $\mathcal{P}$ touching $C'$.
    \item   
    Return $\Paren{\bigcup_{C' \in \mathcal{C}} T_{C'}}$ connecting by the set of inter-cluster edges in $H$. 
\end{enumerate}
The number of trees in our cover is exactly the number of hierarchies in $\mathfrak{P}$, which is $O_{\ddim,\eps}(1)$. All the trees are spanning since we replaced fake edges of $H$ with edges of $T_{C'}$ in step 3, which are edges of $G$.  The stretch is $(1+\eps)$ since the sketch graph preserves the distances between the preserving pair.

\section{Spanning Tree Cover Algorithm} \label{sec:spanning-alg}

In this section, we make precise the high-level structure of the $(1+\e)$-stretch spanning tree cover algorithm. 
We first introduce the two key tools, whose proofs we defer to Sections~\ref{S:preservable-construct} and \ref{sec:HPF-construct}.
For the purpose of constructing spanning tree covers, we can safely assume without loss of generality that all graphs are connected.

\subsection{Strong-diameter pair-preserving HPF}

\begin{definition}
\label{def:HP}
    A (strong-diameter) \EMPH{$\mu$-hierarchical partition} of $G$ (for short, \EMPH{$\mu$-HP}) is a sequence of partitions $\mathbb{P} = (P_0, P_1, \ldots, P_{i_\mathrm{max}})$:
    \begin{itemize}
        \item Every cluster in $P_0$ consists of a single vertex, and $P_{i_\mathrm{max}}$ consists of one cluster containing all vertices.
        \item For every $i \in [i_\mathrm{max}]$, every cluster $C$ in $P_i$ has (strong) diameter at most $\mu^i$; that is, 
        \(
        \max_{u,v \in C} \dist_{G[C]}(u,v) \le \mu^i.
        \)
        \item For every $i \in [i_\mathrm{max}]$, every cluster in $P_i$ is the union of some clusters in $P_{i-1}$.
    \end{itemize}
    The partition $P_i$ is called a \EMPH{scale-$i$} partition.
\end{definition}

\begin{definition}
    A (strong-diameter) \EMPH{$(\rho, \Delta$)-sparse partition family} is a set of partitions $\set{P^1, \ldots, P^\sigma}$ of the vertices of $G$, where each $P^j$ is a partition of $G$ into clusters such that:
    \begin{itemize}
        \item Every cluster $C$ has (strong) diameter at most $\Delta$.
        \item For every vertex $v$, there is a partition $P^{j}$ and a cluster $C$ in~$P^{j}$ so that the ball of radius $\Delta/\rho$ centered at $v$ is contained in~$C$.
    \end{itemize}
    The parameter $\sigma$ is called the \EMPH{size}, and the parameter $\rho$ is the \EMPH{padding}.
    We say that $G$ admits a \EMPH{$\rho$-sparse partition family} (\EMPH{$\rho$-PF}) of size $\sigma$ if, for every $\Delta$, it admits a $(\rho, \Delta)$-sparse partition family of size~$\sigma$.
\end{definition}

The next object \emph{simultaneously} achieves the ``hierarchical'' property of a $\mu$-hierarchical partition and the ``padding'' property of a $\rho$-sparse partition family.

\begin{restatable}{definition}{HPFdef}
    A (strong-diameter) \EMPH{$(\rho, \mu)$-hierarchical partition family} (for short, a \EMPH{$(\rho, \mu)$-HPF}) is a set of (strong-diameter) $\mu$-hierarchical partition $\mathfrak{P} = \set{\mathbb{P}^1, \ldots, \mathbb{P}^\sigma}$, such that for every vertex $v$ in $G$ and every $i \in \mathbb{N}$,
    there exists some $\mu$-hierarchical partition $\mathbb{P}^j \in \mathfrak{P}$ whose scale-$i$ partition $\smash{P^j_i \in \mathbb{P}^j}$ contains some cluster $C \in \smash{P^j_i}$ such that the ball of radius $\mu^i/\rho$ centered at $v$ is contained in $C$.
\end{restatable}
We view each hierarchical partition $\mathbb{P} = (P_1, P_2, \ldots P_{i_{\max}})$ as a tree structure, in which each cluster $C \in P_i$ is a node at level $i$ and the clusters in $P_{i - 1}$ whose union is $C$ are the children of $C$. We use the same notation for \EMPH{parent}, \EMPH{ancestor}, \EMPH{descendants}, and \EMPH{degree} for each cluster in $\mathbb{P}$ as in a tree.
We remark that an HPF can either be viewed as a set of hierarchical partitions with an additional covering property (as in the definition above), or as a set of sparse partition families with additional hierarchical structure.

\medskip
We will construct an $(O(1), \tilde{O}(\ddim^3))$-HPF of size $2^{O(\ddim)}$ with \emph{strong diameter} guarantee for graphs with bounded doubling dimension in Section~\ref{sec:HPF-construct}.
One benefit of working with doubling graphs is that we can additionally assume the number of child subclusters is constant for any given cluster.
The \EMPH{degree} of a cluster in a hierarchical partition $\mathbb{P}$ is the total number of its children.

\begin{restatable}{theorem}{HPF}
\label{lem:HPF}
    If a graph $G = (V, E)$ has doubling dimension $\ddim$, then it admits a strong-diameter $(\rho, \mu)$-HPF $\mathfrak{P}$ of size $\smash{2^{\tilde{O}(\ddim)}}$ with $\rho = O(1)$ and $\mu = \tilde{O}(\ddim^3)$, where the maximum degree of any cluster in any HP in $\mathfrak{P}$ is at most $\mu^{O(\ddim)}$.
\end{restatable}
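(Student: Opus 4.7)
The plan is to build each hierarchy $\mathbb{P}^j \in \mathfrak{P}$ scale-by-scale, by interleaving two ingredients: a strong-diameter sparse partition family at each scale (to supply the padding guarantee), and the cluster aggregation technique of Busch et al.\ (to stitch partitions at adjacent scales into a refinement without destroying strong diameter). The overall family is obtained by taking $2^{\tilde{O}(\ddim)}$ variants, enough to guarantee, for every vertex $v$ and every scale $i$, some hierarchy in which the ball $B(v,\mu^i/\rho)$ lies in a single scale-$i$ cluster.

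First I would construct, for each scale $i$, a strong-diameter $(\rho_0,\mu^i)$-sparse partition family of size $2^{O(\ddim)}$ with $\rho_0 = O(1)$; these are essentially strong-diameter padded decompositions for doubling graphs---pick a net at resolution $\Theta(\mu^i)$, partition vertices into shortest-path Voronoi cells around the net points with appropriate randomized shifts, and bound the number of variants via packing. Next, to turn a sequence of such partitions into a single $\mu$-HP, I would aggregate bottom-up. Inductively, suppose the scale-$(i-1)$ partition has been fixed. At scale $i$, take the chosen sparse partition, view its clusters as \emph{anchors}, and use cluster aggregation to assign every scale-$(i-1)$ cluster to an anchor, yielding a connected, strong-diameter refinement. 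Cluster aggregation pays only an additive $\tilde{O}(\ddim^2)\cdot \mu^{i-1}$ in the strong diameter, so choosing $\mu = \tilde{O}(\ddim^3)$ absorbs the loss geometrically and keeps the strong diameter at most $\mu^i$ at every scale.

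For the padding property across the family, observe that the aggregation at scale $i$ only shifts cluster boundaries by $O(\mu^{i-1}) = O(\mu^i/\mu)$, so a ball padded by the scale-$i$ sparse partition (with radius $\mu^i/\rho_0$) remains padded after aggregation, provided $\rho = O(\rho_0)$. It therefore suffices to ensure that for every pair $(v,i)$, some hierarchy in $\mathfrak{P}$ uses a scale-$i$ sparse partition that pads $v$. Using $2^{O(\ddim)}$ variants at each scale and ``diagonalizing'' the choice across scales gives a total of $2^{\tilde{O}(\ddim)}$ hierarchies. The degree bound then follows from packing: a scale-$i$ cluster has strong diameter at most $\mu^i$, and each of its children contains an anchor which is $\Omega(\mu^{i-1})$ apart from any other anchor, yielding at most $(\mu^i/\mu^{i-1})^{O(\ddim)} = \mu^{O(\ddim)}$ children.

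The main obstacle is the simultaneous control, during each aggregation step, of three properties: the hierarchical refinement (aggregated clusters must be \emph{unions} of scale-$(i-1)$ clusters, not arbitrary vertex sets); the strong-diameter bound (paths between points of the aggregated cluster must stay within $G$ restricted to the union); and padding inheritance (balls padded at scale $i$ by the underlying sparse partition must not be split by aggregation). Busch et al.'s aggregation procedure naturally produces strong-diameter Voronoi-like cells, but to enforce the refinement property I would run it on a contracted quotient graph in which each scale-$(i-1)$ cluster is collapsed to a super-vertex of weight $\mu^{i-1}$; pulling back the resulting aggregation gives scale-$i$ clusters that are exact unions of scale-$(i-1)$ clusters. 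Verifying that the polylogarithmic overhead of the cluster aggregation combines with the $O(\ddim)$ padding factor of the sparse partitions into the stated growth $\mu = \tilde{O}(\ddim^3)$---while keeping $\rho = O(1)$ uniformly across scales---is the most delicate calculation in the argument.
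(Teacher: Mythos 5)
Your overall plan correctly identifies the two main ingredients---cluster aggregation for strong diameter and some per-scale padding device to control the family size---and it is indeed the same high-level strategy the paper uses. But there is a real gap in the padding step.

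You write that ``the aggregation at scale $i$ only shifts cluster boundaries by $O(\mu^{i-1})$, so a ball padded by the scale-$i$ sparse partition (with radius $\mu^i/\rho_0$) remains padded after aggregation.'' This does not follow from the cluster-aggregation guarantee. The guarantee only promises that every vertex $v$ is assigned to \emph{some} portal $p'$ with $d_{G[f^{-1}(p')]}(v,p') \le d_G(v, N) + \tilde{O}(\ddim^2)\,\mu^{i-1}$; it does \emph{not} promise that $v$ is assigned to its nearest portal, nor that the aggregation boundaries track the Voronoi boundaries of the sparse partition. In particular, even if $B(v, \mu^i/\rho_0)$ lies entirely in the sparse-partition cell of anchor $p$, the scale-$(i-1)$ clusters meeting this ball could legally be split among several portals within the $\tilde{O}(\ddim^2)\,\mu^{i-1}$ slack, so no single aggregated cluster is guaranteed to contain the ball. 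Padding of the sparse partition is about containment in a cell; the aggregation guarantee is about proximity to the assigned portal---these are different, and the gap between them is the crux of the problem.

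The paper closes exactly this gap by making the portals \emph{cover} the graph densely rather than merely \emph{pad} it: it first builds a fine hierarchy of nets $N_i$ at resolution $\mu^i/(6\eta)$, so every vertex $v$ is within $\mu^i/24$ of some net point $p$, and then constructs $2^{O(\ddim)}$ hierarchies of coarser subnets $N_i^j$ (at resolution $\mu^i/3$, used as aggregation portals) that together cover $N_i$; thus $p$ is a portal in some hierarchy $j$. The key lemma (Claim~\ref{cl:unique}) then shows the aggregation is forced to assign every scale-$(i-1)$ cluster intersecting $B(p, \mu^i/12)$ to portal $p$ itself, by a packing argument against the $\mu^i/3$ separation of portals. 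Since $B(v, \mu^i/\rho) \subseteq B(p, \mu^i/12)$, padding follows directly. The nesting property of the subnet hierarchies, built by the two-pass (top-down then bottom-up) refinement in Lemma~\ref{lem:net-tree-cover}, is what keeps the total family size at $2^{\tilde{O}(\ddim)}$ rather than something depending on the number of scales. If you want to pursue your route, you would have to replace the sparse-partition padding property with an explicit ``for every $(v,i)$ some hierarchy's scale-$i$ portal set contains a point within $\mu^i/24$ of $v$'' covering guarantee, which is essentially the lemma the paper proves. Incidentally, the contracted-quotient-graph step is unnecessary: the cluster aggregation problem is already stated over a cluster partition, so the output is automatically a union of scale-$(i-1)$ clusters.
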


To motivate the following definition, we draw analogy to the quadtree construction~\cite{SA12,ACRX22} in the Euclidean setting.
Let $Q$ be a cell in $\R^2$ with side-length $D$ containing every point in consideration, with minimum distance $1$ between every pair of points.  
We construct a quadtree on $Q$ by making $Q$ to be the root of the tree at level $\log_2 D$, recursively construct a quadtree for each of the four subcells of $Q$ with side-length $D/2$, and append the roots of the resulting four quadtrees to the root~$Q$.
We stop the process when the cell has side length $1$.
For each cell $C$ in this quadtree, the subcells contained in $C$ with side-length $\e D$ all lie at $\Ceil{\log_2 (1/\e)}$ levels down below $C$ in the quadtree.
We want to give special attention to these subcells that are $\e$-fraction smaller because, for the purpose of distance estimation, things that happen within these subcells could be considered negligible. 

Going back to HPFs, for each cluster $C$ in a $\mu$-hierarchical partition $\mathbb{P}$, we define \EMPH{$\e$-subclusters} to be the clusters that are $\ceil{\log_\mu (1/\e)}$ levels below $C$.  (The value $\EMPH{$\ell$} \coloneq \ceil{\log_\mu (1/\e)}$ is called the \EMPH{offset}.)
By the definition of a $\mu$-HP\/, any $\e$-subcluster of a level-$i$ cluster $C$ must be at level $i-\ell$ and have (strong) diameter at most $\e \cdot \mu^{i}$.
(For $i$-th level clusters where $i<\ell$ we assume their $\e$-subclusters are the ones at level $0$, which by definition are singleton points and have diameter $0$.)
A \EMPH{($\rho$-well-separated) subcluster pair} of a cluster $C$ is a pair of $\e$-subclusters $(C_1,C_2)$ of $C$ such that 
$d_{G} (C_1, C_2) > \mu^{i} / \rho$.%
\footnote{We can achieve better bounds by relaxing the lower bound guarantee; however this leads to only changes in dependency on $\ddim$, which we choose not to optimize for better clarity in presentation.}
(When $\rho$ is clear from context, we sometimes omit it and just call $(C_1,C_2)$ a \emph{subcluster pair}.)
This is analogous to the WSPD construction in the Euclidean setting~\cite{CK95}; intuitively, it is sufficient to preserve distance between two subclusters without worrying about the actual representatives from each subclusters, again because things that happen within the $\e$-subclusters are negligible.

\begin{definition}[Pair-Preserving HPF]
\label{def:pair_HPF}
We say a $(\rho, \mu)$-HPF $\mathfrak{P}$ of a graph $G$ is \EMPH{pair-preserving} if
every cluster $C$ is associated with a ($\rho$-well-separated) subcluster pair $(C_1,C_2)$, such that
for any pair of vertices 
$u$ and $v$ in $G$, there exists an HP $\mathbb{P} \in \mathfrak{P}$ containing some cluster $C$ such that:
\begin{enumerate}
    \item \label{it:assign_HPF} 
    letting $(C_1,C_2)$ denote the subcluster pair associated with $C$, we have $u \in C_1$ and $v \in C_2$;
    \item \label{it:dist_prev_HPF} 
    $d_{G}(x, y) = d_{G[C]}(x, y)$ for any $x \in C_1$ and $y \in C_2$.
\end{enumerate}
We say that the HP $\mathbb{P}$ \EMPH{preserves} the pair $(u,v)$.
\end{definition}

It is not hard to turn our $(\rho, \mu)$-HPF construction from \Cref{lem:HPF} pair-preserving; we just have to make sufficiently many copies of each cluster $C$, so that there is one copy dedicated to every subcluster pair of $C$. 
Due to the packing bound, it suffices to make $\eps^{-O(\ddim)}$ copies.

\begin{lemma}
\label{lm:pair_preserve_HPF}
    Given a graph $G$ of doubling dimension $\ddim$. 
    For any sufficiently small constant $\e$ such that $\e^{-1} \ge \tilde{\Omega}(\ddim^3)$, one can construct a pair-preserving strong-diameter $(\rho, \mu)$-HPF $\mathfrak{P}$ on $G$ of size $\e^{-O(\ddim)}$  with $\rho = \tilde{O}(\ddim^3)$ and $\mu = \tilde{O}(\ddim^3)$.
\end{lemma}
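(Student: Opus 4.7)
The plan is to amplify the base HPF of Lemma~\ref{lem:HPF} by duplicating each hierarchy enough times that every cluster has a copy in which it is ``tagged'' with each of its well-separated $\e$-subcluster pairs. The key observation driving the whole argument is that the well-separation threshold $\mu^i/\rho$ demanded by the pair-preserving HPF (with $\rho = \tilde O(\ddim^3)$) is \emph{much weaker} than the padding radius $\mu^i/\rho_0$ inherited from the base HPF (with $\rho_0 = O(1)$); this multiplicative gap is what lets the containment and well-separation requirements be satisfied simultaneously.

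First I would invoke Lemma~\ref{lem:HPF} to obtain a strong-diameter $(\rho_0, \mu)$-HPF $\mathfrak{P}_0$ of size $2^{\tilde O(\ddim)}$ with $\rho_0 = O(1)$ and $\mu = \tilde O(\ddim^3)$. For any cluster $C$ at level $i$, its $\e$-subclusters (at level $i - \ell$ with $\ell = \lceil \log_\mu(1/\e) \rceil$) have strong diameter at most $\e\mu^i$; the packing property of doubling metrics yields at most $\e^{-O(\ddim)}$ of them, hence at most $N = \e^{-O(\ddim)}$ candidate well-separated pairs in $C$. For each $\mathbb{P} \in \mathfrak{P}_0$ and each $k \in [N]$ I construct a copy $\mathbb{P}^{(k)}$ in which every cluster is tagged with its $k$-th well-separated pair (using an arbitrary pair when fewer than $k$ exist). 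The resulting family $\mathfrak{P}$ has $2^{\tilde O(\ddim)} \cdot N = \e^{-O(\ddim)}$ hierarchies under the hypothesis $\e^{-1} \geq \tilde\Omega(\ddim^3)$; strong diameter, hierarchical structure, and the padding guarantee all descend to $\mathfrak{P}$ unchanged.

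The core of the proof is verifying the pair-preservation property. Given $u, v$ with $\delta = d_G(u,v)$, I would choose the level $i$ so that $\mu^i/\rho_0 \geq 2\delta$ (for containment) while $\mu^i(1/\rho + 2\e) < \delta$ (for well-separation, accounting for the $\leq \e\mu^i$ diameter of each $\e$-subcluster). Both inequalities hold simultaneously because $\rho \gg \rho_0$ and the assumption $\e^{-1} \geq \tilde\Omega(\ddim^3)$ keeps the $2\e$ term negligible, opening a window of about $\log_\mu \rho$ admissible levels. Applying the base padding at level $i$ locates a hierarchy $\mathbb{P} \in \mathfrak{P}_0$ and a cluster $C$ at level $i$ with $B(u, \mu^i/\rho_0) \subseteq C$; then $v \in C$, and every shortest $G$-path of length at most $\delta + 2\e\mu^i \leq \mu^i/\rho_0$ between points of the two $\e$-subclusters $C_1 \ni u$ and $C_2 \ni v$ stays inside $C$, giving distance preservation. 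The triangle inequality yields $d_G(C_1, C_2) \geq \delta - 2\e\mu^i > \mu^i/\rho$, so $(C_1, C_2)$ is $\rho$-well-separated. Selecting the copy $\mathbb{P}^{(k)}$ that tags $C$ with $(C_1, C_2)$ witnesses Definition~\ref{def:pair_HPF}.

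The main obstacle is this level-selection balancing act: naively the containment constraint pushes $\mu^i$ upward while the well-separation constraint pushes it downward, and without the gap between $\rho_0 = O(1)$ and $\rho = \tilde O(\ddim^3)$ the two requirements would directly contradict each other. That is precisely why the lemma must assume $\e^{-1} \geq \tilde\Omega(\ddim^3)$ and why the padding parameter $\rho$ in the statement is $\tilde O(\ddim^3)$ rather than $O(1)$; the rest of the argument (packing, duplication, verification of the strong-diameter and hierarchical properties) is mechanical once this window of valid scales is established.
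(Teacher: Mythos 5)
Your proposal is correct and follows essentially the same approach as the paper: start from the base HPF of \Cref{lem:HPF}, duplicate each hierarchy $\e^{-O(\ddim)}$ times (using the bound on the number of $\e$-subclusters to ensure one copy per well-separated subcluster pair), and verify pair-preservation by choosing a level $i$ at which the ball $B(u,\mu^i/\rho_0)$ is contained in a cluster while $d_G(u,v)$ remains large enough for $\rho$-well-separation of the subclusters, exploiting the multiplicative gap between $\rho_0 = O(1)$ and $\rho = \tilde O(\ddim^3)$. The only cosmetic difference is that the paper bounds the number of $\e$-subclusters via the degree bound $\mu^{O(\ddim)}$ of the HPF rather than invoking the packing bound directly, and fixes the exact level by taking the smallest $i$ with $d_G(u,v) \le \mu^i/(2\rho_0)$, but these lead to the same conclusion.
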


\begin{proof}
From \Cref{lem:HPF}, $G$~admits a strong-diameter $(\rho', \mu)$-HPF $\mathfrak{P}'$ of size $\smash{2^{\tilde{O}(\ddim)}}$ with $\rho' = O(1)$, $\mu = \tilde{O}(\ddim^3)$ and maximum degree $\mu^{O(\ddim)}$.
We choose $\rho \coloneqq D \cdot \mu \rho'$ for some constant $D > 0$ to be determined later.
For each HP $\mathbb{P}'$, we create $k$ copies of $\mathbb{P}'$ for some number $k = \e^{-O(\ddim)}$ to be determined, and add them to a new HPF\/, called $\mathfrak{P}$.  We then assign each cluster $C$ in each copy to a $\rho$-well-separated subcluster pair of $C$. 
Specifically, for each cluster $C$ at level $i$ of $\mathbb{P}'$, let $C_1, C_2, \ldots, C_k$ be the copies of $C$ in the copies of $\mathbb{P}'$. 
Because every cluster $C$ has degree $\mu^{O(\ddim)}$, $C$ has at most $\mu^{O(\ddim) \cdot \log_{\mu}(1/\e)} = \e^{-O(\ddim)}$ descendants at level $i - \ell$; hence, 
there are at most $\e^{-O(\ddim)}$ subcluster pairs of $C$.
Thus, we can set $k$ so that there are more copies of $C$ than the number of subcluster pairs. 
We then associate each copy of $C$ to a unique ($\rho$-well-separated) subcluster pair. 
(If there are more copies than the subcluster pairs, we assign the remaining copies to an empty pair.)

\smallskip
We now show that $\mathfrak{P}$ satisfies both conditions in \Cref{def:pair_HPF}. 
Letting $i$ be the smallest integer such that $d_G(u, v) \leq \frac{\mu^i}{2\rho'}$; by minimality of $i$, we have
\(
    d_G(u, v) > \smash{\frac{\mu^{i - 1}}{2\rho'}}.
\)
By the definition of $\mathfrak{P}'$, there is a hierarchy $\mathbb{P}'$ such that $B(u, \mu^i/\rho')$
is in some cluster at level $i$ of $\mathbb{P}'$, called $C$.
Note in particular that vertices $u$ and $v$ are both in cluster $C$.
Let $C_u$ and $C_v$ be the two $\e$-subclusters containing $u$ and $v$, respectively, at level $i - \ell$.
For \cref{it:assign_HPF},
because $d_G(u, v) > \frac{\mu^{i - 1}}{2\rho'} = \frac{D \mu^{i}}{2\rho}$ and the subclusters have diameter at most $\e\mu^i$, the two subclusters $C_u$ and $C_v$ are $\rho$-well-separated if we set 
$D \ge 4$ and $\e \le \frac{1}{8\mu\rho'}$.
Then, by our construction of $\mathfrak{P}$, there exists a copy of $\mathbb{P}'$ in which we assign $(C_u, C_v)$ to (a copy of) $C$; call this copy $\mathbb{P}$. 
Thus, $\mathbb{P}$ satisfies \cref{it:assign_HPF}.

As for \cref{it:dist_prev_HPF},
because the subclusters have strong diameter $\e\mu^i$,
if we pick two vertices $x \in C_u$ and $y \in C_v$, we have both $d_G(u,x)$ and $d_G(v,y)$ at most $\e\mu^i$.
Since $d_G(u, v) \leq \frac{\mu^i}{2\rho'}$, 
every vertex in the shortest path from $x$ to $y$ lies entirely in 
$B(u, \mu^i/2\rho' + 2\e\mu^i) \subseteq B(u, \mu^i/\rho')$ as $\e = O(\ddim^{-3}) \le 1/4\rho'$, and thus also in $C$, yielding \cref{it:dist_prev_HPF}:  
\(
d_{G[C]}(x, y) = d_G(x, y)
\)
for any $x \in C_u$ and $y \in C_v$.
$\mathbb{P}$ inherits \cref{it:dist_prev_HPF} from~$\mathbb{P}'$.  
\end{proof}

\subsection{Preservable sets}

Let \EMPH{$\widehat{G}$} be a subgraph with diameter \EMPH{$D$} of a graph $G$.  
Let \EMPH{$\mathcal{C}$} be a clustering of $\widehat{G}$ into clusters of strong diameter $O(\e D)$. Let \EMPH{$\pi$} be a path in $G$ \emph{not necessarily contained in $\widehat{G}$}, such that $\pi$ is a shortest path in the graph $\widehat{G} \cup \pi$.
We emphasize again that $\pi$ may not be in the subgraph $\widehat{G}$, which is why we consider the graph $\widehat G \cup \pi$ in the definitions below.

\begin{definition}[Preservable set]
\label{def:preservable} 
Fix a graph $\widehat{G}$, clustering $\mathcal{C}$, and path $\pi$. Let $\mathcal{P}$ be a set of paths in the graph $\widehat{G} \cup \pi$. 
We say that $\mathcal{P}$ is a \EMPH{preservable set} with respect to $(\widehat{G}, \mathcal{C}, \pi)$ if: 
\begin{enumerate}
    \item \label{it:cluster_disjoint} 
    Every cluster is \EMPH{touched} by \emph{exactly one} path in $\mathcal{P}$.
    That is, for every cluster $C\in \cC$, there exists a unique path \EMPH{$\pi_C$} in $\cP$ such that $C\cap V(\pi_C)\not=\varnothing$. 
    In particular, this means that the collections of subsets of clusters touched by every path in $\cP$ are pairwise disjoint, and paths in $\mathcal{P}$ are vertex-disjoint paths. 
    \item \label{it:pi} $\pi$ itself is in the preservable set $\mathcal{P}$.
    \item \label{it:shortest_path} 
    For every $C \in \cC$, $\pi_C$ is a shortest path in $\widehat G[C] \cup \pi_C$. 
\end{enumerate}
\end{definition}

We call $\cP$ a preservable set because, when we construct a tree $T$ for our tree cover (in Section~\ref{SS:tree-cover}) via a recursive procedure, we will be able to guarantee that every path in $\cP$ is in the tree $T$ --- that is, the tree $T$ \emph{preserves} these paths. We now give a definition and lemma which let us choose a ``good'' set of paths to preserve --- that is, a preservable set which helps to satisfy the distortion guarantee of tree cover.

\begin{definition}[Sketch graph]
\label{def:sketch}  
Let $\cP$ be a preservable set with respect to $(\widehat{G},\cC,\pi)$, 
and let $I$ be a set of \EMPH{inter-cluster edges} from $\widehat{G}$ with respect to $\cC$ (that is, no edge has both endpoints in the same cluster of $\cC$).
The sketch graph \EMPH{$\Sketch(\mathcal{P}, I)$} is a graph with vertex set $V(\widehat{G} \cup \pi)$ and edge set that is comprised of the following:
\begin{enumerate}
    \item  Take all edges in all paths of $\mathcal{P}$.
    \item  For every cluster $C$ in $\mathcal{C}$, 
    for every vertex $v$ in $C$, add a \emph{fake edge} between $v$ and the closest vertex in $\pi_C \cap C$ with weight $10 \cdot \e D$.
    (Recall $\pi_C$ is the unique path in $\cP$ touching $C$, if exists.)
    \item  Add the inter-cluster edges $I$ to form the final sketch graph $\Sketch(\mathcal{P}, I)$.
\end{enumerate}  
If there are no cycles in $\Sketch(\cP, I)$, we say the sketch graph and the inter-cluster edges $I$ are \EMPH{cycle-free}.
\end{definition}

We observe that a cycle-free sketch graph $\Sketch(\mathcal{P}, I)$ is always a forest, but not a spanning forest because of the fake edges.
Intuitively, we will later replace the fake edges within each cluster $C$ with a spanning tree $T_C$ constructed recursively on $C$ and $\pi_C$; this will ensure that together with some inter-cluster edges, we have a spanning tree 
for $\widehat{G}$ and $\pi$. For each cluster in a HP, we consider it as an induced subgraph of $G$ and use the same notation for the induced subgraph and the cluster.  

\begin{lemma}[Preservable set lemma]
\label{lm:1+e_distort_preserve_set}
    Let $\mathfrak{H}$ be a pair-preserving $(\rho, \mu)$-HPF 
    of $G$ with $\rho = \tilde{O}(\ddim^3)$, $\mu = \tilde{O}(\ddim^3)$, $\e = \mu^{-O(\ddim)}$ and maximum degree $\smash{2^{\tilde{O}(\ddim)}}$.  
    Let $\mathbb{P}$ be an HP of the $(\rho, \mu)$-HPF $\mathfrak{H}$.
    Given 
    \begin{enumerate}
    \item any node $\widehat{G}$ at level $i$ in $\mathbb{P}$, 
    \item clustering $\mathcal{C}$ of $\widehat{G}$ at level $i - \smash{\log_{\mu}(1/\e)}$, 
    \item a path $\pi$ that is a shortest path of $\widehat{G} \cup \pi$, and 
    \item an $\e$-subcluster pair $(C_1, C_2)$ in $\widehat{G}$, 
    \end{enumerate}
    one can construct a preservable set $\cP$ with respect to $(\widehat{G}, \cC, \pi)$
    and a set of cycle-free inter-cluster edges $I$, so that in the sketch graph $H := \mathrm{sketch}(\cP, I)$:
    \begin{enumerate}
        \item \label{it:tree} 
        $H$ is a tree;
        \item \label{it:preserve_pair}
        $d_H(x, y) \leq d_{\widehat{G}}(x, y) + 44\cdot\e\mu^{i}$ for any pair of vertices $x \in C_1$ and $y \in C_2$;
        \item \label{it:bdd_diam} $d_H(u, v) \leq 10\mu^i$ for every pair of vertices $u$ and $v$ in $\widehat{G}$. 
    \end{enumerate}
\end{lemma}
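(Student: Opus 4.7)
The plan is to construct the preservable set $\cP$ around a ``spine'' that approximates the shortest $C_1$-to-$C_2$ path while remaining vertex-disjoint from $\pi$, and then to choose inter-cluster edges $I$ that turn the resulting structure into a single spanning tree. First, I would compute a shortest path $Q$ in $\widehat{G}$ between specific endpoints $x_1 \in C_1$ and $x_2 \in C_2$. Since $\pi$ is a shortest path of $\widehat{G} \cup \pi$, I can replace each maximal subpath of $Q$ between two vertices of $\pi$ by the corresponding subpath of $\pi$ without increasing length; call the resulting walk $\tilde Q$. It decomposes as an alternation of segments $Q_1, \ldots, Q_m$ avoiding $V(\pi)$ with subpaths of $\pi$. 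I then take
\[
\cP \;:=\; \{\pi\} \cup \{Q_1, \ldots, Q_m\} \cup \{\{v_C\} : C \in \cC \text{ not yet touched}\},
\]
where $v_C$ is an arbitrary vertex of $C$. The $Q_j$'s are vertex-disjoint from $\pi$ by construction, and a slight modification (detouring within a cluster at the cost of $O(\e\mu^i)$ whenever a cluster would be touched by two paths) ensures every cluster is touched by exactly one path.

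To construct $I$, observe that the sketch before inter-cluster edges is already a forest: within each cluster $C$ the fake edges form a star centered at a vertex of $\pi_C \cap C$, and each path in $\cP$ together with its cluster-stars forms one connected component. I would add, for each $Q_j$, one inter-cluster edge of $\widehat{G}$ joining an endpoint of $Q_j$ to its $\pi$-neighbor along $\tilde Q$, thereby attaching $Q_j$ to the $\pi$-component. For each remaining singleton-path cluster I would add a single inter-cluster edge attaching it to an already-connected cluster (chosen by a BFS order in the cluster quotient graph). Every added edge merges two distinct components, so no cycles are created and $H$ is a spanning tree, establishing the first property.

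For the second property, the tree path from $x \in C_1$ to $y \in C_2$ in $H$ traverses: a fake edge to $Q_1 \cap C_1$ (cost $10\e\mu^i$); a subpath of $Q_1$ to its $\pi$-endpoint; the attaching edge to $\pi$; a subpath of $\pi$ to the attachment of $Q_m$; the attaching edge into $Q_m$; a subpath of $Q_m$ to $C_2$; and a fake edge to $y$ (cost $10\e\mu^i$). Because $\pi$ is a shortest path of $\widehat{G} \cup \pi$, the middle $\pi$-subpath has length at most the corresponding portion of $\tilde Q$, so the whole tree path has length at most $|\tilde Q| + 20\e\mu^i \le d_{\widehat{G}}(x_1, x_2) + 20\e\mu^i \le d_{\widehat{G}}(x, y) + O(\e\mu^i)$, well within the $44\e\mu^i$ budget. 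For the third property, every vertex of $\widehat{G}$ reaches $\pi$ via one fake edge of weight $10\e\mu^i$ plus a tree path of inter-cluster edges whose total weight is bounded by the diameter of $\widehat{G}$; since $H$ is a tree this yields $d_H(u,v) = O(\mu^i)$.

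The main obstacle is reconciling the tree constraint with approximate distance preservation: each intermediate $Q_j$ has two natural attachments to $\pi$, but keeping both creates cycles, so exactly one must be dropped per intermediate $Q_j$. The resolution is that $\pi$ is itself a shortest path, so after the cut the tree path from $C_1$ to $C_2$ can simply stay on $\pi$ between the two retained attachments, losing no more length than following $\tilde Q$ through $Q_j$. Edge cases where $\pi$ already touches $C_1$ or $C_2$ are easier and are handled by dropping the corresponding $Q_1$ or $Q_m$.
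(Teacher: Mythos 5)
Your construction of $\cP$ follows the paper's Step~1 idea (route a $C_1$--$C_2$ shortest path, handle its interaction with $\pi$ via substitution of $\pi$-subpaths), and your analysis of item~2 is plausible along the same lines as the paper's. But there are two genuine gaps in the rest, the second of which is fatal.

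First, the ``slight modification'' that resolves a cluster being touched by two paths is not actually described: ``detouring within a cluster'' does not produce a preservable set (you could still have two paths touching it), and it can interact badly with the shortest-path and disjointness requirements. The paper resolves this by a \emph{gluing procedure}: whenever a new path $Q$ would enter a cluster already touched by $\cP$, cut $Q$ at the first such vertex, keep only the prefix, and add a single inter-cluster edge across the cut. This is the mechanism that keeps $\cP$ cluster-disjoint while simultaneously growing $H$ as a tree (each gluing step attaches one new component via one new edge).

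Second, and more seriously, your argument for item~3 is incorrect, and the construction as you describe it does \emph{not} give the diameter bound. You claim every vertex ``reaches $\pi$ via one fake edge of weight $10\e\mu^i$ plus a tree path of inter-cluster edges whose total weight is bounded by the diameter of $\widehat G$.'' But a tree path in $H$ alternates between fake edges, $\cP$-path segments, and inter-cluster edges: every time the path hops from one cluster to another it pays another $\Theta(\e\mu^i)$ in fake edges. When you attach the remaining clusters as singleton paths ``by BFS order in the cluster quotient graph,'' the hop-depth of the resulting tree is controlled only by the structure of that quotient graph, which in a doubling graph with $\ddim \ge 2$ can have hop-diameter $\gg 1/\e$ even though the metric diameter of $\widehat G$ is $\mu^i$. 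In that case the accumulated fake-edge weight along a single tree path already exceeds $10\mu^i$, so item~3 fails. The paper avoids this by exploiting the hierarchical structure between levels $i$ and $i-\log_\mu(1/\e)$: Step~2 glues in paths level by level, each connecting a level-$j$ cluster representative to its parent's representative inside the parent cluster. This guarantees (via the inner/outer induction in Claim 4.7) that any vertex-to-$\pi$ tree path crosses at most $\mu^{O(\ddim)}$ siblings per level and only $\log_\mu(1/\e)$ levels, so the total fake-edge contribution is $O\bigl(\log_\mu(1/\e)\cdot\mu^{O(\ddim)}\cdot\e\mu^i\bigr) < \mu^i$, while the $\cP$-path contributions telescope as $\sum_j \mu^j = O(\mu^i)$. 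Your proposal makes no use of the intermediate levels, and without that structure the diameter control is lost.
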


\subsection{Constructing the spanning tree cover}
\label{SS:tree-cover}

To construct a $(1 + \e)$-stretch spanning tree cover, we use the procedure \textsc{SpanTreeCover}$(G)$. First, we use \Cref{lm:pair_preserve_HPF} to find a bounded-degree pair-preserving $(\rho, \mu)$-HPF $\mathfrak{H}$ of $G$ with $\rho = \tilde{O}(\ddim^3)$ and $\mu = \tilde{O}(\ddim^3)$.
Recall that for each cluster $C$ in a $\mu$-hierarchical partition $\mathbb{P}$, its \emph{$\e$-subclusters} are the clusters that are $\ell$ levels below $C$, where $\ell \coloneq \Ceil{\log_\mu (1/\e)}$ is the \emph{offset}.
For each hierarchical partition $\mathbb{P} = (H_0, H_1, \ldots H_{i_{\max}}) \in \mathfrak{H}$, we build $\ell$ spanning trees 
by recursively running \textsc{PathPreservingTree}$(\mathbb{P}, G_{i}, \varnothing)$ with $i \in [i_{\max} - \ell + 1, i_{\max}]$ and $G_i$ being the copy of~$G$ corresponding to the trivial partition at level $i$. 
Recall that $H_i$ is the trivial partition $\{V(G)\}$ for all $i \in [i_{\max} - \ell + 1, i_{\max}]$. 
For any cluster $\widehat{G}$ with the associated subcluster pair $(C_1,C_2)$ and any shortest path $\pi$ touching $\widehat{G}$, \textsc{PathPreservingTree}$(\mathbb{P}, \widehat{G}, \pi)$ finds a preservable set \EMPH{$\cP$} with respect to $(\widehat{G}, \cC, \pi)$ (recall that $\cC$ is a clustering of $\widehat{G}$ at $\ell$ levels below and $\cC$ contains $C_1, C_2$)
and a set of cycle-free inter-cluster edges \EMPH{$I$} satisfying the conditions in \Cref{lm:1+e_distort_preserve_set}. 
For each cluster $C \in \cC$, we recursively build a spanning tree of $G[C]$ by running \textsc{PathPreservingTree}$(\mathbb{P}, G[C], \pi_C)$ with $\pi_C$ being the path in $\mathcal{P}$ touching $C$. 
Then, the spanning tree of $\widehat{G}$ is the union of all output spanning trees of clusters in $\cC$ plus $I$.  
We output the set of all trees corresponding to each hierarchy. 

\begin{figure}[h!]
\begin{tcolorbox}
\internallinenumbers
    \paragraph{\textsc{PathPreservingTree}$(\mathbb{P}, \widehat{G}, \pi)$:~} 
    This procedure takes three arguments: (1) $\mu$-hierarchical partition~$\mathbb{P}$, (2) cluster $\widehat{G}$ in the hierarchy, and (3) path $\pi$. If $\pi$ is empty, we choose $\pi$ to be an arbitrary vertex in $\widehat{G}$. 
    It returns a spanning tree of $\widehat{G} \cup \pi$ following the steps below:
    \medskip
    \begin{enumerate}
        \item  
        If $\widehat{G}$ contains only one vertex, return $\pi$.
        \item  
        Let $(C_1, C_2)$ be the subcluster pair associated with the cluster $\widehat{G}$. 
        \item  
        Let $\mathcal{C}$ be the clustering of $\widehat{G}$ at $\log_{\mu}(1/\e)$ level below in the hierarchy $\mathbb{P}$.
        \item \label{line:find_preserve_set}  
        Construct preservable set $\mathcal{P}$ with respect to $(\widehat{G}, \mathcal{C}, \pi)$ and a set of cycle-free inter-cluster edges $I$ using \Cref{lm:1+e_distort_preserve_set}, given the pair $(C_1, C_2)$. 
        \item \label{it:recursive} 
        For each cluster $C \in \mathcal{C}$, let $T_C \gets$ \textsc{PathPreservingTree}$(\mathbb{P}, G[C], \pi_C)$. Here, $\pi_C$ is the (unique) path in $\mathcal{P}$ touching $C$.
        \item  \label{it:union} Return $T \coloneqq I \cup \bigcup_{C \in \mathcal{C}} T_C$.
    \end{enumerate}
    
    \paragraph{\textsc{SpanTreeCover}$(G)$:~} 
    This procedure takes a graph $G$ with doubling dimension $\ddim$, and returns a tree cover $\cT$ of $G$:
    \begin{enumerate}
        \item Construct a strong-diameter pair-preserving $(\rho, \mu)$-HPF $\mathfrak{H}$ using \Cref{lm:pair_preserve_HPF}.
        Here, $\rho = \tilde{O}(\ddim^3)$ and $\mu = \tilde{O}(\ddim^3)$.
        \item  For each hierarchy $\mathbb{P} = (H_1, H_2, \ldots H_{i_{\max}}) \in \mathfrak{H}$, construct $\ell \coloneqq \log_\mu(1/\e)$ spanning trees $T^i_{\mathbb{P}} \gets$ \textsc{PathPreservingTree}$(\cH, G_i, \varnothing)$ for $G_i$, for each $i \in [i_{\max} - \ell + 1 : i_{\max}]$,
        where $G_i$ is the cluster containing the whole graph in $H_{i}$.  Let $\mathcal{T}_{\mathbb{P}} \gets \set{T^i_\mathbb{P}}$. 
       \item  Return $\mathcal{T} \coloneqq \bigcup_{\mathbb{P} \in \mathfrak{H}}\mathcal{T}_{\mathbb{P}}$.
    \end{enumerate}

\end{tcolorbox}
\end{figure}

Given \Cref{lm:1+e_distort_preserve_set}, we show that \textsc{PathPreservingTree} returns a spanning tree $T$ of $\widehat{G} \cup \pi$ such that (1) the distance between the $\e$-subcluster pair in $T$ approximates their distance in $\widehat{G}$; and (2) the distance between any two vertices $u$ and $v$ in $T$ is upper bounded by the diameter of cluster $\widehat{G}$ (intuitively, tree $T$ is not too far away from a shortest path tree).

\begin{lemma}\label{lm:1+e-spanning-pathpreser} 
Suppose that $\mathbb{P}$ is an HP from a pair-preserving $(\rho, \mu)$-HPF $\mathfrak{H}$. 
Let $T$ be the tree returned by \textsc{PathPreservingTree}$(\mathbb{P}, \widehat{G}, \pi)$ for some cluster $\widehat{G}$ at level $i$, and let $(C_1,C_2)$ be the subcluster pair associated with $\widehat{G}$. 
Then:
\begin{enumerate}
    \item \label{it:spanning} 
    $T$ is a spanning tree of $\widehat{G} \cup \pi$.
    \item \label{it:1+e_distort_pair} 
    Let $(x, y)$ be any pair of vertices such that $x \in C_1$ and $y \in C_2$. Then, $d_T(x, y) \leq d_{\widehat{G}}(x, y) + 44\cdot \e\mu^i$. 
    \item \label{it:constant_distortion}  
    For every two vertices $u$ and $v$ in $\widehat{G}$, $d_T(u,v) \leq 10\mu^i$.
\end{enumerate}
\end{lemma}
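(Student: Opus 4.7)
The plan is to induct on the level $i$ of $\widehat{G}$, proving all three conclusions simultaneously, together with the side invariant that $\pi \subseteq T$; this invariant is what allows the recursion to propagate, since the recursive call on cluster $C$ is handed $\pi_C$ as its new ``highway''. The base case $i = 0$ is immediate: $\widehat{G}$ is a singleton, which lies on $\pi$ (since $\pi$ touches $\widehat{G}$), and the algorithm returns $\pi$ itself.

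For the inductive step at level $i$, I would first invoke \Cref{lm:1+e_distort_preserve_set} on $(\widehat{G}, \cC, \pi)$ with the associated subcluster pair $(C_1, C_2)$ to obtain the preservable set $\cP$, the cycle-free inter-cluster edges $I$, and the sketch graph $H := \Sketch(\cP, I)$, which is a tree on $V(\widehat{G} \cup \pi)$ satisfying the two distance bounds. The inductive hypothesis then says that each recursively computed $T_C$ is a spanning tree of $G[C] \cup \pi_C$ containing $\pi_C$, with $d_{T_C}(u,v) \le 10\mu^{i-\ell} = 10\e\mu^i$ for all $u,v \in V(C)$; crucially, $10\e\mu^i$ exactly matches the weight assigned to a fake edge of $H$.

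The key step is to identify the algorithm's output $T = I \cup \bigcup_{C \in \cC} T_C$ with the graph one gets from $H$ by replacing, in each cluster $C$, the fake-edge star $S_C$ by the forest $F_C := T_C \setminus \pi_C$. Since $T_C = \pi_C \cup F_C$ and $\pi_C$ already appears in $H$ as part of $\cP$, unioning the $T_C$'s contributes precisely the $F_C$'s on top of the ``real'' skeleton $I \cup \bigcup_{\pi' \in \cP} \pi'$ of $H$. From this identification, property~\ref{it:spanning} follows by a short edge-count plus connectivity argument: $|F_C| = |V(C) \setminus V(\pi_C)| = |S_C|$, so $|E(T)| = |E(H)| = |V(H)| - 1$; meanwhile $T_C = \pi_C \cup F_C$ reconnects precisely the same vertex set that $\pi_C \cup S_C$ connected inside $H$, so global connectivity is preserved; lastly $V(T) = V(\widehat{G} \cup \pi)$ because $\pi$ itself is some $\pi_C$. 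Properties~\ref{it:1+e_distort_pair} and \ref{it:constant_distortion} then fall out by pulling back: take the shortest path between the given endpoints in $H$ and substitute each fake edge (of weight $10\e\mu^i$) by the corresponding in-cluster $T_C$-path (of length at most $10\e\mu^i$ by the inductive diameter bound). Such a substitution cannot increase total length, so $d_T \le d_H$, and parts~\ref{it:preserve_pair} and \ref{it:bdd_diam} of \Cref{lm:1+e_distort_preserve_set} finish the job.

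The step I expect to be the main obstacle is the spanning-tree verification: the algorithm does not literally perform the ``replace $S_C$ by $F_C$'' operation, so the identification with the sketch-graph picture has to be justified carefully, and one must rule out spurious cycles. The crucial ingredients are that $F_C \subseteq E(G[C])$ lies entirely inside $V(C)$ (so it cannot form cycles with the inter-cluster edges $I$), that the $F_C$'s are vertex-disjoint across clusters, and that every path in $\cP$ plays the role of some $\pi_C$ (so no path is orphaned in the union).
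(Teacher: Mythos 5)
Your proof follows essentially the same approach as the paper's: in both cases, one identifies the algorithm's output $T = I \cup \bigcup_C T_C$ with the sketch graph $H$ after replacing each cluster's fake-edge star $S_C$ by the in-cluster forest $F_C = T_C \setminus \pi_C$, proves $T$ is a spanning tree by reducing to $H$ being a tree (Lemma~\ref{lm:1+e_distort_preserve_set}(\ref{it:tree})), and derives the two distance bounds by substituting each fake edge on a path in $H$ with the corresponding $T_C$-path, using the inductive diameter bound $10\e\mu^i$ that exactly matches the fake-edge weight. The paper states the equivalence ``$K \cup I$ is a tree iff $\widehat K \cup I$ is a tree'' as an observation without spelling out the edge-count justification, whereas you supply it explicitly; and you explicitly name the invariant $\pi_C \subseteq T_C$ which the paper uses implicitly in its inductive hypothesis. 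These are expository differences, not substantive ones; the argument is the same.
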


\begin{proof}
We use induction to prove that $T$ is a tree. Specifically, we inductively show that $T$ is a spanning tree of $\widehat{G} \cup \pi$. If $\widehat{G}$ contains only one vertex, $T$ is the path $\pi$, and hence is a spanning tree of the single vertex in $\widehat{G}$ and $\pi$ (since $\pi$ is touching that vertex by definition). 
Let $\cC$ be the clustering of $\widehat{G}$ in $\log_\mu(1/\e)$ levels down the hierarchy $\mathbb{P}$. 
By induction, \EMPH{$T_C$} constructed in step \ref{it:recursive} of \textsc{PathPreservingTree} is a spanning tree of $\widehat{G}[C] \cup P_C$ 
that contains $P_C$. Let \EMPH{$\widehat{T}_C$} be the tree that contains $P_C$ and (fake) edges from each vertex $v \in C\setminus P_C$ to its nearest vertex in $P_C \cap C$, each edge has weight $10\e\mu^i$. Note that \EMPH{$\widehat{T}_C$} is a subgraph of $\Sketch(\mathcal{P}, I)$. 
Recall that the diameter of each cluster in $\cC$ is bounded by $\e\mu^i$.
Define 
\begin{equation*}
K \coloneqq \cup_{C\in \cC} T_C \quad \text{and} \quad
\widehat{K} \coloneqq \cup_{C\in \cC} \widehat{T}_C.
\end{equation*}
Let $I$ be the cycle-free inter-cluster edges from step~\ref{line:find_preserve_set}.
We observe that (a) $K\cup I$ is a tree if and only if $\widehat{K}\cup I$ is a tree,
and (b) $\widehat{K} \cup I$ is exactly $\Sketch(\mathcal{P}, I)$. 
The two observations together with \cref{it:tree} of \Cref{lm:1+e_distort_preserve_set} imply that $K\cup I$, which is $T$, is a tree.

To show that $T$ is a spanning tree of $\widehat{G} \cup \pi$, we use induction on the level of $\widehat{G}$ in $\mathbb{P}$. 
If $\widehat{G}$ contains only one vertex, $T = \pi$ is a spanning tree of $\widehat{G} \cup \pi$ (since $\pi$ touched $\widehat{G}$). 
Assume inductively that each $T_C$ in step $5$ of \textsc{PathPreservingTree} is a spanning tree of $G[C] \cup \pi_C$. 
Since all edges in $I$ are in $\widehat{G}$, 
and the definition of preservable set implies that every cluster in $\cC$ is touched by one path in $\cP$,
$T \coloneqq I \cup \bigcup_{C \in \cC} T_C$ is a spanning subgraph of $\widehat{G} \cup \pi$. 
Combine with the fact that $T$ is a tree, we conclude $T$ is a spanning tree of $\widehat{G} \cup I$.

Next, we prove \cref{it:1+e_distort_pair} and \cref{it:constant_distortion}, again by induction.  
Note that $H\coloneqq \widehat{K} \cup I$ is exactly the sketch graph provided by \Cref{lm:1+e_distort_preserve_set}. 
Fix a pair of vertices $u$ and $v$ in $\widehat{G}$.
Let \EMPH{$\pi_H(u, v)$} be the (unique) path from $u$ to $v$ in $H$.  
Let \EMPH{$\tilde{\pi}_T(u,v)$} be a (possibly non-simple) path from $u$ to $v$ in $T$ obtained as follows: 
for each fake edge $e' = (u',v')$ on $\pi_H(u,v)$ connecting a vertex $v'$ in a cluster $C$ to its closest vertex $u' \in \pi_C$, we replace $e'$ by the path from $v'$ to $u'$ in $T_C$.
Each fake edge $e'$ is the shortest path between $u'$ and $v'$ in $H$ because $H$ is a tree by \Cref{lm:1+e_distort_preserve_set}(\ref{it:tree}), and has length $10 \e \mu^i$ by definition. 
Thus by induction (using \cref{it:constant_distortion}), 
we have $d_{T_C}(u', v') \leq 10\e\mu^i = d_H(u', v')$, which implies that for any $u$ and $v$ in $\widehat{G}$,
\begin{equation} \label{eq:dtub}
d_T(u, v) ~\le~ w(\tilde{\pi}_T(u, v)) ~\leq~ w(\pi_H(u, v)) ~=~ d_H(u, v).
\end{equation}
For \cref{it:1+e_distort_pair}, 
by  \Cref{lm:1+e_distort_preserve_set}(\ref{it:preserve_pair}) and \Cref{eq:dtub}, we have $d_T(x, y) \leq d_H(x, y) \le d_{\widehat{G}}(x, y) + 44\e\mu^i$ for any pair $(x,y)$ from the subcluster pair. 
As for \cref{it:constant_distortion}, by \Cref{lm:1+e_distort_preserve_set}(\ref{it:bdd_diam}), $d_T(u,v) \le d_H(u,v) \le 10\mu^i$, given that $\e = \mu^{-O(\ddim)}$. 
\end{proof}

We next show that the \textsc{SpanTreeCover} algorithm returns a $(1 + \e\mu)$-stretch spanning tree cover of size $2^{O(\ddim)} \cdot \e^{-O(\ddim)}$. 
First we need an observation and a lemma.
We write \textsc{Tree} as a shorthand of \textsc{PathPreservingTree}. 

\begin{observation}
\label{obs:subtree}
For any cluster $C$ at level $i$ of graph $G_j$ such that $i \equiv j \pmod \ell$, \textsc{Tree}$(\mathbb{P}, G_j, \varnothing)$ recursively makes a call to \textsc{Tree}$(\mathbb{P}, G[C], \pi_C)$ for some $\pi_C$. 
Furthermore, let $T$ be the output of \textsc{Tree}$(\mathbb{P}, G_j, \varnothing)$, then the output of \textsc{Tree}$(\mathbb{P}, G[C], \pi_C$) is $T[C] \cup \pi_C$. 
\end{observation}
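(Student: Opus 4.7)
The plan is to prove both parts of the observation by induction on the recursion depth $k$, where the cluster $C$ sits at level $i = j - k\ell$. The base case $k = 0$ is immediate: then $C = G_j$, the call on $C$ is the outermost call itself with $\pi_C = \varnothing$, its output is $T$ by definition, and $T[G_j] \cup \varnothing = T$. For $k \ge 1$, let $C_{\mathrm{par}}$ be the unique level-$(i+\ell)$ ancestor of $C$. By induction the call \textsc{Tree}$(\mathbb{P}, G[C_{\mathrm{par}}], \pi_{C_{\mathrm{par}}})$ is made, and inside this call the algorithm constructs a clustering $\mathcal{C}'$ at $\log_\mu(1/\e) = \ell$ levels below $C_{\mathrm{par}}$, which is precisely the set of level-$i$ descendants of $C_{\mathrm{par}}$ and so contains $C$; the call then recurses on \textsc{Tree}$(\mathbb{P}, G[C], \pi_C)$ with $\pi_C$ the unique path of the parent's preservable set touching $C$. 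This establishes the existence of the recursive call.

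For the output equality $T_C = T[C] \cup \pi_C$, I will combine the inductive identity $T_{C_{\mathrm{par}}} = T[C_{\mathrm{par}}] \cup \pi_{C_{\mathrm{par}}}$ with the algorithm's assembly $T_{C_{\mathrm{par}}} = I' \cup \bigcup_{C'' \in \mathcal{C}'} T_{C''}$ and restrict both sides to $E(G[C])$. On the right, $I'$ has no edge inside $C$ since its edges are inter-cluster with respect to $\mathcal{C}'$; for $C'' \ne C$, the recursive output $T_{C''}$ lies in $G[C''] \cup \pi_{C''}$ with $C'' \cap C = \varnothing$, so its contribution is at most $\pi_{C''} \cap E(G[C])$, and by the vertex-disjointness of paths in the preservable set this is empty unless $\pi_{C''} = \pi_C$, in which case it is contained in $\pi_C \subseteq T_C$. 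On the left, $T[C_{\mathrm{par}}] \cap E(G[C]) = T[C]$, and $\pi_{C_{\mathrm{par}}} \cap E(G[C])$ is nonempty only when $\pi_{C_{\mathrm{par}}} = \pi_C$ by the touching-uniqueness clause of \Cref{def:preservable}, in which case it again lies inside $T_C$. Matching the two sides yields $T[C] = T_C \cap E(G[C])$.

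To finish, I will use the fact that $T_C$ is a spanning tree of $G[C] \cup \pi_C$ and every vertex of $V(\pi_C) \setminus V(G[C])$ has incident edges only along $\pi_C$ in this subgraph; together with an easy auxiliary induction that each recursive output contains its passed-in path (since $\mathcal{P}$ contains $\pi_C$ and the sketch graph retains all edges of $\mathcal{P}$, so $T_C \supseteq \pi_C$), this forces $T_C \setminus E(G[C]) = \pi_C \setminus E(G[C])$. Combining with the previous paragraph closes the induction and gives $T_C = T[C] \cup \pi_C$. The main obstacle is the case analysis for when either the parent path $\pi_{C_{\mathrm{par}}}$ or a sibling path $\pi_{C''}$ threads through $C$, since such situations appear to introduce extra edges into $T_{C_{\mathrm{par}}}$ inside $C$; the preservable set's touching-uniqueness and vertex-disjointness properties force any such path to coincide with $\pi_C$, so the apparent extras are already present in $T_C$ via $\pi_C$.
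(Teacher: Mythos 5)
Your proof is correct, and the inductive scaffolding (induct on the depth $k$ to establish that the recursive call is made, then prove the output identity) matches the paper's. Where you diverge is in how you prove $T_C = T[C] \cup \pi_C$. The paper is terse here: it records that $T_C$ is a spanning tree of $G[C]\cup\pi_C$ containing $\pi_C$, and that $T_C$ is a subtree of the final spanning tree $T$; since $T$ is a tree, any edge of $T[C]$ would close a cycle with the connected subtree $T_C$ if it were absent, which forces $T[C]\subseteq T_C$, and $T_C \subseteq T[C]\cup\pi_C$ follows because $T_C$'s edges are drawn only from $E(G[C])\cup E(\pi_C)$. You instead restrict both representations of $T_{\widehat C}$ (the inductive identity $T[\widehat C]\cup\pi_{\widehat C}$ and the algorithmic assembly $I'\cup\bigcup_{C''} T_{C''}$) to $E(G[C])$ and match them term by term, using the inter-cluster property of $I'$ and the touching-uniqueness and vertex-disjointness of the preservable set to rule out contributions from siblings and the parent path (or fold them into $\pi_C$). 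This is more explicit and a bit longer, but it makes visible the subtlety about the parent's highway or a sibling's path threading through $C$, which the paper's subtree-of-a-tree argument absorbs without comment. Both routes are valid; one stylistic nit is that your phrase about vertices of $V(\pi_C)\setminus V(G[C])$ is a detour --- the identity $T_C\setminus E(G[C]) = \pi_C\setminus E(G[C])$ already follows directly from $T_C\subseteq E(G[C])\cup E(\pi_C)$ together with $\pi_C\subseteq T_C$.
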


\begin{proof}
We use induction on the level $i$. 
For the basis of the induction when $i = j$, \Cref{obs:subtree} trivially holds by \Cref{lm:1+e-spanning-pathpreser}(\ref{it:spanning}). 
For the inductive case,
let $\widehat{C}$ be the cluster containing $C$ at level $i + \ell$. 
By the induction hypothesis, \textsc{Tree}$(\mathbb{P}, G_j, \varnothing)$ recursively makes call to \textsc{Tree}$(\mathbb{P}, G[\widehat{C}], \pi)$ for some $\pi$. 
From Step~\ref{it:recursive} of \textsc{Tree}, \textsc{Tree}$(\mathbb{P}, G[\widehat{C}], \pi)$ makes a recursive call to \textsc{Tree}$(\mathbb{P}, G[C], \pi_C)$ for some path $\pi_C$.

Furthermore, let $T_C$ be the output of \textsc{Tree}$(\mathbb{P}, G[C], \pi_C)$ and $T_{\widehat{C}}$ be the output of \textsc{Tree}$(\mathbb{P}, G[\widehat{C}], \pi)$ for some $\pi$. 
By the induction hypothesis, $T_{\widehat{C}}$ is $T[\widehat{C}] \cup \pi$. 
By \Cref{lm:1+e-spanning-pathpreser}(\ref{it:spanning}), $T_C$ is a spanning tree of $G[C] \cup \pi_C$. (Note that $T_C$ contains $\pi_C$.) 
On the other hand, by Step~\ref{it:union} of \textsc{Tree}, $T_C$ is a subtree of $T_{\widehat{C}}$, and hence is a subtree of $T$. 
Thus, $T_C = T[C] \cup \pi_C$. 
\end{proof}

\begin{lemma}
\label{lem:assigned-good-stretch}
Let $(u,v)$ be any vertex pair in $G$, and
let $\mathbb{P}$ be an HP preserving $(u,v)$ with {scale-$i$} cluster $C$ and subcluster pair $(C_1, C_2)$. 
Let $j \coloneqq i \bmod \ell$. 
Let $T$ be the tree $T_{\mathbb{P}}^j$ constructed by \textsc{SpanTreeCover}.
Then for any two vertices $x \in C_1$ and $y \in C_2$,
\[
\dist_G(x,y) \le \dist_{T}(x,y) \le (1+O(\rho\e)) \cdot \dist_G(x,y).
\]
\end{lemma}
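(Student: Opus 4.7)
The lower bound $\dist_G(x,y) \le \dist_T(x,y)$ is immediate: at the top level, \textsc{SpanTreeCover} invokes \textsc{PathPreservingTree}$(\mathbb{P}, G_j, \varnothing)$ with empty highway, so by \Cref{lm:1+e-spanning-pathpreser}(\ref{it:spanning}) $T$ is a spanning tree of $G$, and any $x$-to-$y$ path in $T$ is also a walk in $G$. The rest of the proof is devoted to the upper bound.

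The plan is to localize the distance computation to the recursive call at cluster $C$, then invoke the distance guarantee of that call. First I would apply \Cref{obs:subtree}: since $C$ sits at level $i$ with $i \equiv j \pmod \ell$, the recursive call \textsc{PathPreservingTree}$(\mathbb{P}, G[C], \pi_C)$ is made during the construction of $T$ for some path $\pi_C$, and its output is exactly the subtree $T[C] \cup \pi_C$ of $T$. Because $x,y \in C_1 \cup C_2 \subseteq C$ and $T[C] \cup \pi_C$ is a connected subtree of the tree $T$, the unique $x$-to-$y$ path in $T$ coincides with the unique $x$-to-$y$ path in $T[C]\cup \pi_C$, giving
\[
\dist_T(x,y) = \dist_{T[C]\cup \pi_C}(x,y).
\]
Now \Cref{lm:1+e-spanning-pathpreser}(\ref{it:1+e_distort_pair}), applied to the recursive call on $G[C]$ whose associated subcluster pair is precisely $(C_1, C_2)$, yields
\[
\dist_{T[C]\cup \pi_C}(x,y) \le \dist_{G[C]}(x,y) + 44\,\e\mu^i.
\]
Finally, because $\mathbb{P}$ preserves $(u,v)$, \Cref{def:pair_HPF}(\ref{it:dist_prev_HPF}) applied to any pair of vertices from $C_1 \times C_2$ gives $\dist_{G[C]}(x,y) = \dist_G(x,y)$, so
\[
\dist_T(x,y) \le \dist_G(x,y) + 44\,\e\mu^i.
\]

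The last step converts the additive error to a multiplicative one using the $\rho$-well-separation of the subcluster pair. Since $\dist_G(C_1,C_2) > \mu^i/\rho$ and each subcluster has (strong) diameter at most $\e\mu^i$, the triangle inequality gives $\dist_G(x,y) \ge \mu^i/\rho - 2\e\mu^i$; for $\e$ small enough that $2\e \le 1/(2\rho)$ (which holds since $\e = \mu^{-\Omega(\ddim)}$ and $\rho = \tilde{O}(\ddim^3)$), this is at least $\mu^i/(2\rho)$. Therefore $44\,\e\mu^i \le 88\,\rho\e \cdot \dist_G(x,y)$, which gives the claimed bound $\dist_T(x,y) \le (1 + O(\rho\e)) \cdot \dist_G(x,y)$.

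The main conceptual step is the first one, namely verifying via \Cref{obs:subtree} that the relevant distance in $T$ is exactly the distance inside the recursively-built subtree for $C$; once that identification is made, both the preservable set lemma and the pair-preserving HPF definition are tailor-made to finish the job, and the additive-to-multiplicative conversion is routine given the well-separation.
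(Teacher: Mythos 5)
Your proof is correct and follows essentially the same route as the paper: localize to the recursive call at cluster $C$ via Observation~\ref{obs:subtree}, apply the additive distortion bound from Lemma~\ref{lm:1+e-spanning-pathpreser}(\ref{it:1+e_distort_pair}), and convert the additive $44\e\mu^i$ term to a multiplicative $O(\rho\e)$ factor using the well-separation of $(C_1,C_2)$. One minor point: your triangle-inequality step deriving $\dist_G(x,y)\ge \mu^i/\rho - 2\e\mu^i$ is unnecessary, since $d_G(C_1,C_2)$ is the minimum distance over all pairs in $C_1\times C_2$, so $d_G(x,y)>\mu^i/\rho$ directly (as the paper uses); your argument still reaches the correct conclusion, just with a slightly looser constant.
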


\begin{proof}
From \Cref{obs:subtree}, \textsc{Tree}$(\mathbb{P}, G_j, \varnothing)$ recursively makes a call to \textsc{Tree}$(\mathbb{P}, G[C], \pi_C)$ for some $\pi_C$, which output $T[C] \cup \pi_C$. 
Since $T[C] \cup \pi_C$ is a subtree of $T$, 
by \Cref{lm:1+e-spanning-pathpreser}(\ref{it:1+e_distort_pair}), for any pair of vertices $x \in C_1$ and $y \in C_2$,
\[
d_T(x,y) ~=~ d_{T[C] \cup \pi}(x, y) ~\leq~ d_{G[C]}(x, y) + 44\cdot \e\mu^i.
\]
Because $\mathbb{P}$ preserves $(u,v)$ and $(x,y) \in C_1 \times C_2$,
$d_G(x, y) = d_{G[C]}(x, y) > \mu^i/\rho$.
Therefore
\begin{align*}
d_T(x,y) ~\leq~ d_{G[C]}(x, y) + 44\cdot \e\mu^i ~\leq~ \Paren{1+ 44\cdot \rho\e} \cdot d_{G}(x, y). 
\end{align*}
\aftermath
\end{proof}

\begin{theorem}
\textsc{SpanTreeCover}$(G)$ produces a $(1 + \e)$-stretch tree cover $\mathcal{T}$ of $G$ with size $ \e^{-O(\ddim)}$. 
\end{theorem}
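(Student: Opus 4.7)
The statement combines a size bound with a stretch bound; both follow by assembling the tools already developed, so the proof is primarily bookkeeping on top of \Cref{lem:assigned-good-stretch}.

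For the size, I would observe that the HPF $\mathfrak{H}$ produced by \Cref{lm:pair_preserve_HPF} has size $\e^{-O(\ddim)}$, and that for each hierarchy $\mathbb{P} \in \mathfrak{H}$ the algorithm constructs exactly $\ell = \lceil \log_{\mu}(1/\e) \rceil = O(\log(1/\e))$ spanning trees (one per residue class modulo $\ell$). Multiplying gives $|\mathcal{T}| = \e^{-O(\ddim)} \cdot O(\log(1/\e)) = \e^{-O(\ddim)}$, since $\log(1/\e)$ is dominated by any $\e^{-\Omega(1)}$ factor.

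For the stretch, I would fix an arbitrary pair $(u,v)$ in $V(G)$ and invoke the pair-preserving property of $\mathfrak{H}$ (\Cref{def:pair_HPF}): there exist $\mathbb{P} \in \mathfrak{H}$ and a cluster $C$ at some level $i$ whose associated subcluster pair $(C_1, C_2)$ satisfies $u \in C_1$ and $v \in C_2$. Setting $j := i \bmod \ell$, the tree $T_{\mathbb{P}}^j \in \mathcal{T}$ is precisely the tree constructed by \textsc{Tree}$(\mathbb{P}, G_j, \varnothing)$. Hence \Cref{lem:assigned-good-stretch} applies to the pair $(u,v) \in C_1 \times C_2$ and gives
\[
  d_G(u,v) \;\le\; d_{T_{\mathbb{P}}^j}(u,v) \;\le\; (1 + O(\rho\e)) \cdot d_G(u,v).
\]

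The final step is to absorb the $\rho$ factor. Since $\rho = \tilde{O}(\ddim^3)$ depends only on $\ddim$ (not on $n$ or $\e$), I would simply rerun \textsc{SpanTreeCover} with a rescaled parameter $\e' := c\,\e/\rho$ for a suitable constant $c$; the resulting stretch becomes $1 + \e$, and the size is $(\e')^{-O(\ddim)} = \e^{-O(\ddim)} \cdot \rho^{O(\ddim)} = \e^{-O(\ddim)}$, where the polylog-in-$\ddim$ overhead from $\rho$ is folded into the $\ddim$-dependent constant in the exponent. The only subtle point to check is that the preservable-set lemma's precondition $\e = \mu^{-O(\ddim)}$ still holds after rescaling, which is immediate since $\mu$ and $\rho$ depend only on $\ddim$. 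There is no real technical obstacle: all the heavy lifting lives in \Cref{lm:1+e_distort_preserve_set} and its consequences \Cref{lm:1+e-spanning-pathpreser,lem:assigned-good-stretch}.
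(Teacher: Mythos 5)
Your proposal matches the paper's proof essentially step for step: bound $|\mathcal{T}| = \ell \cdot |\mathfrak{H}| = \e^{-O(\ddim)}$, invoke the pair-preserving property of $\mathfrak{H}$ to find $\mathbb{P}$ and a cluster $C$ whose associated subcluster pair separates $u$ and $v$, apply \Cref{lem:assigned-good-stretch} to the corresponding tree $T_{\mathbb{P}}^j$ to get stretch $1+O(\rho\e)$, and rescale $\e$ to absorb the $\rho$ factor. The only cosmetic difference is that you spell out the rescaling and precondition check a bit more explicitly than the paper's terse "scaling $\e$ then proves the theorem."
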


\begin{proof}
By \Cref{lm:1+e-spanning-pathpreser}, $\mathcal{T}$ is a set of spanning trees of $G$. 
The size of $\mathcal{T}$ is equal to the size of $\mathfrak{H}$ times $\ell \coloneqq \Ceil{\log_{\mu}(1/\e)}$ since each HP produces $\ell$ trees. 
The size of $\mathfrak{H}$ is $\e^{-O(\ddim)} \cdot 2^{\tilde{O}(\ddim)}$. 
Thus, $|\mathcal{T}| = \ell \cdot |\mathfrak{H}| = \e^{-O(\ddim)}$ assuming $\e^{-1} \ge \ddim$.
We now show that for each pair  of vertices $(u,v)$ in $G$, there exists a tree $T$ in $\mathcal{T}$, such that  $d_T(u,v) \le (1 + O(\rho\e)) \cdot d_G(u, v)$; scaling $\e$ then proves the theorem.

By \Cref{def:pair_HPF} and \Cref{lm:pair_preserve_HPF}, there is an HP $\mathbb{P}$ in $\mathfrak{H}$ preserving $(u,v)$ with a cluster $C$ at level $i$ such that the associated ($\rho$-well-separated) subcluster pair $(C_u,C_v)$ contains $u$ and $v$ respectively, and 
$d_G(u, v) = d_{G[C]}(u, v) > \mu^i/\rho$.
Let $\mathbb{P} = (P_1,  \ldots P_{i_{\max}})$ and $j$ be the integer in $(i_{\max} - \ell, i_{\max}]$ such that $j \equiv i \pmod \ell$. 
Let $T$ be the output tree of \textsc{Tree}$(\mathbb{P}, P_j, \varnothing)$. 
By \Cref{lem:assigned-good-stretch}, 
for any two vertices $x \in C_u$ and $y \in C_v$, in particular $u \in C_u$ and $v \in C_v$,
\[
\dist_G(u,v) \le \dist_{T}(u,v) \le (1+O(\rho\e)) \cdot \dist_G(u,v).
\]
This proves the theorem. \qed

\end{proof}
Here, we note the conditions we need for $\e$. In \Cref{lm:pair_preserve_HPF}, we require $\e^{-1} \geq \Tilde{\Omega}(\ddim^{3})$ and in \Cref{lm:1+e_distort_preserve_set}, we have $\e = \mu^{-O(\ddim)}$. Combining these conditions, we find that $\e = 2^{-\Tilde{O}(\ddim)}$ since $\mu = \Tilde{O}(\ddim^3)$. 

For \Cref{thm:main} (as well as \Cref{thm:spanning-main}, \Cref{thm:routing} and \Cref{thm:oracle}) to hold, we scale down $\e$, which results in a additive constant loss in the exponent of $\e$. Specifically, we set $\delta = \e \cdot 2^{-\Tilde{O}(\ddim)}$, and apply algorithm \textsc{SpanTreeCover} for stretch $(1 + \delta)$ to achieve the desired result.
\section{Preservable Set}
\label{S:preservable-construct}

\subsection{Preservable set construction}
\label{sssec:preservable_set}

We now describe how to construct a preservable set $\cP$ with respect to $(\widehat{G}, \mathcal{C}, \pi)$ and a corresponding set of inter-cluster edges $I$ (which of each has its two endpoints in different clusters) such that given an $\e$-subcluster pair $(C_1, C_2)$ in $\widehat{G}$, the distance between any $x\in C_1$ and $y \in C_2$ in 
$\mathrm{sketch}(\cP, I)$ is a $(1 + O_{\ddim}(\e))$-approximation of their distance in $\widehat{G}$. 
In the next subsection we prove the correctness of the construction by establishing the three properties of $H$ in the preservable set lemma (\Cref{lm:1+e_distort_preserve_set}).

\paragraph{Algorithm.}
For any path $\pi'$, denote by \EMPH{$\cC[\pi']$} the set of clusters touched by $\pi'$ in $\cC$. 
For any set of paths~$\cS$,  denote by $\cC[\cS]$ the set of clusters touched by any path in $\cS$. 
We construct a preservable set $\cP$ and a set $I$ of inter-cluster edges that prove \Cref{lm:1+e_distort_preserve_set} as follows.

\paragraph{Step~1.} 
Initialize $\cP \gets \{\pi\}$. Let $\EMPH{$P_{xy}$} = (x_1, x_2, \ldots x_k)$ be a shortest path from $x$ to $y$ in $\widehat{G}$. 

\begin{itemize}
\item
If $\cC[P_{xy}] \cap \cC[\pi] = \varnothing$ (see \Cref{fig:non_intersect}), let $P' = (x'_1, x'_2, \ldots)$ be a shortest path from $x$ to $\pi$ in $G$. 
Let \EMPH{$x'_{j_2}$} 
be the first vertex in $P'$ that is in (some cluster in) $\cC[\pi]$ and \EMPH{$x'_{j_1}$} be the last vertex in $P'[x'_1:x'_{j_2}]$ that is in (some cluster in) $\cC[P_{xy}]$. 
We add the path $P'' = (x'_{j_1 + 1}, x'_{j_1 + 2}, \ldots x'_{j_2 - 1})$ to $\cP$ and then add $(x'_{j_2 - 1}, x'_{j_2})$ to $I$. 
Then we add $P_{xy}$ to $\cP$ and $(x'_{j_1}, x'_{j_1 + 1})$ to $I$.

\begin{figure}[t]
    \centering
    \includegraphics[width=8cm] {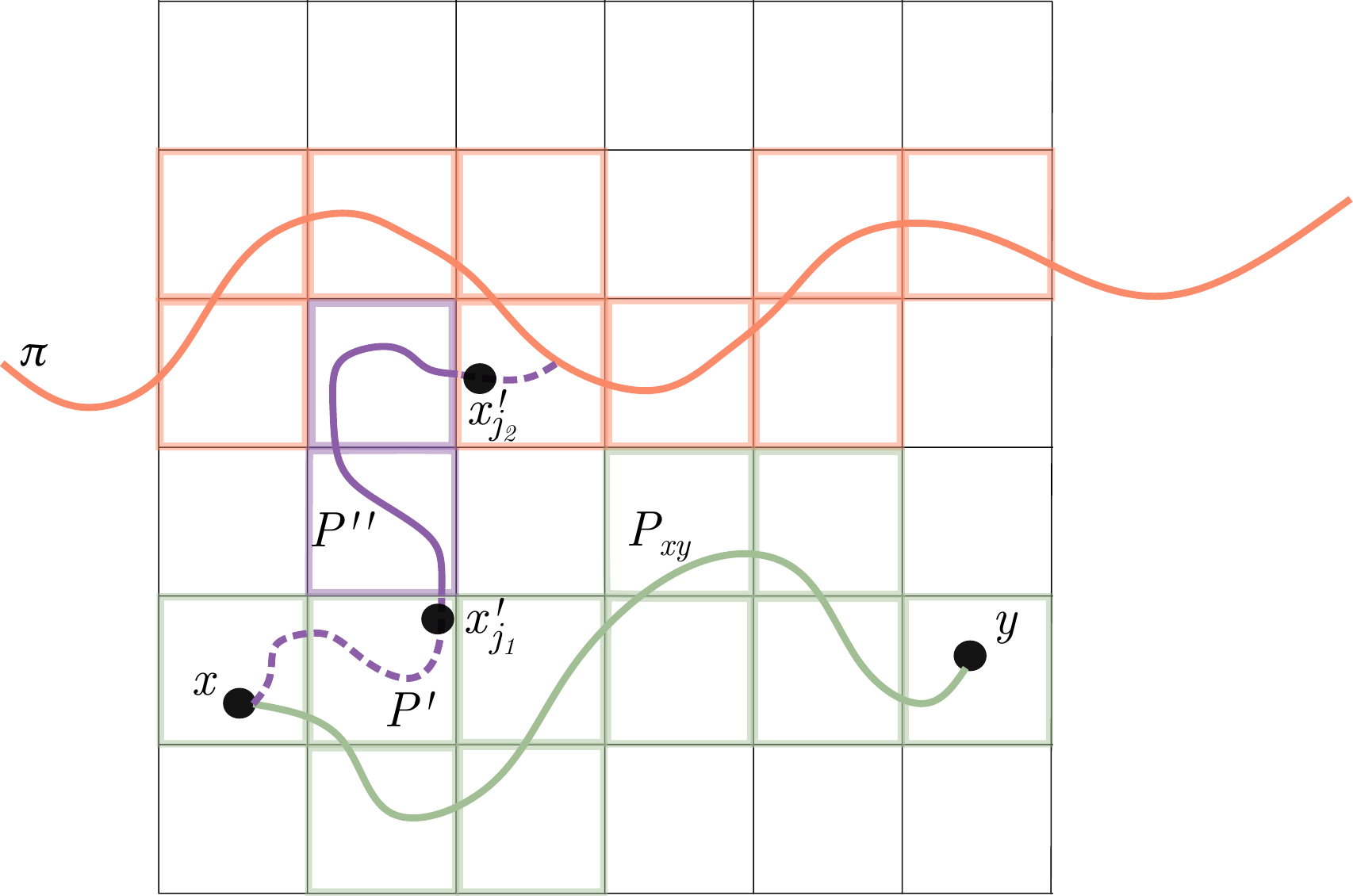}
    \caption{$\cC[\pi]$ and $\cC[P_{xy}]$ do not intersect. In this case, we add $P'$ (the solid purple path) to $\cP$. Note that $P''$ is a subpath of the shortest path $P'$ from $x$ to $\pi$.}
    \label{fig:non_intersect}%
\end{figure}

\item
Otherwise $\cC[P_{xy}] \cap \cC[\pi] \neq \varnothing$ (see \Cref{fig:intersect}), let $x_{j_3}$ be the first vertex in $P_{xy}$ that is in $\cC[\pi]$ and let $x_{j_4}$ be the last vertex in $P_{xy}$ that is in some cluster touched by either $\pi$ (see \Cref{fig:intersect} (a)) or $P_{xy}[x_1:x_{j_3 - 1}]$ (see \Cref{fig:intersect} (b)). Add $(x_1, x_2, \ldots, x_{j_3 - 1})$ and $(x_{j_4 + 1}, x_{j_4 + 2}, \ldots, x_k)$ to $\cP$, then add $(x_{j_3 - 1}, x_{j_3})$ and $(x_{j_4}, x_{j_4 + 1})$ to $I$.
\end{itemize}

\begin{figure}[h!]
\small
\centering
\includegraphics[width=8cm] {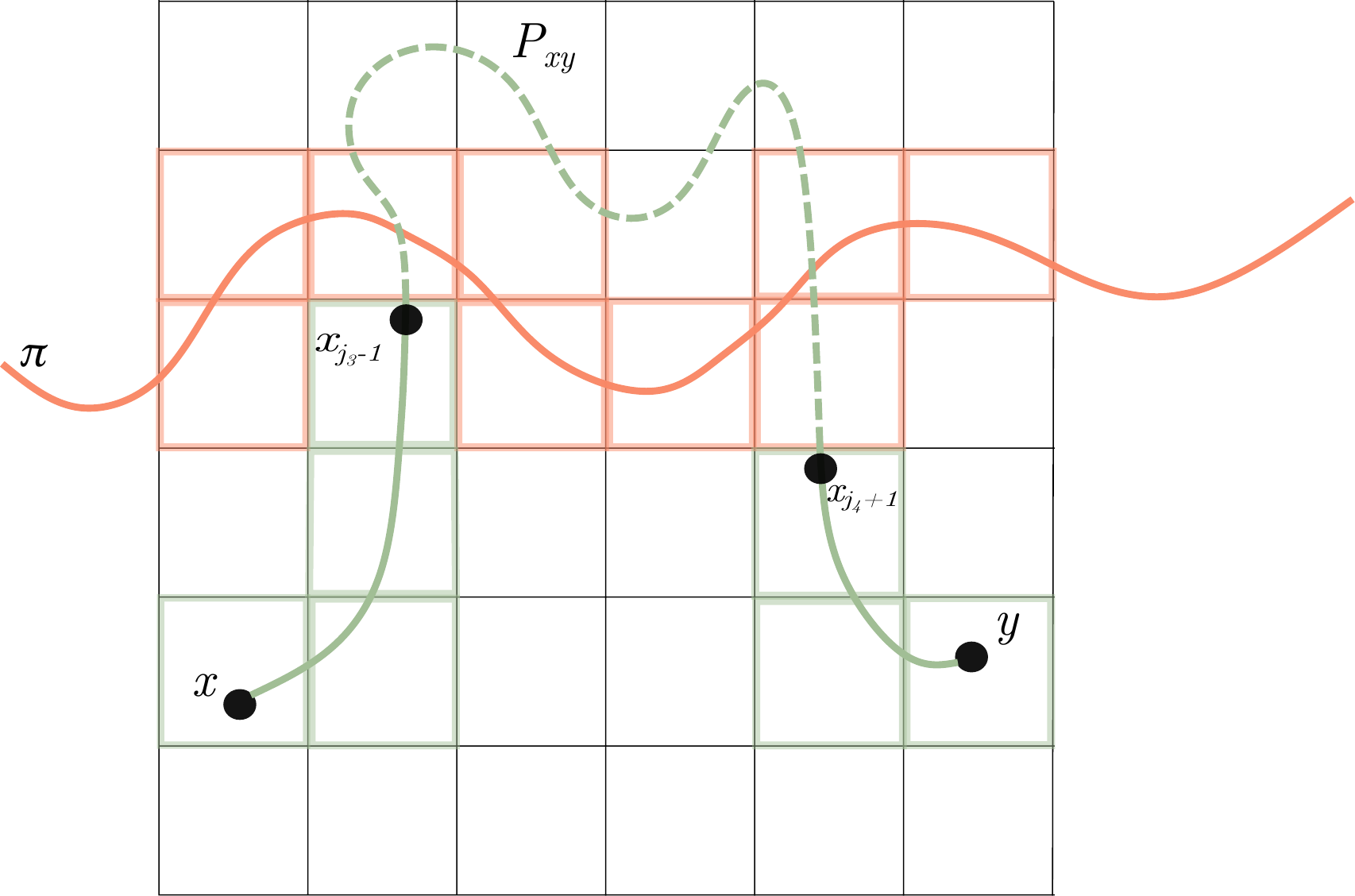}
\includegraphics[width=8cm] {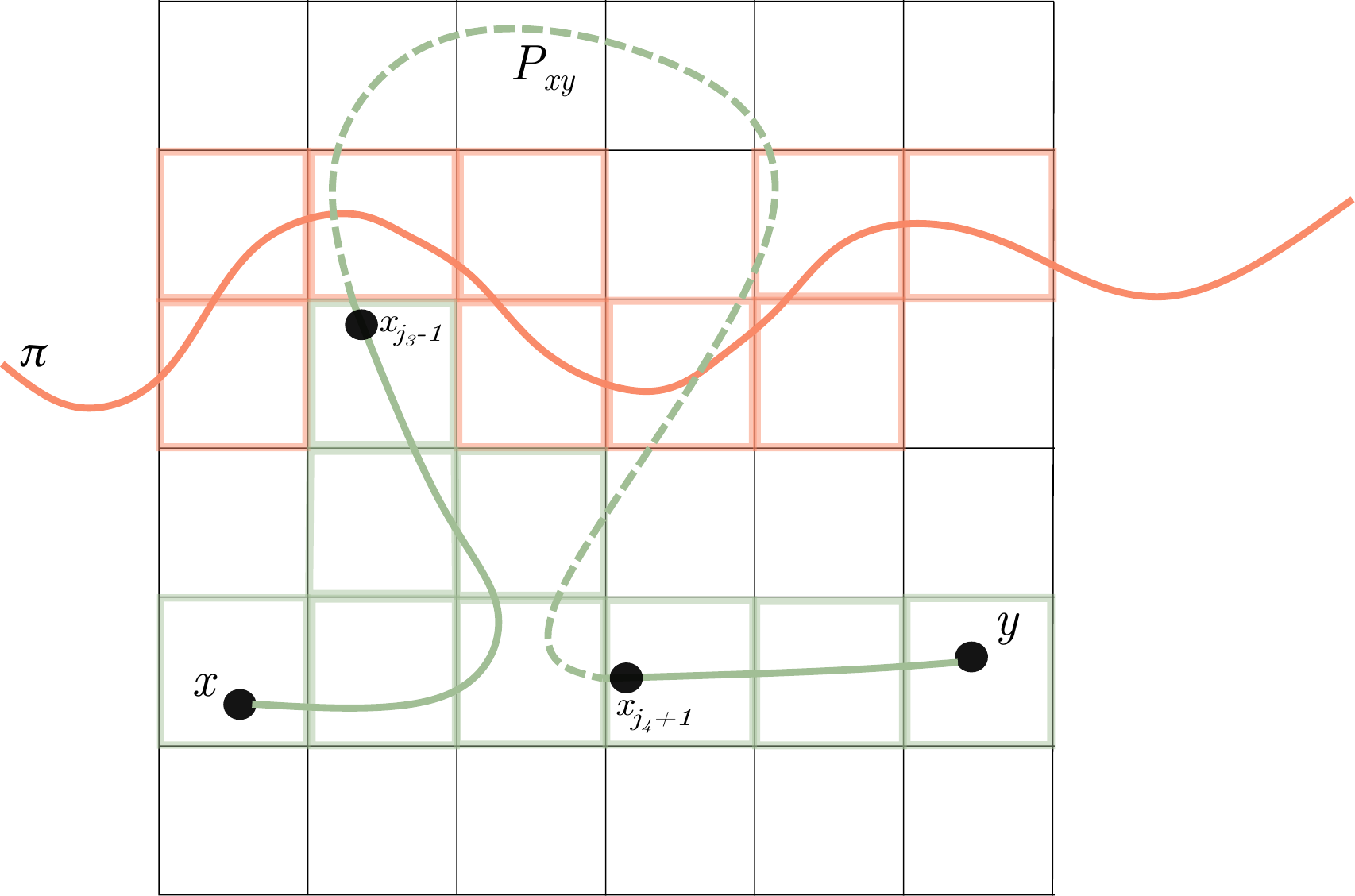}
\caption{$\cC[\pi]$ and $\cC[P_{xy}]$ intersect. In this case, we add $P_{xy}[x_1:x_{j_3 - 1}]$ and $P_{xy}[x_{j_4 + 1:x_k}]$ to $\cP$. 
The left figure indicates the case that $x_{j_4}$ (the vertex before $x_{j_4 + 1}$ in $P_{xy}$) is in some cluster in $\cC[\pi]$.  
The right figure indicates the case that $x_{j_4}$ is in some cluster in $\cC[P_{xy}[x_1:x_{j_3 - 1}]]$.}
\label{fig:intersect}
\label{fig:detour-code}
\end{figure}

\paragraph{Step~2.} 
Given a set of paths $\cP$ and inter-cluster edges $I$ from the last step, we will add include more paths via the following \EMPH{gluing procedure}: to add a path $Q = (v_1, v_2, \ldots )$ to $\cP$, find the first vertex $v_j$ in $Q$ that is in $\cC[\cP]$; then, add $(v_1, v_2, \ldots v_{j - 1})$ to $\cP$ and add the edge $(v_{j - 1}, v_j)$ to $I$. 
(If there is no such vertex $v_j$, i.e., $Q$ does not touch any cluster in $\cC[\cP]$, we simply add $Q$ to $\cP$.)

The second step runs in $\log_\mu(1/\e) + 1$ iterations.
Recall that $\widehat{G}$ is at level $i$. 
Let $\mathbb{P}_j$ be the partition of $\widehat{G}$ at level $j \in [i - \ell + 1 : i]$. 
For $j$ from $i$ down to $i - \ell$, pick an arbitrary representative vertex $r(C)$ from each cluster $C$ in $\mathbb{P}_{j}$, let \EMPH{$C'$} be the parent cluster of $C$ in $\mathbb{P}_{j+1}$ and $r'$ be the corresponding representative vertex. 
We find a shortest path from $r_C$ to $r'$ in $C'$. (For the first iteration when $j = i$), we simply find the shortest path from $r_C$ to $\pi$.) 
We denote this shortest path \EMPH{$P_C$}.  
Then, we include $P_C$ to $\cP$ via the gluing procedure.
This concludes the description of the algorithm for constructing $\cP$ and $I$.

\subsection{Preservable set analysis: Proof of \Cref{lm:1+e_distort_preserve_set}}

\paragraph{Item~1: \boldmath{$H$} is a tree.}
Consider the $\cP$ and $I$ constructed from the previous subsection, and let $H \coloneqq \mathrm{sketch}(\cP, I)$. 
We show that $H$ satisfies the items of \Cref{lm:1+e_distort_preserve_set}.
We first show that $H$ is indeed a cycle-free sketch graph (and in fact, a tree).

\begin{claim}
\label{clm:pi}
    $\cP$ is a preservable set and $I$ is a set of inter-cluster edges.  As a result, $H$ is well-defined.
\end{claim}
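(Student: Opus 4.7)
The plan is to verify the three conditions of \Cref{def:preservable}---that every cluster in $\cC$ is touched by exactly one path of $\cP$, that $\pi \in \cP$, and that each $\pi_C$ is a shortest path in $\widehat G[C] \cup \pi_C$---together with the property that every edge of $I$ has endpoints in distinct clusters of $\cC$. The second condition holds trivially because $\pi$ is inserted into $\cP$ at the start of Step~1 and is never removed. The remaining conditions and the inter-cluster property of $I$ I would handle by induction over the sequence of path additions in Steps~1 and~2, keeping track of a single invariant.

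The invariant I would maintain is that, at every stage, the paths currently in $\cP$ are pairwise vertex-disjoint, the touched-cluster sets $\cC[P]$ for $P \in \cP$ are pairwise disjoint, and every edge already placed in $I$ connects a vertex in a cluster that belongs to $\cC[\cP]$ at the moment of insertion to a vertex in a cluster not yet in $\cC[\cP]$. In Step~1, the indices $j_1, j_2, j_3, j_4$ are chosen precisely so that the newly added subpath lies strictly between a last ``already touched'' vertex and a first ``newly entering'' one in the appropriate sense, which preserves the disjointness of the touched-cluster sets and places the new $I$-edge across the old/new boundary. In Step~2, the gluing procedure truncates the candidate path $Q$ immediately before its first entry into $\cC[\cP]$, so the newly added prefix touches only clusters not previously in $\cC[\cP]$, and the single edge thrown into $I$ crosses the same boundary. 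The invariant immediately yields the uniqueness half of the first condition and the inter-cluster property of $I$, since ``old'' and ``new'' are disjoint labels on clusters. For the existence half of the first condition, I would argue that after the final iteration of Step~2 at level $j = i - \ell$, every $C \in \cC$ is touched: when the representative path $P_C$ is glued, starting from $r(C) \in C$, either $C$ was already in $\cC[\cP]$, or the gluing adds a nonempty prefix of $P_C$ containing $r(C)$ to $\cP$, which touches $C$.

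The third condition (shortest-path property) follows from the observation that every path $P$ inserted into $\cP$ is, by construction, a (sub)path of a shortest path in some ambient subgraph $G'$---in Step~1, $G' = \widehat G$ or a shortest-path detour inside $G$, and in Step~2, $G'$ is the induced subgraph of an ancestor cluster $C' \supseteq C$ in $\mathbb{P}$. In each case, $\widehat G[C] \cup P \subseteq G'$, so any path between the endpoints of $P$ inside $\widehat G[C] \cup P$ is also a path in $G'$ and therefore has length at least $|P|$; since truncation preserves the shortest-path property between the new endpoints, the third condition follows. I expect the main technical obstacle to lie in the Step~1 case analysis when $\cC[P_{xy}] \cap \cC[\pi] \neq \varnothing$: one must check in both subcases of \Cref{fig:intersect} that the two added subpaths of $P_{xy}$, together with $\pi$, have pairwise disjoint touched-cluster sets, which requires carefully unpacking the maximality of $x_{j_4}$ with respect to the two possible absorbing paths.
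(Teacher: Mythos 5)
Your proposal is correct and follows essentially the same approach as the paper's proof: verify the three conditions of \Cref{def:preservable} plus the inter-cluster property by tracking, across Steps~1 and~2, that each newly added path touches only clusters not yet in $\cC[\cP]$, and that each $I$-edge straddles the old/new boundary. The paper's proof is a bit terser (it asserts cluster-disjointness for the Step~1 subpaths and says ``$I$ is a set of inter-cluster edges by construction''), whereas you package the same reasoning as an explicit invariant, and you spell out the shortest-path property (condition~3) in more detail via the supergraph-monotonicity argument. One small imprecision to watch: your invariant on $I$ as literally phrased --- ``connects a vertex in a cluster in $\cC[\cP]$ at the moment of insertion to a vertex in a cluster not yet in $\cC[\cP]$'' --- is false as written, because in the algorithm each $I$-edge is added \emph{after} its companion path, at which point \emph{both} endpoint clusters already lie in $\cC[\cP]$. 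You want instead to refer to the state of $\cC[\cP]$ just \emph{before} the concurrent path-addition; with that interpretation the invariant holds and yields exactly what you want (the two endpoints lie in distinct clusters). This is a wording fix, not a gap in the argument.
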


\begin{proof}
    We first show \Cref{it:cluster_disjoint} of \Cref{def:preservable}. 
    Note that, to show \Cref{it:cluster_disjoint}, it is sufficient to point out that each cluster is touched by at least one path in $\cP$ and there are no two paths in $\cP$ touching the same cluster (we refer to as \EMPH{cluster-disjoint}). By the last iteration of the second step, each cluster is touched by at least one path. Thus, we only need to show that all paths in $\cP$ are cluster-disjoint, as this condition also implies paths in $\cP$ are vertex disjoint. 
    In the first step, if $\cC[P_{xy}] \cap \cC[\pi] = \varnothing$ are disjoint, we add both $P_{xy}$ and $\pi$ to $\cP$. Then, we add that path $P''$ to $\cP$. Observe that the path $P''$ does not touch any cluster in $\cC[P_{xy}]$ and $\cC[\pi]$. 
    Then, $\cC[P''], \cC[\pi]$ and $\cC[P_{xy}]$ are disjoint. Similarly, when $\cC[P_{xy}] \cap \cC[\pi] \neq \varnothing$, the two paths added to $\cP$ are cluster-disjoint. 
    In the second step, every path added by the gluing procedure is cluster disjoint with all the paths in $\cP$. Using induction, we conclude the output $\cP$ is cluster-disjoint.

    Observe that $\cP$ contains $\pi$, and hence satisfies \Cref{it:pi} of \Cref{def:preservable}. 
    In any step, the path $P$ touching cluster $C$ is always a shortest path in a supergraph containing $P$ and $G[C]$. Thus, the \Cref{it:shortest_path} also holds. 
    $I$ is a set of inter-cluster edges by construction.
\end{proof}

Now we show that \Cref{lm:1+e_distort_preserve_set}(\ref{it:tree}) holds.

\begin{claim}
    $H$ is a tree.
\end{claim}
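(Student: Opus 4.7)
My plan is to prove that $H$ is a tree by induction on the construction of $\cP$ and $I$, maintaining the invariant that after each addition, the \emph{partial} sketch graph (restricted to touched vertices, together with the fake edges for the currently-touched clusters and the inter-cluster edges added so far) is a tree. Since every cluster in $\cC$ is eventually touched by the final pass of Step~2 (where we process clusters in $\mathbb{P}_{i-\ell} = \cC$), this invariant at the end of the construction yields that $H$ is a tree over all of $V(\widehat{G} \cup \pi)$.

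For the base case, I start with $\cP = \{\pi\}$ and $I = \varnothing$. The partial sketch graph is $\pi$ together with, for each cluster $C \in \cC[\pi]$, a fake edge from every vertex $v \in C$ to the closest vertex in $\pi \cap C$. Each such $v \notin \pi$ has exactly one fake edge; these form a ``forest of stars'' rooted on $\pi$, so the resulting graph is a tree. For Step~1, I go case by case: in the non-intersecting case, the paths $P_{xy}$ and $P''$ are disjoint from previously touched clusters (by the choice of $x'_{j_1}$ and $x'_{j_2}$ as the last/first witnesses), so each new path is attached to the existing structure by exactly one inter-cluster edge, preserving treeness. The intersecting case is analogous: $P_{xy}[x_1 : x_{j_3-1}]$ and $P_{xy}[x_{j_4+1} : x_k]$ are chosen to avoid the already-touched clusters, and each is attached via a single edge of $I$.

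For the inductive step covering Step~2, I rely on the gluing procedure: when we add a path $Q = (v_1, v_2, \ldots)$, we take only the prefix up to the first vertex $v_j \in \cC[\cP]$. By the choice of $v_j$, the prefix $(v_1,\ldots,v_{j-1})$ lies entirely in clusters \emph{not} in $\cC[\cP]$, so these are freshly-touched clusters. The single edge $(v_{j-1}, v_j) \in I$ connects the prefix to the existing tree, and the fake edges within each newly-touched cluster form a star rooted on the prefix. Thus, both the prefix and the new fake edges contribute a tree attached to the existing tree through a single inter-cluster edge, preserving the invariant.

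The main obstacle I anticipate is ensuring that no subtle cycle arises from the interaction between fake edges in a newly-touched cluster $C$ and pre-existing fake/path edges, particularly if a shortest path $Q$ happened to cross multiple new clusters in an unexpected pattern; this requires carefully justifying that $\pi_C$ (the unique path in $\cP$ touching $C$) is the prefix we just added and that no vertex of $C$ was reached before. The Claim established just above (that $\cP$ is a preservable set with vertex-disjoint paths and that each cluster is touched at most once while being constructed) is precisely what guarantees this, so the induction should close cleanly once I formalize the ``touched vertex set grows monotonically'' bookkeeping.
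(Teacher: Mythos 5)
Your proposal is correct and follows essentially the same approach as the paper: both arguments track the sketch graph as it is built alongside $\cP$ and $I$, observe that the base component (path $\pi$ plus its star of fake edges) is a tree, and then show that each new path $Q$ together with its fake edges forms a fresh component—by the cluster-disjointness from the preceding claim—attached to the existing tree by exactly one inter-cluster edge. The paper phrases this in terms of inter-cluster edges always merging two distinct components, while you phrase it as an invariant that the partial sketch graph remains a tree; these are the same inductive argument, and your noted concern about fake-edge cycles is resolved exactly as you anticipated, by the cluster-disjointness established in \Cref{clm:pi}.
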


\begin{proof}
Imagine the sketch graph $H$ is built simultaneously with $\cP$ and $I$ during the preservable set construction.
We argue that every time an edge is added into $I$, we connect two disjoint components in $H$ and thus $H$ remains a tree.

The graph $H$ starts with a single path $\pi$ in $\cP$ and fake edges from every vertex in every cluster touched by $\pi$, which is obviously cycle-free because the clusters are disjoint, and connected because every vertex can reach path $\pi$ by a fake edge.  Thus $H$ started as a tree.
In each step of the algorithm we add a path $Q$ where the clusters $\cC[Q]$ touched by $Q$ is disjoint from any clusters touched by any paths in the existing $\cP$, so that $\cP$ remains cluster-disjoint.
This means that in $H$ we create a new connected component in $H$ for $Q$, consisting of the path $Q$ itself and all the fake edges from every vertex in every cluster touched by $Q$.
Finally we add an edge into $I$, connecting $Q$ to some vertex in a cluster touched by the existing $\cP$.
Such an edge connects two components in $H$ into one, and thus keeping $H$ cycle-free.
Since $H$ was connected before adding the component associated with $Q$, the updated $H$ remains connected after the inter-cluster edge is inserted.
As a result, the final sketch graph $H$ must be cycle-free and connected, thus a tree.
\end{proof}

\paragraph{Item~2: \boldmath{$H$} preserves subcluster pair distance.}

Before proving  \Cref{lm:1+e_distort_preserve_set}(\ref{it:preserve_pair}), we show the following useful observation.
\begin{observation}
    \label{obs:same_clust_dist}
    For any two vertices $u$ and $v$ in a same cluster $C \in \cC$, $d_H(u, v) \leq 21\e\mu^i$.   
\end{observation}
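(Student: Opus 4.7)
The plan is to exhibit an explicit short $u$-$v$ path in $H$ using exactly two fake edges together with a subpath of $\pi_C$.

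First I would invoke \Cref{it:cluster_disjoint} of \Cref{def:preservable} (together with \Cref{clm:pi}), which guarantees that cluster $C$ is touched by a unique path $\pi_C\in\cP$; in particular $\pi_C\cap C\neq\varnothing$, so the notion of ``closest vertex in $\pi_C\cap C$'' used by the fake-edge construction is well defined for every vertex of $C$. Let $u'$ and $v'$ be the closest vertices in $\pi_C\cap C$ to $u$ and $v$ respectively. By \Cref{def:sketch}(2), the edges $(u,u')$ and $(v,v')$ are fake edges of $H$, each of weight $10\e\mu^i$ (using $D=\mu^i$ here).

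Next I would bound $d_H(u',v')$ by going along $\pi_C$. By \Cref{it:shortest_path} of \Cref{def:preservable}, $\pi_C$ is a shortest path in $\widehat{G}[C]\cup\pi_C$, so for the two endpoints $u',v'\in C$ on $\pi_C$ we have
\[
d_H(u',v') ~\le~ d_{\pi_C}(u',v') ~=~ d_{\widehat{G}[C]\cup\pi_C}(u',v') ~\le~ d_{\widehat{G}[C]}(u',v').
\]
Since $C$ has strong diameter at most $\e\mu^i$ (recall that $\mathcal{C}$ is the clustering of $\widehat{G}$ at $\ell$ levels below, so each $C\in\mathcal{C}$ has strong diameter $\e\mu^i$), the right-hand side is at most $\e\mu^i$.

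Combining via the triangle inequality along the path $u\to u'\to v'\to v$ in $H$,
\[
d_H(u,v) ~\le~ 10\e\mu^i + \e\mu^i + 10\e\mu^i ~=~ 21\e\mu^i,
\]
which is the desired bound. No obstacle is anticipated; the only thing one must be careful about is that $\pi_C\cap C$ is nonempty, which is exactly guaranteed by the definition of a preservable set and has already been verified in \Cref{clm:pi}.
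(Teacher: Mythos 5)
Your proof is correct and follows essentially the same route as the paper's: identify $u'$ and $v'$ as the closest vertices in $\pi_C \cap C$ to $u$ and $v$, use the two fake edges of weight $10\e\mu^i$, bound $d_H(u',v')$ by $\e\mu^i$ via the shortest-path property of $\pi_C$ combined with the strong-diameter bound on $C$, and finish with the triangle inequality. The only difference is cosmetic: you spell out the chain $d_H(u',v') \le d_{\pi_C}(u',v') = d_{\widehat{G}[C]\cup\pi_C}(u',v') \le d_{\widehat{G}[C]}(u',v') \le \e\mu^i$ in a bit more detail than the paper does, and explicitly cite the definition items involved.
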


\begin{proof}
    Recall that $\pi_C$ is the path in $\cP$ touching $C$. Define $u'$ be the closest vertex to $u$ in $\pi_C \cap C$; define $v'$ similarly. 
    By the construction of $\mathrm{sketch}(\cP, I)$, $d_H(u, u') = d_H(v, v') = 10\e\mu^i$. Since $\pi_C$ is a shortest path in $G[C] \cap \pi_C$ and both $u'$ and $v'$ are in $C$, $d_{\pi_C}(u', v') \leq \mathrm{diam}(G[C]) \leq \e\mu^i$. 
    This shows $d_H(u', v') \leq d_{\pi_C}(u', v') \leq \e\mu^i$ as $\pi_C$ is a subgraph of $H$. Using the triangle inequality, we obtain:
    \begin{equation*}
        d_H(u, v) \leq d_H(u, u') + d_H(u', v') + d_H(v', v) \leq 21\e\mu^i.
    \end{equation*}
    \aftermath
\end{proof}

\noindent We now establish \Cref{it:preserve_pair}.

\begin{claim}
    $d_H(x, y) \leq d_{\widehat{G}}(x, y) + 44\e\mu^i$ for any pair of vertices $x \in C_1$ and $y \in C_2$.
\end{claim}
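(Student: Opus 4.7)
The plan is to route from $x$ and $y$ directly to the paths in $\cP$ touching their respective subclusters via the fake edges, and then bound the remaining distance inside $H$ by exploiting the shortest-path structure of the paths in $\cP$ together with the inter-cluster edges in $I$. Let $\pi_{C_1}$ and $\pi_{C_2}$ be the unique paths in $\cP$ touching $C_1$ and $C_2$ respectively (which exist and are unique by the definition of a preservable set), and let $p_x \in \pi_{C_1} \cap C_1$ and $p_y \in \pi_{C_2} \cap C_2$ be the vertices to which the fake edges from $x$ and $y$ attach in $H$. By the sketch-graph construction
\[
d_H(x,y) \;\le\; 10\e\mu^i + d_H(p_x, p_y) + 10\e\mu^i,
\]
and since $p_x,x \in C_1$ and $p_y,y \in C_2$ have strong diameter at most $\e\mu^i$, the triangle inequality in $\widehat G$ gives $d_{\widehat G}(p_x, p_y) \le d_{\widehat G}(x,y) + 2\e\mu^i$. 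It therefore suffices to prove $d_H(p_x, p_y) \le d_{\widehat G}(p_x, p_y) + 22\e\mu^i$, which combined with the reduction would yield the claimed additive slack of $44\e\mu^i$.

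Next, I would case-analyze according to Step~1 of the construction, using the shortest path $P_{xy} = (x_1, \dots, x_k)$ with $x_1 \in C_1, x_k \in C_2$. In Case~A ($\cC[P_{xy}] \cap \cC[\pi] = \varnothing$), the entire $P_{xy}$ is placed in $\cP$, so $\pi_{C_1}$ and $\pi_{C_2}$ both coincide with $P_{xy}$ and both $p_x, p_y$ lie on $P_{xy}$; the subpath between them is present in $H$ and, because $P_{xy}$ is a shortest path, has length exactly $d_{\widehat G}(p_x, p_y)$. In Case~B, only the prefix $P_{xy}[x_1 : x_{j_3-1}]$ and suffix $P_{xy}[x_{j_4+1} : x_k]$ are in $\cP$, together with the inter-cluster edges $(x_{j_3-1}, x_{j_3})$ and $(x_{j_4}, x_{j_4+1})$ in $I$. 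I would exhibit an explicit walk from $p_x$ along the prefix to $x_{j_3-1}$, across the first inter-cluster edge to $x_{j_3}$, along a detour---two fake edges of weight $10\e\mu^i$ each plus a subpath of $\pi$ between vertices lying within the $\e\mu^i$-diameter clusters containing $x_{j_3}$ and $x_{j_4}$ when $x_{j_4}$ sits in a cluster touched by $\pi$, or an analogous detour through the prefix in the other sub-case---across the second inter-cluster edge to $x_{j_4+1}$, and along the suffix to $p_y$. Using that $\pi$ is a shortest path in $\widehat G \cup \pi$, the detour contributes at most $d_{\widehat G}(x_{j_3}, x_{j_4}) + 22\e\mu^i$, and the prefix and suffix pieces telescope along $P_{xy}$ with this detour to give $d_H(p_x, p_y) \le d_{P_{xy}}(p_x, p_y) + 22\e\mu^i = d_{\widehat G}(p_x, p_y) + 22\e\mu^i$.

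The main obstacle will be the sub-case of Case~B depicted in Figure~\ref{fig:intersect}(b), in which $x_{j_4}$ lies in a cluster touched by the prefix rather than by $\pi$: the detour then routes backward through the prefix, and a naive accounting risks double-counting a potentially long portion of $P_{xy}$. The resolution is the observation that any two vertices sharing an $\e$-level cluster are at $\widehat G$-distance at most $\e\mu^i$, and when both lie on the shortest path $P_{xy}$ their distance along $P_{xy}$ is also at most $\e\mu^i$; this bounds the length of the ``skip'' induced by the detour and keeps the slack within the same $22\e\mu^i$ budget. Packaging this with the fake-edge reduction and the cluster-diameter correction between $x,y$ and $p_x,p_y$ gives the final $44\e\mu^i$ additive bound.
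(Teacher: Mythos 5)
Your proposal is correct and follows essentially the same route as the paper: in both arguments the path from $x$ to $y$ in $H$ hugs the shortest path $P_{xy}$ (its surviving prefix and suffix) and detours around the gap via the path $\pi$ (or the prefix), with the $\e\mu^i$-diameter bound on clusters and the $10\e\mu^i$ weight of fake edges giving the $44\e\mu^i$ slack. Your $p_x/p_y$ wrapper and explicit walk are cosmetic repackagings of the paper's direct triangle-inequality chain through $x_{j_3}, x_{j_4}$ and its use of \Cref{obs:same_clust_dist} (which bundles the two fake edges plus the within-cluster hop into the single constant $21\e\mu^i$); the sub-case where $x_{j_4}$ lies in a cluster touched by the prefix is handled by the same skip-length observation, matching the paper's Case~2 via $x_g$.
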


\begin{proof}
If $\cC[P_{x, y}] \cap \cC[\pi] = \varnothing$,
then we add a shortest path from $x$ to $y$ in $\widehat{G}$ to $\cP$, hence $d_H(x, y) = d_{\widehat{G}}(x, y)$. We henceforth assume that $\cC[P_{x, y}] \cap \cC[\pi] \neq \varnothing$, and we consider the following two cases. 
        
\noindent\\{\em Case 1:  $x_{j_4}$ is in some cluster in $\cC[\pi]$.~} 
Let $z_3$, $z_4$ be two arbitrary vertices in $\pi$ that are in the same cluster as $x_{j_3}$ and $x_{j_4}$, respectively. 
Then, both $d_H(x_{j_3}, z_3)$ and  $d_H(x_{j_4}, z_4)$ are at most $21\e\mu^i$ by \Cref{obs:same_clust_dist}. Since $x_{j_3}$ and $z_3$ are in the same cluster, $d_{\widehat{G}}(x_{j_3}, z_3) \leq \e\mu^i$; similarly, $d_{\widehat{G}}(x_{j_4}, z_4) \leq \e\mu^i$. 
By the triangle inequality, we have: 
\begin{equation*}
\begin{split}
    d_H(x, y) &\leq d_H(x, x_{j_3}) + d_H(x_{j_3}, x_{j_4}) + d_H(x_{j_4}, y)\\
    &\leq d_{\widehat{G}}(x, x_{j_3}) + d_H(x_{j_3}, z_3) + d_H(z_3, z_4) + d_H(x_{j_4}, z_4) + d_{\widehat{G}}(x_{j_4}, y)\\
    &\leq d_{\widehat{G}}(x, x_{j_3}) + d_{\widehat{G}}(x_{j_4}, y) + d_{\widehat{G}}(z_3, z_4) + 42\e\mu^i \qquad\text{(since $x'_i, x'_j \in \pi$ and $\pi \in \cP$)}\\
    &\leq d_{\widehat{G}}(x, x_{j_3}) + d_{\widehat{G}}(x_{j_4}, y) + d_{\widehat{G}}(z_3, x_{j_3}) + d_{\widehat{G}}(x_{j_3}, x_{j_4}) + d_{\widehat{G}}(x_{j_4}, z_4) + 42 \e\mu^i\\
    &\leq d_{\widehat{G}}(x, y) + 44\e\mu^i
\end{split}
\end{equation*}
where the last inequality holds since $x_{j_3}$ and $x_{j_4}$ are on the shortest path $P_{x,y}$ from $x$ to $y$ in $\widehat{G}$.

\noindent\\{\em Case 2: $x_{j_4}$ is in some cluster in $\cC[P[x_1:x_{j_3}]]$.~}
Let $x_g$ be a vertex in $P[x_1:x_{j_3}]$ that is in the same cluster as $x_{j_4}$. 
Observe that $d_{\widehat{G}}(x, x_g) + d_{\widehat{G}}(x_g, x_{j_4}) + d_{\widehat{G}}(x_{j_4}, y) = d_{\widehat{G}}(x, y)$. Since $P[x_1:x_g]$ and $P[x_{j_4}:y]$ are in $H$ (because $P[x_{j_4 + 1}:y]$ is in $\cP$ and $(x_{j_4}, x_{j_4 + 1}) \in I$), $d_H(x, x_g) \leq d_{\widehat{G}}(x, x_g)$ and $d_H(x_{j_4}, y) \leq d_{\widehat{G}}(x_{j_4}, y)$.
By the triangle inequality, we get:
\begin{equation*}
\begin{split}
    d_H(x, y) &\leq d_H(x, x_g) + { \underbrace{d_H(x_g, x_{j_4})}_{\leq 21\e\mu^i \text{~by \ref{obs:same_clust_dist}}} } + d_H(x_{j_4}, y) \\
    &\leq d_{\widehat{G}}(x, x_g) + 21\e\mu^i + d_{\widehat{G}}(x_{j_4}, y) \leq d_{\widehat{G}}(x, y) + 21\e\mu^i,
\end{split}
\end{equation*}
implying that $d_H(x, y) \leq d_{\widehat{G}}(x, y) + 44\e\mu^i$. 
\end{proof}

\paragraph{Item~3: \boldmath{$H$} has small diameter.}
Next we show that \Cref{lm:1+e_distort_preserve_set}(\ref{it:bdd_diam}) holds: for every pair of vertices $u$ and $v$ within the cluster, their distance in $H$ is upper bounded by constant times the cluster diameter.

\begin{claim}
    \label{clm:smll-diam}
    For any $u$ and $v$ in $\widehat{G}$, $d_H(u, v) \leq 10\mu^i$.
\end{claim}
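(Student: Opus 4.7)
The plan is to bound $d_H(u,v)$ by routing both endpoints up the hierarchy of representatives built in Step~2 to reach the highway $\pi$, and then bounding the distance along $\pi$ between their landing points. Fix $u\in\widehat G$ and let $C^{i-\ell}\subset\cdots\subset C^i=\widehat G$ denote the chain of ancestor clusters of $u$ in $\mathbb P$, with representatives $r^j\coloneqq r(C^j)$ chosen at iteration~$j$ of Step~2. Since $u$ and $r^{i-\ell}$ share a cluster of $\cC$, \Cref{obs:same_clust_dist} yields $d_H(u,r^{i-\ell})\le 21\e\mu^i$, and the analogous bound holds for $v$.

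Next I would prove by downward induction on $j$ from $i$ to $i-\ell$ that $d_H(r^j,\pi)=O(\mu^i)$. The base case $j=i$ uses the path $P_{\widehat G}$ added at the first iteration of Step~2: since $\pi$ touches $\widehat G$ by hypothesis, any shortest path from $r(\widehat G)$ to $\pi$ inside $\widehat G\cup\pi$ has length at most $\mathrm{diam}(\widehat G)\le\mu^i$. For $j<i$, the path $P_{C^j}$ is a shortest path inside $C^{j+1}$ and hence has length at most $\mu^{j+1}$; in the untrimmed case it connects $r^j$ directly to $r^{j+1}$ and the bound follows from the inductive hypothesis, while in the trimmed case the gluing edge attaches $r^j$ to a vertex in some cluster $C''\in\cC[\cP]$ that was inserted into $H$ during an earlier iteration of Step~2 or during Step~1, which by a separate global invariant is within $O(\mu^i)$ of $\pi$. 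Summing the geometric series $\sum_{k=i-\ell+1}^{i}\mu^k\le 2\mu^i$ (for $\mu\ge 2$) together with the top-level contribution $\mu^i$ yields $d_H(r^{i-\ell},\pi)\le 3\mu^i$, and combined with the fake-edge hop from $u$ we obtain $d_H(u,\pi),d_H(v,\pi)\le 4\mu^i$ for sufficiently small~$\e$.

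The final step is to show that the two landing points $p_u,p_v\in\pi$ are close along $\pi$. Since $\pi$ is a shortest path in $G$ and touches $\widehat G$, the attachment point $q_0$ of $P_{\widehat G}$ must lie in $\pi\cap\widehat G$ (otherwise $P_{\widehat G}$ could be strictly shortened). I would argue analogously that the Step~1 attachment vertices $x'_{j_2}, x_{j_3}, x_{j_4}$ all lie in clusters of $\cC$, which are subsets of $\widehat G$, so every landing point lies in $\pi\cap\widehat G$. Because $\pi$ is shortest in $G$, $d_\pi(p_u,p_v)=d_G(p_u,p_v)\le d_{\widehat G}(p_u,p_v)\le\mu^i$. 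Combining via the triangle inequality gives $d_H(u,v)\le d_H(u,p_u)+d_\pi(p_u,p_v)+d_H(p_v,v)\le 4\mu^i+\mu^i+4\mu^i<10\mu^i$.

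The main obstacle I anticipate is carefully handling the trimmed case in the ladder induction. When the gluing procedure shortens $P_{C^j}$, it can reattach to a cluster introduced by Step~1 or by an earlier Step~2 iteration at an arbitrarily higher level, so tracking the $H$-distance back to $\pi$ through the reattachment point requires maintaining a global invariant that every attached vertex of $H$ already lies within $O(\mu^i)$ of $\pi$. A secondary subtlety is verifying that every Step~1 attachment vertex lies in some cluster of $\cC$ (hence in $\widehat G$), so that all landing points on $\pi$ stay within a single $\mu^i$-diameter window.
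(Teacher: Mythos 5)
Your high-level plan mirrors the paper's: bound $d_H(u,\pi)$ for every vertex $u$ and then combine via the cheap traversal along $\pi$ (the paper's observation that no inter-cluster edge leaves $\widehat G$ to $\pi\setminus\widehat G$ justifies the final $+\mu^i$ term). But there is a genuine gap in the middle, and it is precisely the piece you flag as an ``obstacle'': you do not actually establish the invariant that every already-attached vertex is within $O(\mu^i)$ of $\pi$, and your chain-of-ancestors scheme cannot establish it as stated.

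The issue is twofold. First, your recursion follows the ancestor chain $C^{i-\ell}\subset\cdots\subset C^i$ of a fixed vertex $u$, but when the gluing procedure trims $P_{C^j}$, the reattachment can land in a cluster from \emph{any} earlier branch of the construction, not an ancestor of $u$. To use this as an induction hypothesis you would have to have bounded $d_H(\cdot,\pi)$ for \emph{every} vertex touched so far --- exactly the statement you are trying to prove --- so the argument is circular unless you run the induction over the order in which paths are inserted, as the paper does with its outer induction (over iterations $j=i,i-1,\ldots,i-\ell$). Second, and more importantly, you only consider reattachment to ``an earlier iteration of Step~2 or during Step~1,'' but a path $Q_s$ added at iteration $j$ can reattach to a \emph{sibling} path $Q'_{s'}$ added earlier within the \emph{same} iteration $j$ for the same parent cluster $C'$. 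This is where the paper's \emph{inner} induction (over the order index $s$ of siblings) is needed, and where the $+11\cdot s\cdot\e\mu^i$ slack and the degree bound $\mu^{O(\ddim)}$ enter. Your proposal ignores this chain of sibling hops, and without it the claimed bound $d_H(r^{i-\ell},\pi)\le 3\mu^i$ is unsupported --- the fake-edge overheads along a sibling chain of length up to $\mu^{O(\ddim)}$ are not automatically negligible; they are controlled only because $\e=\mu^{-O(\ddim)}$ is assumed, which the paper tracks explicitly in the function $f(j)$. So the proposal identifies the right skeleton but leaves out the induction that carries the actual load.
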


\begin{proof}
We will show that for any vertex $u \in \widehat{G}$, 
\begin{equation}
\label{eq:dist_u_pi}
    d_H(u, \pi) ~\leq~ 4\mu^i + O_{\ddim}(\log_{\mu} (1/\e) \cdot \e\mu^i).
\end{equation}
Since $H$ is a spanning tree of $\widehat{G} \cup \pi$ and there is no edge in $I$ connecting $\widehat{G}$ to $\pi \setminus \widehat{G}$,
traversing along a shortest path $\pi$ within $\widehat{G}$ costs at most the diameter $\widehat{G}$, which is $\mu^i$.
Thus $d_H(u, v) \le d_H(u, \pi) + d_H(v, \pi) + \mu^i$, and the claim is proved by assuming $O_{\ddim}(\log_{\mu} (1/\e)) < 1/\e$, which is true if $\e = \mu^{-O(\ddim)}$. 

\smallskip
We go over all possible cases of $u$. 
By the definition of preservable set, each cluster in $\cC$ is touched by at least one path in $\cP$. Then, it is sufficient to show \Cref{eq:dist_u_pi} for every vertex in $\cC[\cP]$. 
If $u \in \cC[\pi]$, $d_H(u, \pi) \leq 10\e\mu^i$ by construction of $H$. We henceforth assume that $u \notin \cC[\pi]$. Let $C_u$ be the cluster containing $u$. 
Let $\cP_1$ be the set $\cP$ after the first step. 
We first consider the case when $u$ is in some cluster in $\cC[\cP_1]$ which are dealt with by the first step of the construction of $H$.  

\medskip
\noindent \newline{\em First case of Step 1: $\cC[P_{xy}] \cap \cC[\pi] = \varnothing$.~}  
In this case, recall that we add a subpath $P'' = (x'_{j_1 + 1}, \ldots x'_{j_2 - 1})$ of $P'$ to $\cP$ in which $P' = (x'_1, x'_2, \ldots)$ is a shortest path from $x$ to $\pi$ in $\widehat{G}$, $x'_{j_2}$ is the first vertex in $P'$ that is in $\cC[\pi]$ and $x'_{j_1}$ is the last vertex in $P'[x'_1:x'_{j_2}]$ that is in  $\cC[P_{xy}]$. If $u$ is in some cluster in $\cC[P'']$, let $u'$ be a vertex in $P''$ that is in $C_u$. By \Cref{obs:same_clust_dist}, $d_H(u, u') \leq 21\e\mu^i$. On the other hand, $d_H(x'_{j_2}, \pi) \leq 10\e\mu^i$ by the construction of the sketch graph $H$. 
Since $u'$ and $x'_{j_2}$ are in $P''$ which is a subgraph of $H$, $d_H(u', x'_{j_2}) \leq \mu^i$. 
Thus, 
\begin{equation}
\label{eq:non-empty-intersect}
\begin{split}
    d_H(u, \pi) 
    ~\leq~ d_H(u, u') + d_H(u', x'_{j_2}) + d_H(x'_{j_2}, \pi) 
    ~\leq~ 21\e\mu^i + \mu^i + 10\e\mu^i
    ~\leq~ \mu^i + 31\e\mu^i.
\end{split}
\end{equation}
If $u \in \cC[P_{xy}]$, let $u''$ be a vertex in $P_{xy}$ that is also in $C_u$ and $z$ be a vertex in $P_{xy}$ that is also in the same cluster with $x'_{j_1}$. 
Since $u''$ and $z$ are in $P_{xy}$ and $P_{xy}$ is a subgraph of $H$, $d_H(u'', z) \leq d_{P_{xy}}(u'', z) \leq d_{\widehat{G}}(x, y) \leq \mu^i$.  Hence, 
\begin{equation*}
\begin{split}
    d_H(u, \pi) &\leq { \underbrace{d_H(u, u'')}_{\leq 21\e\mu^i \text{~by \ref{obs:same_clust_dist}}} } + d_H(u'', z) + { \underbrace{d_H(z, x'_{j_1})}_{\leq 21\e\mu^i 
    \text{~by \ref{obs:same_clust_dist}}} } + d_H(x'_{j_1}, \pi)\\
    &\leq 21\e\mu^i + \mu^i + 21\e\mu^i + \mu^i \leq 2\mu^i + 42\e\mu^i.
\end{split}
\end{equation*}

\noindent \newline{\em Second case of Step 1: $\cC[P_{xy}] \cap \cC[\pi] \ne \varnothing$.~} 
 Recall that in this case, we add two subpaths $P_1' = (x_1, \ldots, x_{j_3 - 1})$ and $P_2' = (x_{j_4 + 1}, \ldots, x_k)$ to $\cP$, then add $(x_{j_3 - 1}, x_{j_3})$ and $(x_{j_4}, x_{j_4 + 1})$ to $I$ with $x_{j_3}$ being the first vertex in $P_{xy}$ that is in $\cC[\pi]$ and $x_{j_4}$ being the last vertex in $P$ that is in some cluster touched by either $\pi$ or $P[x_1:x_{j_3 - 1}]$.  
If $u \in \cC[P_1']$, let $u''$ be a vertex in $P_1'$ that is also in $C_u$. By \Cref{obs:same_clust_dist}, $d_H(u, u'') \leq 21\e\mu^i$. Since $u'', x_{j_3}$ are in $P_{xy}$, which is a shortest path in $\widehat{G}$, $d_{P_{xy}}(u'', x_{j_3}) \leq \mathrm{diam}(\widehat{G}) \leq \mu^i$. 
\begin{equation*}
    d_H(u, \pi) ~\leq~ d_H(u, u'') + d_H(u'', x_{j_3}) + d_H(x_{j_3}, \pi) ~\leq~ 21\e\mu^i + \mu^i + 10\e\mu^i ~\leq~ \mu^i + 31\e\mu^i.
\end{equation*}
For the case $u \in \cC[P_2']$, using similar argument, we obtain $d_H(u, \pi) \leq \mu^i + 31\e\mu^i$.

\smallskip
We have proven that for each $u$ in some cluster in $\cC[\cP_1]$,
\begin{equation}
\label{eq:step1}
    d_H(u, \pi) \leq 2\mu^i + 42\e\mu^i.
\end{equation}

\noindent \emph{The remaining cases (Step 2).} 
Recall that during Step~2, 
we process the clusters from level $i$ down to $i - \ell$, and each cluster $C$ in $\mathbb{P}_j$ has an arbitrary chosen representative $r_C$.
We inductively show that after the $j$-th iteration, for any $u$ in some cluster in $\cC[\cP]$, 
\[
d_H(u, \pi) \leq f(j) \quad \text{where} \quad \EMPH{$f(j)$} \coloneqq 2\mu^i + 52\e\mu^i + \sum_{k\in [j: i]} \mu^{k}  + (i - j)\cdot \mu^{O(\ddim)} \cdot \e\mu^i.
\]
We call this the \EMPH{outer induction hypothesis}. 
Notice that we prove \Cref{eq:dist_u_pi} after the $(i-\ell)$-th iteration, because for any $u \in \widehat{G}$, 
\begin{equation*}
\begin{split}
     d_H(u, \pi) 
     &\leq f(i - \ell) \coloneqq 2\mu^i + 52\e\mu^i + { \underbrace{\sum_{k\in [i - \ell: i]} \mu^{k}}_{\leq 2\mu^i \text{~by geometric sum}} }
     + \ell \cdot \mu^{O(\ddim)} \cdot \e\mu^i \\
     &\leq 4\mu^i + O_{\ddim}(\log_{\mu}(1/\e)  \cdot \e\mu^i).
\end{split}
\end{equation*}
We first show a simple but useful observation.

{
\begin{observation}
\label{obs:dist_cluster_path}
    Assume a path $Q$ from $r$ is added to $\cP$ by the gluing procedure in Step~$2$. Then, for any $u$ in any cluster in  $\cC[Q]$, $d_H(u, \pi) \leq d_H(r, \pi) + 10\e\mu^i$.
\end{observation}

\begin{proof}
Let $e$ be the inter-cluster edge in $I$ that is added together with $Q$. 
Since $\pi \in \cP$, $Q \circ e$ is the prefix of the unique path from $r$ to $\pi$ in $\mathrm{sketch}(\cP \cup \{Q\}, (I \cup \{e\})$ (and also in $H$). Thus, for any $z \in Q$, $d_H(z, \pi) \leq d_H(r, \pi)$.
Let $C_{u}$ be the cluster containing $u$ and $z$ be the closest vertex to $u$ in $Q \cap C_{u}$. 
By the construction of $H$, $d_H(u, z) = 10\e\mu^i$. 
Then,
\begin{equation*}
    d_H(u, \pi) ~=~ d_H(u, z) + d_H(z, \pi) 
    ~\leq~ d_H(r, \pi) + 10\e\mu^i. 
\end{equation*} 
\aftermath
\end{proof}
}

\smallskip
\noindent \emph{Base case:~} 
In the first iteration of Step~$2$, we only add a prefix $Q_C$ of the shortest path $P_C$ from the representative $r_C$ of $C \coloneqq \widehat{G}$ to $\pi$ by the gluing process. We have: 
\begin{equation*}
\begin{split}
    d_H(r_C, \pi) &\leq d_H(r_C, \cC[\cP_1]) + { \underbrace{\max_{z \in C, C \in \cC[\cP_1]}d_H(z, \pi)}_{\leq 2\mu^i + 42\e\mu^i \text{~by Eq. \ref{eq:step1}}} }\\ 
    &\leq w(Q_C) + 2\mu^i + 42\e\mu^i \leq 2\mu^i + 42\e\mu^i + \mu^i.
\end{split}
\end{equation*}
Using \Cref{obs:dist_cluster_path}, we obtain that for any $u$ in any cluster in $\cC[Q_C]$, $d_H(u, \pi) \leq d_H(r_C, \pi) + 10\e\mu^i \leq f(i)$. 
Thus, the base case holds.

\smallskip
\noindent \emph{Inductive case:~}
Let $C'$ be a cluster in $\mathbb{P}_{j + 1}$ ($j < i$) and $r'$ be the representative of $C'$. 
For each child cluster $C_s$ of $C'$ (where $s$ ranges in $[1:\mu^{O(\ddim)}]$ by the degree bound on each cluster),
let $r_s$ be the representative of $C_s$ and $Q_s$ be the shortest path from each $r_s$ to $r'$ in $C'$. 
Without loss of generality, assume that $Q_1, Q_2, \ldots$ are added to $\cP$ using the gluing procedure in this order, and let $Q'_s$ be the prefix of $Q_s$ that are added to $\cP$. 
We show again by induction that for any $s$, for any vertex $u_s$ in $Q'_s$, 
\[
d_H(u_s, \pi) \leq f(j + 1) + d_{C'}(u_s, r') + 11 \cdot s \cdot \e\mu^i.
\]
We call this the \EMPH{inner induction hypothesis}.
(In this proof, we simplify the notation by using $d_{C'}$ as the distance function in $G[C']$ for any cluster $C'$.)
Observe that before we glue $Q_1$ to $\cP$, only the paths in $\cP_1$ or the paths from representatives of ancestors of $C'$ (including $C'$ itself) can touch $C'$. 
Let $\cP_{j + 1}$ be the set $\cP$ after iteration $j+1$.

\begin{itemize}
\item
\emph{For the base case}:  Let $e = (z, z')$ be the edge added to $I$, where $z \in V(Q_1)$. 
Hence, $z'$ is in some cluster in $\cP_{j + 1}$.
By outer induction hypothesis, $d_{H}(z', \pi) \leq f(j + 1)$.  
For any vertex $u_1 \in Q'_1$, the shortest path from $u_1$ to $r'$ in $C'$ contains $z'$, so $d_{H}(u_1, z') = d_{C'}(u_1, z') \le d_{C'}(u_1, r')$.
Thus, we get $d_H(u_1, \pi) \leq d_H(u_1, z') + d_H(z', \pi) \leq d_{C'}(u_1, r') + f(j + 1)$. 

\item
\emph{For the inductive case:}
Consider $Q_s$ with $s > 1$. If $Q_s$ touches a cluster in $\cP_{j + 1}$ first, we use the same argument as $Q_1$. 
Otherwise, $Q_s$ touches a cluster in $\cC[Q'_{s'}]$ first for some $s' < s$. 
Let $(b_1, b_2)$ be the inter-cluster edge added to $I$ after we add $Q'_s$ to $\cP$ ($b_2$ is in some cluster in $\cC[Q'_{s'}]$). Let $b'_2$ be the closest vertex in in $Q'_{s'}$ to $b_2$ that is in the same cluster with $b_2$. 
By the construction of $H$, $d_H(b_2, b'_2) = 10\e\mu^i$. 
By the inner induction hypothesis, $d_H(b'_2, \pi) \leq f(j + 1) + d_{C'}(b'_2, r') + 11 \cdot s' \cdot \e\mu^i$. 
For any $u_s \in Q'_s$, using the triangle inequality, we get:
\begin{equation*}
\begin{split}
    d_H(u_s, \pi) 
    &\leq d_H(u_s, b'_2) + d_H(b'_2, \pi) \leq d_{H}(u_s, b_2) + d_H(b_2, b'_2) + f(j + 1) + d_{C'} (b'_2, r') + 11\cdot s' \cdot \e\mu^i \\
    &\leq d_{C'}(u_s, b_2) + 10\e\mu^i + f(j + 1) + d_{C'} (b'_2, r') + 11\cdot s' \cdot \e\mu^i \\
    &\leq { \underbrace{d_{C'}(u_s, b_2) + d_{C'}(b_2, r')}_{= d_{C'}(u_s, r')} } + { \underbrace{d_{C'}(b_2, b'_2)}_{\leq \e\mu^i} } + f(j + 1) + 11\cdot s'\cdot\e\mu^i + 10\e\mu^i \\
    &\leq d_{C'}(u_s, r') + f(j + 1) + 11 \cdot s' \cdot \e\mu^i  + 11\e\mu^i\leq d_{C'}(u_s, r') + f(j + 1) + 11 \cdot s \cdot \e\mu^i.
\end{split}
\end{equation*}
The last equation holds since $s' + 1 \leq s$.
This proves the inner induction.
\end{itemize}

\noindent Therefore, for every $s$ and for any $u \in \cC[Q'_s]$, using \Cref{obs:dist_cluster_path} and the inner induction hypothesis:
\begin{equation*}
\begin{split}
    d_H(u, \pi) 
    &\leq d_{H}(r_s, \pi) + 10\e\mu^i \leq d_{C'}(r_s, r') + f(j + 1) + 11\cdot s \cdot\e\mu^i + 10\e\mu^i \\
    &\leq \mu^j + f(j + 1) + 11\cdot \mu^{O(\ddim)} \cdot \e\mu^i \\
    &\leq 2\mu^i + 52\e\mu^i + \sum_{k\in [j+1: i]} \mu^{k}  + (i - j - 1)\cdot \mu^{O(\ddim)} \cdot \e\mu^i + \mu^j + 11\cdot \e^{-O(\ddim)} \cdot \e\mu^i \\
    &\leq 2\mu^i + 52\e\mu^i + \sum_{k\in [j: i]} \mu^{k}  + (i - j)\cdot \mu^{O(\ddim)} \cdot \e\mu^i = f(j)
\end{split}
\end{equation*}
\noindent as desired.  This proves the outer induction. \qed
\end{proof}
\section{Constructing a Constant-Size HPF}
\label{sec:HPF-construct}

Recall the following definition of a $(\rho, \mu)$-hierarchical partition family.
\HPFdef*

We remark that an HPF can either be viewed as a set of hierarchical partitions with an additional covering property (as in the definition above), or as a set of sparse partition families with additional hierarchical structure.
For $K_r$-minor-free graphs, [KLMN'05] (implicitly, see [BFN'19] and [Fil'24]) construct an $(O_r(1), O(1))$-HPF of size $O_r(1)$ with a \emph{weak} diameter guarantee. They achieve this via a black-box reduction from (weak-diameter) $\rho$-PFs: they show that the partitions given by $\rho$-PFs can be massaged into a hierarchy with only an $O(1)$-factor loss in the padding parameter. 
However, their constructed HPF achieves only a weak diameter guarantee on the clusters \emph{even} if one starts from strong-diameter $\rho$-PFs, and it is not clear how any similar argument could preserve the strong diameter.

We fix $\EMPH{$\mu$} \coloneqq \tilde{O}(\ddim^3)$.
The goal of this section is to prove the following lemma:

\HPF*

\subsection{Hierarchies of subnets}
We assume that the minimum pairwise distance in the graph is 1.  
Hence, the aspect ratio $\Phi$ is also the diameter $\Delta$ of the graph; nonetheless, we shall use $\Delta$ rather than $\Phi$ in what follows.  

Given a set of points $S$, a set $N \subseteq S$ is a \EMPH{$t$-covering} of $S$ for some $t > 0$ if for every $v \in S$, there is a point $p \in N$ such that $d_G(v, p)\leq t$. If the minimum distance (in $G$) between any two points in $N$ is at least $t$, $N$ is a \EMPH{$t$-packing}. 
$N$ is a \EMPH{$t$-net} of $S$ if $N$ is a $t$-covering of $S$ and is a $t$-packing. 

A \EMPH{net tree $T$} is induced by a \EMPH{hierarchy of nets} $N_0, N_1, \ldots N_{\log\Delta}$ at levels $0, 1, \ldots, \log \Delta$ of $T$ respectively, satisfying:
\begin{itemize}
\item \textit{[Hierarchical]}\,\, 
$N_{i} \subseteq N_{i-1}$ for every $i$.  In other words, $N_{i-1}$ is a refinement of $N_i$.
\item \textit{[Net]}\,\, 
For every $i$,
$N_i$ is a $\Delta_i$-net of $N_{i-1}$.
\end{itemize}

To build our HPF, we need something stronger than a hierarchy of nets, which provides us the basis for the padding property. The following lemma is the key behind our construction of HPF. 

\begin{lemma}
\label{lem:net-tree-cover}
Fix constants $\rho' \ge 24$
and $\eta \ge 5$ to be determined later. 
Define $\Delta_i \coloneqq \mu^i/(6\eta)$ 
for a sufficiently large parameter $\mu$ to be determined later.
There exists a 
hierarchy $\mathcal{N}$ of nets $N_0, N_1, \ldots N_{\log\Phi}$, where each $N_i$ is a $\Delta_i$-net of $N_{i-1}$
and 
a family of $\sigma$ \EMPH{hierarchies of subnets},
denoted by $\mathcal{N}^1,\ldots,\mathcal{N}^\sigma$, for $\sigma = 2^{O(\ddim)}$, satisfying the following three properties:
\begin{itemize}
\item {\bf Covering:}
For each net $N_i \in \cal{N}$
in the original hierarchy is covered by
$\sigma$ many (not necessarily disjoint) \EMPH{subnets} 
$N^1_i,\ldots,N^\sigma_i$, with
$N^j_i \in \mathcal{N}^j$ for each $j \in [\sigma]$;  $N^1_0,\ldots,N^\sigma_0 = N_0 = V$.
\item {\bf Hierarchical:}
For each $i \ge 1$ and each $j$, 
the subnet
$N^j_i$ provides a
$\mu^i/3$-net of $N^j_{i-1}$,
and a $\frac{5}{12}\mu^i$-cover of $V$.
\item {\bf Padding:} For every $i$ and any vertex $v$, there is a subnet $N_{i}^j$ in some hierarchy $\mathcal{N}^j$,
such that the ball $B(v, \mu^i/\rho')$ around $v$ lies completely within the ball of radius $\mu^i/12$ around some point in $N_{i}^j$.
\end{itemize}

\end{lemma}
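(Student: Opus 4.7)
The construction has two stages: first, build the base net hierarchy $N_0 \supseteq N_1 \supseteq \cdots \supseteq N_{\log\Phi}$; second, derive the $\sigma$ subnet hierarchies $\mathcal{N}^1,\dots,\mathcal{N}^\sigma$ on top of it. For the base, I would proceed greedily: set $N_0 \coloneqq V$ and, for each $i \ge 1$, let $N_i \subseteq N_{i-1}$ be a maximal $\Delta_i$-packing of $N_{i-1}$. Maximality forces $N_i$ to also be a $\Delta_i$-cover, so $N_i$ is a genuine $\Delta_i$-net. The doubling dimension then provides the standard packing bound: any ball of radius $c\mu^i$ contains at most $(c\mu)^{O(\ddim)}$ points of $N_i$, which is the workhorse for the remainder of the argument.

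For the subnet hierarchies, the plan is to maintain a \emph{coloring} $\chi_i : N_i \to 2^{[\sigma]}$ (each point receives a nonempty set of colors) and set $N^j_i \coloneqq \{x \in N_i : j \in \chi_i(x)\}$ subject to the hierarchical refinement $N^j_i \subseteq N^j_{i-1}$. I would construct this coloring \emph{top-down}: at level $i_{\max}$, the unique net point receives all $\sigma$ colors; going from level $i$ to level $i-1$, every color that $x \in N_i$ carries is inherited by $x$ as a point of $N_{i-1}$, while the newly introduced points in $N_{i-1} \setminus N_i$ receive a fresh coloring. The coloring at each scale is designed so that (a) every point of $N_i$ receives at least one color (covering), (b) $N^j_i$ forms a $\mu^i/3$-packing of $N^j_{i-1}$ for every $j$, and (c) in every ball of radius $\mu^i/12 - \mu^i/\rho'$ around any $v \in V$, at least one color $j$ is represented by a point of $N^j_i$ (padding). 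Property (b) reduces to a graph coloring on the conflict graph whose vertices are $N^j_{i-1}$ and whose edges join pairs within distance $\mu^i/3$; by the packing bound this graph has degree $\mu^{O(\ddim)}$, so $\sigma = 2^{O(\ddim)}$ colors suffice. Property (c) is then arranged by a pigeonhole-style choice among the $\mu^{O(\ddim)}$ candidate net points within $B(v, \mu^i/12)$, which exists because $N_i$ itself is a $\Delta_i$-cover and $\Delta_i = \mu^i/(6\eta) < \mu^i/12 - \mu^i/\rho'$ for $\eta\ge 5$ and $\rho' \ge 24$.

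The main obstacle, and the reason the construction must be delicate, is reconciling the three simultaneous demands across scales: (i) hierarchical containment $N^j_i \subseteq N^j_{i-1}$, (ii) $N^j_i$ being a genuine $\mu^i/3$-\emph{net} (i.e., also a cover) of $N^j_{i-1}$, and (iii) the padding witnesses being consistent across scales. Demand (ii) is the hardest to maintain while inheriting colors top-down: if color $j$ survives from scale $i+1$ down to scale $i$ only on a sparse subset of $N^j_i$, it may fail to cover the newly refined $N^j_{i-1}$. My approach would be to augment the color class at each level by running a greedy $\mu^i/3$-covering procedure within $N^j_{i-1}$ starting from the inherited color-$j$ points, adding extra ``filler'' points (assigning them color $j$) to cover the rest. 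The packing bound ensures only $\mu^{O(\ddim)}$ fillers are needed per color per scale, so the total number of colors remains $2^{O(\ddim)}$.

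Finally, verification is by induction on $i$. Covering and hierarchical follow from the coloring and the greedy step. The $\frac{5}{12}\mu^i$-cover of $V$ follows by the triangle inequality: $N^j_{i-1}$ is a $\frac{5}{12}\mu^{i-1}$-cover of $V$ by induction, and $N^j_i$ is a $\mu^i/3$-cover of $N^j_{i-1}$, so $N^j_i$ covers $V$ at scale $\frac{5}{12}\mu^{i-1}+\mu^i/3 \le \frac{5}{12}\mu^i$ whenever $\mu \ge 4$. Padding follows directly from property (c): for any $v, i$, some $j$ has a point of $N^j_i$ within $\mu^i/12-\mu^i/\rho'$ of $v$, hence $B(v,\mu^i/\rho')$ sits inside $B(x,\mu^i/12)$ by the triangle inequality. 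Choosing $\mu = \tilde{O}(\ddim^3)$ large enough absorbs all these per-scale constants.
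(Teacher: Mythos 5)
Your high-level plan matches the paper's: start from a greedy base net hierarchy, then split each $N_i$ into $\sigma$ subsets that inherit top-down and are later augmented to become genuine nets. You also correctly identify the central difficulty (simultaneously maintaining hierarchy, packing, and covering). However, there are two real gaps.

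First, the order of operations is not resolved, and your one-pass top-down scheme has a circular dependency. You propose to (a) propagate colors top-down from $N_i$ to $N_{i-1}$ and (b) ``augment the color class at each level by running a greedy $\mu^i/3$-covering procedure within $N^j_{i-1}$.'' But during a top-down pass, when you are at level $i$ the set $N^j_{i-1}$ is not yet determined, so you cannot know which filler points to add. The paper avoids this by running \emph{two separate passes in opposite directions}: a top-down pass that only partitions each $N_i$ into $\sigma$ disjoint $\mu^i/3$-packings $\tilde N^j_i$ satisfying $\tilde N^j_i \subseteq \tilde N^j_{i-1}$, and then a \emph{bottom-up} pass that, given the already-finalized $N^j_{i-1}$, greedily completes $\tilde N^j_i$ to a $\mu^i/3$-net $N^j_i$ by inserting filler points drawn from $N^j_{i-1}$. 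Filler points added at level $i$ are automatically consistent with the hierarchical containment $N^j_i \subseteq N^j_{i-1}$ because they came from $N^j_{i-1}$, and they never need to propagate upward. Your write-up hints at this but never commits to the bottom-up pass; as stated, the construction does not compile into a well-defined procedure.

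Second, the color-count bound is off because you apply the conflict-graph argument to the wrong point set. You write that the conflict graph has vertices $N^j_{i-1}$ (you presumably mean $N_{i-1}$) and degree $\mu^{O(\ddim)}$, then conclude $\sigma=2^{O(\ddim)}$ suffices. These two statements are inconsistent: with $\mu=\tilde O(\ddim^3)$, a degree-$\mu^{O(\ddim)}$ graph needs $\mu^{O(\ddim)}=2^{O(\ddim\log\ddim)}$ colors, not $2^{O(\ddim)}$. The fix, as in the paper, is to perform the greedy ball-carving assignment only on the points of $N_i$ itself. Since $N_i$ is a $\Delta_i$-packing with $\Delta_i=\mu^i/(6\eta)$, the ball of radius $\mu^i/3$ around any point of $N_i$ contains at most $(O(\eta))^{\ddim}=2^{O(\ddim)}$ points of $N_i$; the $\mu$-dependence cancels because the ratio $(\mu^i/3)/\Delta_i = 2\eta$ is a constant. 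The filler points come from $N^j_{i-1}$ but are slotted into \emph{existing} color classes, so they do not inflate $\sigma$. (There is also a small arithmetic slip: $\frac{5}{12}\mu^{i-1}+\mu^i/3 \le \frac{5}{12}\mu^i$ requires $\mu\ge 5$, not $\mu\ge 4$; the paper simply asks $\mu$ to be sufficiently large, so this is harmless.)
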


\subsubsection {A small family of hierarchies of subnets: Proof of \Cref{lem:net-tree-cover}} \label{subnet_cons}
Our starting point is a hierarchy $\mathcal{N}$ of nets $N_0, \dots, N_{\log\Phi}$, such that $N_i$ is a $\Delta_i$-net of $N_{i-1}$; 
such a hierarchy can be obtained via
a straightforward greedy construction. 

In what follows we employ a two-step procedure.
In the first step, the procedure processes the original hierarchy of nets $\mathcal{N}$ top-down, where
for every $i$, the net $N_i$ is partitioned into $\sigma = \eta^{O(\ddim)}
= 2^{O(\ddim)}$ subsets $\tilde{N}^j_i$ of $N_i$ that are hierarchical, namely, we have
$\tilde{N}^j_i \subseteq \tilde{N}^j_{i-1}$ for each $j$ and any $i \ge 1$.
In the second step, the resulting $\sigma$ hierarchies of subsets 
\smash{$\{\tilde{N}^j_i\}$}
are processed bottom-up, to get   $\sigma$ many new hierarchies of subnets \smash{$\{N^j_i\}$}.

\paragraph{Step 1.}
We use the following iterative process, as $i$ decreases from $\log\Phi$ down to $0$.  
We handle the case $i = \log \Phi$ separately. 
Noting that $N_{\log\Phi}$ contains a single point, denoted by $r$, we define $\tilde{N}^j_{\log\Phi} = 
N_{\log\Phi} = \{r\}$, for every $j \in [\sigma]$. 

Next consider $i < \log\Phi$. To build $\tilde{N}^j_i$, for each $j \in [\sigma]$,  
   we first place all points of $\tilde{N}^j_{i+1}$ in $\tilde{N}^j_i$.
    Once $\tilde{N}^j_i$ has been built in this way for each $j$, 
    consider the remaining points in $N_i$ that are not covered by any of the level-$(i+1)$ 
    subsets $\tilde{N}^j_{i+1}$; that is, consider the points in
    \[
    N'_i \coloneqq N_i \setminus \Paren{ \bigcup_j \tilde{N}_{i+1}^j }.
    \]
    Now for each $j$ from  1 to $\sigma$, we complete $\tilde{N}_i^j$ by adding to it greedily the points in $N'_i \setminus (\bigcup_{k<j} \tilde{N}_{i}^{k})$ using the ball-carving algorithm  with a packing parameter $\mu^i/3$. 
   If $i >0$, decrement $i$ and repeat.

\paragraph{Step 2.}
We use the following iterative process, as $i$ increases from 0 to $\log\Phi$.
We handle the case $i = 0$ separately, by defining $N^j_{0} = N_0 = V$ 
for every $j \in [\sigma]$. 

Next, consider $i > 0$.
For each $j \in [\sigma]$, we complete $\tilde{N}_i^j$ (which is a $\mu^i/3$-packing) 
into a $\mu^i/3$-net for ${N}_{i-1}^j$, denoted by
${N}_i^j$,
by adding to it greedily the points in ${N}_{i-1}^j$ using the ball-carving algorithm with a packing parameter $\mu^i/3$. 
(Note that the subnets ${N}_i^j$ are not necessarily contained in $N_i$.)
If $i < \log\Phi$, increment $i$ and repeat.

\paragraph{Analysis.}
We establish the three properties required by \Cref{lem:net-tree-cover} individually.

\begin{itemize}
\item {\bf Covering.~}
By construction, for every $i$, the subsets $\tilde{N}^j_i$ of $N_i$ are pairwise disjoint. 
Since the ball of radius $\mu^i/3$ around any point of $N_i$ contains less than $\sigma$ points of $N_i$, at the end of Step~1 of the procedure, any point of $N_i$ will be added to a subset $\tilde{N}_i^j$, for each $i$; thus $N_i$ is covered by the $\sigma$ subnets $\tilde{N}^j_i$.
Since $N^j_i$ is a superset of $\tilde{N}^j_i$, for every $i$ and $j$, we have
$\bigcup_j N^j_i \supseteq N_i$, i.e., $N_i$ is covered by the $\sigma$ subnets ${N}^j_i$.
Finally, by construction we have $N^1_0,\ldots,N^\sigma_0 = N_0 = V$.

\item {\bf Hierarchical.~}
Recall that $N_0^j = N_0 = V$ for each $j$.
By construction, for every $i \ge 1$ and $j$: \begin{enumerate} \item 
${N}^j_i \subseteq
\tilde{N}^j_i \cup 
{N}^j_{i-1} \subseteq
{N}^j_{i-1}$, hence the subnets $N_i^j$ satisfy the hierarchical property.
\item ${N}^j_i$ is a $\mu^i/3$-net of ${N}^j_{i-1}$. 
Thus, for each $j$, $N^j_1$ is a $\mu/3$-covering of $V$, $N^j_2$ is a $\mu^2/3$-covering of $N^j_1$, and thus a $(\mu^2 + \mu^1)/3$-covering of $V$.
Inductively we get that each $N^j_i$ provides an $s_i$-covering for $V$,
where $s_i = \frac{1}{3}\left(\sum_{k = 1}^i \mu^k\right) < 
\frac{5}{12} \mu^i$, where the second inequality holds for a sufficiently large $\mu$; this upper bound holds for every $i \ge 1$ and $j$ (for $i= 0$ it holds trivially).
\end{enumerate}

\item \textbf{Padding.~}
The padding property holds trivially for $i= 0$ and any vertex $v$, since $N^1_0,\ldots,N^\sigma_0 = N_0 = V$.
We henceforth fix an arbitrary $i \ge 1$ and an arbitrary vertex $v$.  Since $N_i$ is a $\Delta_i$-net for $N_{i-1}$, for each $i \ge 1$, and as $N_0 = V$,
we inductively get that $N_i$ is a 
$s'_i$-covering for $V$, where 
\[
s'_i 
~=~ 
\sum_{k=1}^i \Delta_k 
~=~ 
\frac{\sum_{k=1}^i  
\mu^k}{6\eta}
~<~ 
\frac{\mu^{i+1} - 1}{
\mu -1} \cdot \frac{1}{6\eta}~<~
\frac{5}{4} \mu^i \cdot \frac{1}{6\eta},
\]
where the second inequality holds for a sufficiently large $\mu$.
Hence, there is a net point $p$ in $N_i$ at distance at most $s'_i$ from $v$, and so
$d_G(p,v) \le
s'_i < 
 \frac{5}{4} \mu^i/(6\eta) \le \mu^i / 24$, where the last inequality holds for $\eta \ge 5$.
By setting $\rho' \ge 24$, the 
ball $B(v, \mu^i/\rho')$ around $v$ is strictly contained in the ball of radius $\mu^i/12$ around $p$. 
Finally, as $N_i$ is covered by the $\sigma$ subnets ${N}^j_i$,
$p$ belongs to at least one subnet $N_i^j$.
\end{itemize}

This completes the proof of \Cref{lem:net-tree-cover}.

\subsection{Algorithm to build HPF using \Cref{lem:net-tree-cover}}
\label{ssec:alg-hpf-net-tree}
We will build the HPF from the ground up, where we construct a separate $\mu$-HP $\mathbb{P}^j$ for each of the $\sigma =2^{O(\ddim)}$ hierarchies  $\mathcal{N}^j$ of subnets, $j \in [\sigma]$, provided by \Cref{lem:net-tree-cover}. 
(While the diameter bound of the clusters to be built will be three times the lower bound on the distance between subnet points, the growth parameter is still $\mu$.)
In addition, we will show the \emph{padding property}: at each level $i$, for any ball $B(v,\mu^i/\rho)$, one of the partitions from the $\mu$-HP $\mathbb{P}^j$, for some $j \in [\sigma]$, has a cluster containing the ball.

Our goal is to build a partition $P^j_i$ of clusters with strong diameter $\mu^i$ associated with the points in the subnets $N_i^j$, for every $i \in [\log\Phi]$ and $j \in [\sigma]$, 
where $\sigma = 2^{O(\ddim)}$. 
For the rest of the algorithm, fix any $j \in [\sigma]$ and the hierarchy $\mathcal{N}^j$.
For the base case $i= 0$, $P^j_0$ consists of all vertices of $V$ as singleton clusters,
each of which of strong diameter $0 \le \mu^0$. We henceforth consider the case $i > 0$.

Assume the partition $P_{i-1}^j$ into level-$(i-1)$ clusters of strong diameter $\Delta' \coloneqq \mu^{i-1}$ centered at points in $N_{i-1}^j$ (and which satisfies the padding property) 
has been built; next, we will build the partition $P^j_i$ into level-$i$ clusters of strong diameter $\mu^i$ and make sure that the padding property is satisfied.  
We do so using the \EMPH{cluster aggregation} problem; an instance of this problem can be described as follows:
\begin{itemize}
    \item \textit{Input:}\,\, 
    An edge-weighted graph with vertex partition into clusters $\mathcal{C}$ of strong diameter $\Delta' \coloneqq \mu^{i-1}$ where the cluster centers form a {\em relaxed $\mu^{i-1}/3$-net}, 
    meaning that they form 
    a $\mu^{i-1}/3$-packing and a $\frac{5}{12}\mu^{i-1}$-covering of $V$,
    and a set of portals $N \coloneqq N_i^j$ that is a relaxed $\mu^i/3$-net, i.e.,  
    $N^j_i$ is 
    a $\mu^i$-packing and a $\frac{5}{12}\mu^{i}$-covering of $G$.
    \item \textit{Output:}\,\,  
    An assignment function $f: V \to N$ that maps clusters to portals, such that 
    (1) every vertex from the same cluster maps to the same portal under $f$ (in other words, $f(C)$ is well-defined for any cluster $C$), 
    (2) for every vertex $v$, the distance in the subgraph induced by the respective cluster of $V$
    between $v$ and its assigned portal $f(v)$
    is at most an additive $O(\Delta')$ distortion from the graph distance between $v$ and its closest portal in $N$; that is,
    \[
        d_{G[f^{-1}(f(v))]} (v, f(v)) \le d_G(v, N) + \beta\Delta'.
    \]
\end{itemize}
We carry out the construction of HPF assuming the existence of an additive $\beta\Delta'$ solution to the cluster aggregation problem~\cite{BCF+23}. Here, $\beta = \tilde{\Theta}(\ddim^2)$. We choose $\mu = \Theta(\ddim^3) \geq 12\beta + 12.$
Specifically, we perform cluster aggregation on the partition
$P_{i-1}^j$ into level-$(i-1)$ clusters with $N = N_i^j$ being the set of portals; 
by \Cref{lem:net-tree-cover}, the input to the cluster aggregation problem is valid.
Let $f$ be the resulting assignment function, and treat $\set{f^{-1}(p) : p \in N}$ 
as the set of level-$i$ clusters, to create a new cluster partition $P_{i}^j$.  

Since there are $\sigma = 2^{O(\ddim)}$ different hierarchies $\mathcal{N}^j$, we will have $2^{O(\ddim)}$ partitions in the final HPF.
We next show that the resulting HPF satisfies the required conditions.

\begin{claim}
    \label{clm:strg-diam}
    Level-$i$ clusters have strong diameter $\mu^i$.
\end{claim}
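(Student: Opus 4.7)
The plan is to bound the strong diameter of each level-$i$ cluster $C = f^{-1}(p)$ by routing any two vertices $u, v \in C$ through the portal $p \in N = N_i^j$ and invoking the additive guarantee of cluster aggregation.

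First, I would use the aggregation bound on each of $u$ and $v$ separately. By the cluster aggregation output, $d_{G[C]}(u, p) \le d_G(u, N) + \beta \Delta'$ and similarly for $v$, where $\Delta' = \mu^{i-1}$. The key inputs are the two facts supplied by \Cref{lem:net-tree-cover} and the choice of $\mu$: (i) $N = N_i^j$ is a $\frac{5}{12}\mu^i$-covering of $V$, so $d_G(u, N), d_G(v, N) \le \frac{5}{12}\mu^i$; and (ii) $\mu \ge 12\beta + 12$, so $\beta \Delta' = \beta \mu^{i-1} \le (\mu/12 - 1)\mu^{i-1} < \frac{1}{12}\mu^i$. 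Plugging in gives $d_{G[C]}(u,p), d_{G[C]}(v,p) < \frac{5}{12}\mu^i + \frac{1}{12}\mu^i = \frac{\mu^i}{2}$.

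Then I would finish with the triangle inequality in the induced subgraph $G[C]$:
\[
d_{G[C]}(u,v) \;\le\; d_{G[C]}(u,p) + d_{G[C]}(p,v) \;<\; \mu^i,
\]
which gives the desired strong diameter bound. Since $u, v \in C$ were arbitrary and the argument applies to every level-$i$ cluster in every HP $\mathbb{P}^j$, this proves the claim.

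I do not expect any real obstacle here: the lemma is essentially a bookkeeping exercise that cashes in the properties we already set up. The only subtle point is ensuring that the aggregation guarantee is used in the \emph{induced subgraph} (giving strong rather than weak diameter)---this is exactly what the $d_{G[f^{-1}(f(v))]}$ notation in the cluster aggregation output provides, and why we need the aggregation step (as opposed to, say, a Voronoi-style assignment that would only give a weak-diameter guarantee). The concrete constants $5/12$ (covering radius), $\beta$ (aggregation distortion), and $\mu \ge 12\beta + 12$ are chosen precisely so that two copies of $\frac{5}{12}\mu^i + \beta\mu^{i-1}$ fit under the budget $\mu^i$.
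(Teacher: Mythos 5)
Your proof is correct and matches the paper's proof essentially line for line: both bound $d_{G[C]}(v,p)$ by $\frac{5}{12}\mu^i + \beta\mu^{i-1}$ via the aggregation guarantee and covering radius, then double it via the triangle inequality and invoke $\mu \ge 12\beta + 12$. Your closing remark that the aggregation output must be read as a distance in the \emph{induced} subgraph is a correct and worthwhile clarification that the paper leaves slightly implicit.
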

\begin{proof}
By the guarantee of cluster aggregation, each vertex $v$ in any level-$i$ cluster is at most $d_G(v, N_{i}^j) + \beta\Delta'$ away from the unique portal within.   
Since $N_{i}^j$ is a relaxed $\mu^i/3$-net, it  provides a $\frac{5}{12} \mu^i$-covering for $V$, and so $d_G(v, N_{i}^j) \le \frac{5}{12} \mu^{i}$. Hence each level-$i$ cluster has (strong) diameter at most 
$2 \cdot (d_G(v, N_{i}^j) + \beta\Delta') \le 2(\frac{5}{12}\mu^{i} + \beta\mu^{i-1})$,
thus strong diameter $\mu^i$ because  $\mu \ge 12\beta + 12$. 
\end{proof}

\begin{claim} \label{cl:unique}
Any level-$(i-1)$ cluster intersecting $B(p,\mu^i/12)$ is assigned to the portal $p$.
\end{claim}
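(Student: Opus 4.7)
The plan is to prove this claim by a direct contradiction argument leveraging the packing property of the portal set together with the additive distortion guarantee of cluster aggregation. Since the assignment function $f$ is constant on each input cluster, it suffices to show that some vertex $v$ in the cluster is assigned to $p$.

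First, I would fix a level-$(i-1)$ cluster $C$ intersecting $B(p,\mu^i/12)$ and pick a witness vertex $v \in C$ with $d_G(v,p) \le \mu^i/12$. Suppose for contradiction that $f(v) = p' \ne p$. Since the portals $N_i^j$ form a $\mu^i/3$-packing (the packing guarantee of the relaxed net provided by \Cref{lem:net-tree-cover}), we have $d_G(p,p') \ge \mu^i/3$. The triangle inequality then yields
\[
d_G(v,p') \ge d_G(p,p') - d_G(v,p) \ge \frac{\mu^i}{3} - \frac{\mu^i}{12} = \frac{\mu^i}{4}.
\]

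On the other hand, the covering bound on the witness $v$ gives $d_G(v,N_i^j) \le \mu^i/12$, and therefore the cluster aggregation guarantee yields
\[
d_G(v,f(v)) \;\le\; d_{G[f^{-1}(f(v))]}(v,f(v)) \;\le\; d_G(v,N_i^j) + \beta\Delta' \;\le\; \frac{\mu^i}{12} + \beta\mu^{i-1}.
\]
Combining the two bounds under the assumption $f(v) = p'$ gives $\mu^i/4 \le \mu^i/12 + \beta\mu^{i-1}$, i.e., $\mu^i/6 \le \beta \mu^{i-1}$, i.e., $\mu \le 6\beta$. This contradicts the choice $\mu \ge 12\beta + 12 > 6\beta$ fixed in \Cref{ssec:alg-hpf-net-tree}. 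Hence $f(v) = p$, and by property (1) of cluster aggregation (every vertex in the same input cluster is assigned the same portal), the entire cluster $C$ maps to $p$.

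I do not anticipate a real obstacle here beyond carefully tracking the packing parameter: the only subtlety is to confirm that the portal set $N_i^j$ is indeed a $\mu^i/3$-packing (as stated in the hierarchical property of \Cref{lem:net-tree-cover}), so that the separation $d_G(p,p') \ge \mu^i/3$ combines with the additive distortion bound to beat the gap $\mu^i/4 - \mu^i/12$. Once the constants are reconciled, the proof is essentially a one-line triangle inequality, and no additional machinery beyond the cluster-aggregation specification is needed.
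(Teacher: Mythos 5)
Your proof is correct and uses essentially the same idea as the paper: a packing bound on the portals, the additive-distortion guarantee of cluster aggregation, and a triangle inequality, leading to a contradiction with the choice $\mu \ge 12\beta + 12$. The only (cosmetic) difference is that you anchor the argument on a specific witness vertex $v \in C \cap B(p,\mu^i/12)$, which avoids the extra $\mu^{i-1}$ diameter slack that the paper carries by bounding $d_G(p,v)$ for arbitrary $v \in C$; the paper instead bounds $d_G(p,p')$ directly by going through an arbitrary point of $C$. Both routes close under the same parameter choice.
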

\begin{proof}
Let $C$ be a cluster of diameter $\Delta' = \mu^{i-1}$ intersecting the ball $B(p,\mu^i/12)$.
We argue that 
$C$ must be assigned to portal $p$ by the cluster aggregation procedure.
Assume for contradiction that $C$ is assigned to another portal $p'$.  Since the portals belong to $N_i^j$, 
which is a $\mu^i/3$-packing, we have that 
$d_G(p,p') > \mu^i/3$.
Cluster $C$ has diameter $\mu^{i-1}$ and it intersects the ball $B(p,\mu^i/12)$, hence any point inside $C$ is within distance at most $\mu^i/12 + \mu^{i-1}$ from $p$ (and thus from $N_i^j$).
However, the assumption that $C$ is assigned to $p'$ implies that, for any $v \in C$,
\[
    d_G (v,p') 
    ~\le~ d_{G[f^{-1}(p')]} (v, p') 
    ~\le~ d_G(v, N_i^j) + \beta\Delta' 
    ~\le~ \mu^i/12 + \mu^{i-1} + \beta\mu^{i-1}.
\]
Triangle inequality then derives a contradiction, 
since $\mu \geq 12\beta + 12$: 
\[
    d_G (p,p') ~\le~ d_G (p,v) + d_G (v,p')
    ~\le~ (\mu^i/12 + \mu^{i-1}) + (\mu^i/12 + \mu^{i-1} + \beta\mu^{i-1})
    ~<~ \mu^i.
\]
\aftermath
\end{proof}

\begin{claim}
\label{clm:HPF-correctness}
The partitions $\mathbb{P}^j$ produce a $(\rho,\mu)$-HPF.
\end{claim}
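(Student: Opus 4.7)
The plan is to verify the two defining requirements of a strong-diameter $(\rho,\mu)$-HPF for the family $\mathfrak{P} \coloneqq \{\mathbb{P}^1,\ldots,\mathbb{P}^\sigma\}$ produced above: (i) each $\mathbb{P}^j$ is a valid strong-diameter $\mu$-hierarchical partition in the sense of \Cref{def:HP}, and (ii) the padding property holds with $\rho \coloneqq \rho' \ge 24$ from \Cref{lem:net-tree-cover}. The strong-diameter bound $\mu^i$ for every level-$i$ cluster is already in hand from \Cref{clm:strg-diam}, and the degree bound follows from the packing bound applied to the subnet $N^j_i$, since each portal $p \in N^j_i$ receives only the level-$(i-1)$ clusters whose centers lie within distance $O(\mu^i)$ of $p$, of which there are at most $\mu^{O(\ddim)}$ by doubling.

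For the hierarchical condition, I would observe that the cluster aggregation procedure works \emph{at the granularity of level-$(i-1)$ clusters}: the assignment $f$ is forced by definition to send all vertices of a given level-$(i-1)$ cluster to the same portal, so every level-$i$ cluster $f^{-1}(p)$ is literally a union of level-$(i-1)$ clusters. For the base case $i=0$ the singleton partition is trivially refined by itself, and for $i = \log\Phi$ the top subnet contains a single point, so the top-level cluster is all of $V$. This gives a legal nested sequence of partitions $P^j_0,\ldots,P^j_{\log\Phi}$, confirming that each $\mathbb{P}^j$ is a strong-diameter $\mu$-HP.

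For the padding property, fix a vertex $v$ and a scale $i \ge 1$. The base case $i=0$ is trivial since $\mu^0/\rho < 1$ is smaller than the minimum interpoint distance. For $i \ge 1$, apply the padding guarantee of \Cref{lem:net-tree-cover}: there exist an index $j$ and a point $p \in N^j_i$ such that $B(v,\mu^i/\rho') \subseteq B(p,\mu^i/12)$. Setting $\rho \coloneqq \rho'$, every $u \in B(v,\mu^i/\rho)$ lies in $B(p,\mu^i/12)$, so the level-$(i-1)$ cluster $C_u$ containing $u$ intersects $B(p,\mu^i/12)$. By \Cref{cl:unique}, $C_u$ is assigned by $f$ to the portal $p$, so $u \in f^{-1}(p)$, the level-$i$ cluster in $P^j_i$ centered at $p$. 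Thus $B(v,\mu^i/\rho)$ is fully contained in a single cluster of $P^j_i$, establishing the padding property.

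The main subtlety I expect to watch out for is the compatibility of the three constants $\rho,\rho',\mu$ together with $\beta = \tilde{O}(\ddim^2)$ from cluster aggregation: we need $\mu \ge 12\beta + 12$ simultaneously for the strong-diameter argument in \Cref{clm:strg-diam} and for the contradiction in \Cref{cl:unique} that forces $C_u$ to be assigned to $p$. Both are already baked in by the choice $\mu = \tilde{O}(\ddim^3)$ and $\rho' \ge 24$, so this is a routine consistency check rather than a real obstacle. Once these fit together, combining \Cref{clm:strg-diam}, the hierarchical-by-construction observation, and the padding argument above yields the claim.
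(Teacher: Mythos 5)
Your proposal is correct and follows essentially the same route as the paper: strong diameter is delegated to \Cref{clm:strg-diam}, the hierarchical-by-construction observation follows because cluster aggregation assigns level-$(i-1)$ clusters wholesale to portals, and the padding property is derived by combining the padding guarantee of \Cref{lem:net-tree-cover} with \Cref{cl:unique} exactly as the paper does. The only cosmetic differences are that you spell out the base case $i=0$ for padding and mention the degree bound (which the paper defers to \Cref{obs:bdd-deg} rather than proving inside this claim), neither of which changes the substance of the argument.
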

\begin{proof}
By the construction and the previous two claims, we have that among the $\sigma = 2^{O(\ddim)}$ hierarchical partitions $\mathbb{P}^j = (P^j_0, \dots, P^j_{\log\Phi})$ generated, $j \in [\sigma]$, each level-$i$ cluster in $P^j_{i}$ has strong diameter at most $\mu^i$ and it is the union of some level-$(i-1)$ clusters in $P^j_{i-1}$. 
It remains to show the \emph{padding property}: for every level $i$ and for every vertex $v$, there is a scale-$i$ partition $P^j_i$, for some $j \in [\sigma]$, and some cluster $C \in P^j_i$, such that the ball $B(v, \mu^i / \rho)$ is contained in $C$.

Fix a level $i$ and an arbitrary vertex $v$.
Choose $\rho = \rho' \ge 24$.
By Lemma~\ref{lem:net-tree-cover}, there is a subnet $N_{i}^j$ in the hierarchy $\mathcal{N}^j$, for some $j \in [\sigma]$, such that the ball $B(v, \mu^i/\rho)$ around $v$ lies completely within the ball of radius $\mu^i/12$ around some point $p$ in $N_{i}^j$.
Consider the level-$i$ cluster $C_p$ corresponding to portal $p$.
\Cref{cl:unique} implies that all level-$(i-1)$ clusters intersecting $B(p, \mu^i/12)$ are merged into cluster $C_p$.
As a result, the ball $B(v, \mu^i/\rho)$ lies completely within $C_p$; this proves the padding property.
\end{proof}

For our application to spanning tree cover, we will need the fact that every cluster in a level of any HP is partitioned into constant number of cluster in the next level. 
We then need to bound the \emph{maximum degree} of an HPF.

In our construction of $\mathbb{P}^j$ for each $j$, recall that for each cluster $C$ built from a net point $u$, we set $u$ to be the representative of $C$.

\begin{observation}
    \label{obs:bdd-deg}
    The partition $\mathbb{P}^j$ has maximum degree $\mu^{O(\ddim)}$.
\end{observation}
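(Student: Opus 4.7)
The plan is to bound the degree of an arbitrary level-$i$ cluster $C \in P^j_i$ by controlling the number of level-$(i-1)$ clusters that the cluster aggregation procedure can assign to the portal $p \in N^j_i$ corresponding to $C$. First, I would observe that every child of $C$ is a level-$(i-1)$ cluster whose center lies in $N^j_{i-1}$, so it suffices to upper bound the number of points $u \in N^j_{i-1}$ that can serve as the center of some child cluster assigned to $p$.

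Next, I would use the cluster aggregation guarantee together with the covering property from \Cref{lem:net-tree-cover} to argue that any such center $u$ must lie $G$-close to $p$. Concretely, if $u$ is the center of a child cluster of $C$, then $u$ is a vertex of $C$, so the cluster aggregation bound gives
\[
d_G(u,p) \;\le\; d_{G[f^{-1}(p)]}(u,p) \;\le\; d_G(u, N^j_i) + \beta\,\mu^{i-1} \;\le\; \tfrac{5}{12}\mu^i + \beta\,\mu^{i-1} \;=\; O(\mu^i),
\]
where the last step uses the $\tfrac{5}{12}\mu^i$-covering property of $N^j_i$ and the fact that $\beta = \tilde{O}(\ddim^2) \ll \mu$.

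Finally, I would apply the standard packing bound in doubling metrics. Since $N^j_{i-1}$ is a $(\mu^{i-1}/3)$-packing (as guaranteed by the hierarchical property of the subnets), the ball around $p$ of radius $O(\mu^i)$ in $G$ contains at most $(O(\mu^i)/(\mu^{i-1}/3))^{O(\ddim)} = \mu^{O(\ddim)}$ points of $N^j_{i-1}$. Therefore $C$ has at most $\mu^{O(\ddim)}$ children, giving the claimed degree bound.

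I do not expect any serious obstacle here: the construction has already been engineered so that child centers are forced into a small ball around the portal, and the doubling property of $G$ does the rest. The only care needed is to confirm that the packing lower bound for $N^j_{i-1}$ guaranteed by \Cref{lem:net-tree-cover} is indeed $\mu^{i-1}/3$ (which the ``$\mu^i/3$-net'' language in the hierarchical property ensures), so that the ratio of radii that enters the packing bound is $O(\mu)$ rather than something growing with $i$.
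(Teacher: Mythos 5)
Your proof is correct and follows essentially the same approach as the paper: both locate the centers of $C$'s children inside a ball of radius $O(\mu^i)$ around a fixed point and then invoke the packing bound against the $\mu^{i-1}/3$-packing property of $N^j_{i-1}$. The only cosmetic difference is that the paper cites the already-established strong-diameter bound (Claim~\ref{clm:strg-diam}) to get the $O(\mu^i)$ radius, whereas you re-derive it directly from the cluster aggregation guarantee and the covering property of $N^j_i$.
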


\begin{proof}
    Let $\mathbb{P}^j = \Set{P^j_0, P^j_1, P^j_2, \ldots P^j_{\log\Delta} }$. By \Cref{clm:strg-diam}, each cluster $C$ in $P^j_i$ has strong diameter $\mu^i$. 
    Since the cluster centers of children 
    of $C$ are subset of a $\mu^{i - 1}/3$-net, using packing bound, we get the total number of children of $C$ is at most $\left(\frac{3\mu^i}{\mu^{i - 1}}\right)^{\ddim} = \mu^{O(\ddim)}$. 
\end{proof}

\smallskip
\noindent Therefore \Cref{lem:HPF} follows from \Cref{clm:strg-diam}, \Cref{cl:unique} and \Cref{obs:bdd-deg}. 

\newcommand{\dfs}[1]{\ensuremath{t_{#1}}}

\section{Optimal Routing in Doubling Graphs}
\label{S:routing}

A \EMPH{routing scheme} is a distributed algorithm for routing (or sending)  packet headers (or messages) 
from any source to any destination vertex in a network. 
We restrict attention to {\em labeled routing schemes}, where the network is preprocessed to assign each vertex a unique \EMPH{label} (also called an address) and a \EMPH{local routing table}; in contrast, in {\em name-independent routing schemes} vertex labels are chosen adversarially (and are independent of the routing scheme).  
The edges incident to each vertex are given \EMPH{port numbers}; in the \EMPH{designer-port} model these port numbers can be chosen by the algorithm designer in preprocessing, whereas in the \EMPH{fixed-port} model these port numbers are chosen adversarially.
\footnote{Normally, port numbers around a vertex $v$ must lie in the range $[1, \deg(v)]$. We relax this assumption and only require that the port numbers are $O(\log n)$-bit numbers; that is, our results hold even in a more demanding setting than the standard setting. We do this to simplify the presentation.}
An instance of routing begins at an arbitrary source vertex, which sets up a packet header (a message), based on the labels of the source and destination vertices, and based on 
the local routing table of the source vertex.
Upon the reception of a message at a vertex, that vertex decides whether the message has reached its destination and, if not, where to forward it, based on the message it received and its local routing table; in the latter case, that vertex selects a port number and forwards the message along the corresponding edge.
The routing scheme is said to have \EMPH{stretch} $k$ if the path found by any instance of routing, from any source $s$ to any destination $t$, has length at most $k$ times the distance between $s$ and $t$ in the graph. 
The basic goal is to achieve a low stretch, ideally approaching 1, via a {\em compact} routing scheme, which means that
the routing table size at each vertex, as well as the size of all labels and messages, should be polylogarithmic in the network size, and ideally bounded by $O(\log n)$ bits.
Additionally we would like to bound the \EMPH{routing decision time}, which is the time taken (assuming a word RAM model) by a node to compute the port to which it will forward its message; ideally the decision time should be constant.
We use our spanning tree cover to achieve such a result.

\RoutThm*

The problem of labeled routing on \emph{trees} has been well studied, and speaking roughly, routing on a single tree is much easier than on other graph classes. This is why our spanning tree cover will be useful. To construct a routing scheme for $G$, we want to construct a tree cover for $G$ and then route along one of the trees from the tree cover. However, in general trees, fixed-port routing requires labels or routing tables of size $\Theta(\log^2 n/ \log \log n)$ bits: this is tight if one demands exact routing (stretch $1$), and nothing better is known even if one allows $1+\e$ approximation.
We show that we can bypass these lower bounds for the trees that we care about (coming from our tree cover). In order to do this, we first preprocess graph $G$ by constructing a greedy spanner\footnote{that is, iterate over all edges in order of increasing weight, and add the edge $(x,y)$ to the spanner $G'$ unless $G'$ already satisfies $\dist_{G'}(x,y) \le (1+\e) \dist_G(x,y)$.} $G'$ for $G$, and then construct a spanning tree cover for $G'$. We show that the resulting trees have very compact routing schemes. This is the focus of \Cref{ssec:routing-spanning-tree}.
\begin{lemma}
\label{lem:routing-spanning-tree}
    Let $\e \in (0,1)$. Let $G$ be a graph with doubling dimension $\ddim$, and let $G'$ be a $(1+\e)$-approximate greedy spanner for $G$. Let $T$ be an $n$-vertex tree which is a subgraph of $G'$. Then there is a $(1+\e)$-stretch labeled routing scheme for $T$ in the fixed-port model where the sizes of labels, headers, and routing tables are $\e^{-O(\ddim)} \cdot \log n$ bits. The routing decision time is $\e^{-O(\ddim)}$.
\end{lemma}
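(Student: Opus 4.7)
The plan is to exploit the fact that the tree $T$ inherits a bounded maximum degree from the greedy spanner $G'$, which will let us bypass the $\Theta(\log^2 n / \log\log n)$ lower bound for fixed-port tree routing by converting a standard designer-port tree routing scheme into a fixed-port one with only an $\e^{-O(\ddim)}$ multiplicative overhead in the memory.

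The first step is to observe that the greedy $(1+\e)$-spanner of a graph of doubling dimension $\ddim$ has maximum degree bounded by $\e^{-O(\ddim)}$ — this is a well-known property of greedy spanners in doubling metrics/graphs (proved by repeated application of the packing bound together with the edge-inclusion rule of the greedy algorithm). Since $T$ is a subgraph of $G'$, every vertex of $T$ has degree at most $\Delta \coloneqq \e^{-O(\ddim)}$. This is the crucial structural property we will leverage; without it, we could not hope to avoid the aforementioned lower bound.

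The second step is to apply an off-the-shelf labeled routing scheme for trees in the \emph{designer-port} model (e.g.\ Thorup--Zwick or Fraigniaud--Gavoille), which assigns each vertex an $O(\log n)$-bit label built from a heavy-path decomposition together with DFS intervals, and which enables exact routing on $T$ with $O(1)$ routing decision time. In this designer-port scheme, the decision at any intermediate vertex $w$ takes the form ``forward toward parent'' or ``forward toward the child whose subtree contains the destination'', where the child in question is uniquely specified by comparing the source/destination labels to the local structure of $w$.

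Finally, I will convert this designer-port scheme into a fixed-port scheme by augmenting the local table at each vertex $w$ with an explicit mapping from ``neighbor identity'' (parent, or a specific child in the heavy-path decomposition) to the actual adversarially assigned port number. Because $\deg_T(w) \le \Delta = \e^{-O(\ddim)}$, this mapping takes $O(\Delta \cdot \log n) = \e^{-O(\ddim)} \cdot \log n$ bits per vertex, and a lookup takes $O(\Delta) = \e^{-O(\ddim)}$ time. The routing path produced is exactly the (unique) $u$-to-$v$ path in $T$, so the stretch on $T$ is $1 \le 1+\e$. The main obstacle — identifying a designer-port tree routing scheme whose ``next hop'' decision is expressed combinatorially in terms of the tree structure rather than directly in terms of port numbers — is resolved by the standard heavy-path-based constructions, for which the decoupling is immediate. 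Scaling $\e$ by a constant absorbs the $(1+\e)$ distortion introduced by the greedy spanner and yields the claimed bounds.
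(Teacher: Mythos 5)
Your plan hinges on the claim that a greedy $(1+\e)$-spanner $G'$ of a doubling \emph{graph} has maximum degree $\e^{-O(\ddim)}$, but this is false. That degree bound holds for greedy spanners of doubling \emph{metrics}, where any edge of the complete graph is available; it does not hold for doubling graphs, where the spanner must be a subgraph of $G$. A standard counterexample is the star with edges of weight $1, 2, 4, \ldots, 2^k$: the shortest-path metric has doubling dimension $O(1)$, yet the greedy spanner must retain every star edge (removing one disconnects the graph), so the center has degree $k+1 = \Theta(\log\Phi)$, unbounded in terms of $n$ and $\e$. The paper itself makes this point in the introduction: ``doubling graphs might not have a spanning tree cover of bounded degree: a counterexample is the star graph with exponentially increasing edge weights.'' Once the degree bound fails, the rest of your argument collapses: the explicit neighbor-to-port table you propose would cost $\Theta(\deg_T(x)\cdot\log n)$ bits at a vertex $x$, which can be $\omega(\e^{-O(\ddim)}\log n)$.

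The actual structural property available (and what the paper proves in \Cref{clm:almost-degree}) is weaker: at any vertex, only $\e^{-O(\ddim)}$ incident edges have weight within a factor $2$ of each other. This ``almost bounded-degree'' property does not let you store a full neighbor-to-port map, so a direct conversion of a designer-port interval-routing scheme does not go through. The paper's routing scheme is genuinely different: it performs the DFS traversal taking the minimum-weight edge first (so children are ordered by edge weight), stores only $\beta=\e^{-O(\ddim)}$ children and siblings per vertex, and, when the destination's child is not among those stored, it routes to the first child and sweeps laterally through siblings. The geometric growth of the edge weights guarantees the detour costs only an $\e$-fraction extra, which is where the $1+\e$ stretch (rather than stretch $1$) comes from. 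You would need something like this lateral-sweep argument; the bounded-degree shortcut is not available.
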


The above lemma is not sufficient to prove \Cref{thm:routing} --- given a source vertex $x$ and vertex $y$, before routing along a tree $T$ from our tree cover, we first need to find an appropriate tree $T$ that approximately preserves distance between $x$ and $y$. This is the focus of \Cref{ssec:find-correct-tree}.
\begin{restatable}{lemma}{findCorrectTree}
\label{lem:find-correct-tree}
    Let $\e \in (0,1)$. Let $\cT = \set{T_1, \ldots, T_k}$ be the spanning tree cover for $G$ constructed by \Cref{thm:spanning-main}, where $k = \e^{-\tilde O(\ddim)}$. There is way to assign $\e^{-\tilde O(\ddim)} \cdot \log n$-bit labels to each vertex in $V(G)$ so that, given the labels of two vertices $x$ and $y$, we can identify an index $i$ such that tree $T_i$ is a ``distance-approximating tree'' for $u$ and $v$; that is, 
    $\dist_{T_i}(x,y) \le (1+\e) \cdot \dist_G(x,y)$. This decoding can be done in $O(d \cdot \log 1/\e)$ time.
\end{restatable}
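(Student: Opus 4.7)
The plan is to leverage the pair-preserving property of the HPF $\mathfrak{H}$ from \Cref{lm:pair_preserve_HPF} together with a standard doubling-metric distance label, and to encode in each vertex's label just enough information to recover, for any queried pair $(x,y)$, the unique HP $\mathbb{P} \in \mathfrak{H}$ whose cluster $C$ at some level $i$ has $(x,y)$ sitting in its associated subcluster pair. By \Cref{lem:assigned-good-stretch} this $\mathbb{P}$, together with the offset $j = i \bmod \ell$, identifies a tree $T_{\mathbb{P}}^{j} \in \cT$ that preserves $\dist_G(x,y)$ up to a factor $1+\e$. The identification naturally splits into two stages: (a) locating the correct scale $i^*$ from an approximate distance, and (b) locating the correct HP and subcluster pair at that scale from HPF-indexed cluster paths.

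\medskip

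The label of a vertex $v$ will consist of three parts. First, a $(1+\e)$-approximate distance label for doubling metrics (e.g.\ from Har-Peled--Mendel) of size $\e^{-O(d)} \log n$ bits, used to compute an estimate $\widehat d$ of $\dist_G(x,y)$ and set $i^* \coloneqq \lceil \log_\mu \widehat d \rceil$. Second, for each base HP $\mathbb{P}' \in \mathfrak{P}'$ of the bounded-degree strong-diameter HPF from \Cref{lem:HPF} (which has size $2^{\tilde O(d)}$), the sequence of cluster indices along $v$'s root-to-leaf path in the cluster-hierarchy tree of $\mathbb{P}'$; each entry takes $\tilde O(d)$ bits because every cluster has at most $\mu^{O(d)}$ children. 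Third, for each of the $O(\log \Phi)$ scales $i$, a pointer to the padding base HP $\mathbb{P}'_i(v) \in \mathfrak{P}'$ whose level-$i$ cluster contains the ball $B(v, \mu^i/\rho)$, guaranteed to exist by the padding property of $\mathfrak{P}'$. With the standard assumption $\log \Phi = O(\log n)$ (achievable by the usual bucketing and truncation of irrelevant distance scales), the total label size is $\e^{-\tilde O(d)} \cdot \log n$ bits.

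\medskip

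Decoding, given the labels of $x$ and $y$, proceeds as follows. Compute $\widehat d$ and set $i^*$; read from $x$'s label the padding HP $\mathbb{P}' \coloneqq \mathbb{P}'_{i^*}(x)$. Since $y \in B(x, \mu^{i^*}/\rho)$ by construction of $\widehat d$, the vertices $x$ and $y$ share the same level-$i^*$ cluster $C$ of $\mathbb{P}'$. Next, read from the second component of both labels the cluster indices along the path from $C$ down through $\ell = \lceil \log_\mu(1/\e) \rceil$ levels; a level-by-level comparison identifies the $\e$-subclusters $C_1 \ni x$ and $C_2 \ni y$ at level $i^* - \ell$. Finally, the preprocessing step of \Cref{lm:pair_preserve_HPF} fixes, for each base HP $\mathbb{P}'$ and each cluster $C$, a one-to-one mapping between subcluster pairs of $C$ and copies of $\mathbb{P}'$ in $\mathfrak{H}$; this indexing lets us read off the tree index of $T_{\mathbb{P}}^{i^* \bmod \ell} \in \cT$ in $O(1)$ time. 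The dominant cost is the $\ell$-level walk on cluster indices, each $\tilde O(d)$ bits wide, accounting for the claimed $O(d \cdot \log 1/\e)$ decoding time.

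\medskip

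The main obstacle is ensuring that every lookup in the decoding algorithm, especially accessing the appropriate (HP, level) cell in the second component of a label, takes $O(1)$ time rather than requiring a linear scan through the $\e^{-\tilde O(d)} \log n$-bit label. I would handle this by storing the per-HP cluster-path data as a fixed-length array with known offsets, so that given the padding-HP index retrieved in the first lookup, the relevant slice is located by arithmetic on word-indices. A secondary technicality is that $\widehat d$ may be off by a factor of $1+\e$ and hence $i^*$ by at most one; this can be absorbed either by trying the two candidate scales or by inflating $\rho$ (already $\tilde O(d^3)$ in \Cref{lm:pair_preserve_HPF}) by a small constant factor so that the padding ball comfortably contains $y$ even under the estimate's error.
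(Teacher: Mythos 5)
Your proposal takes a genuinely different route from the paper: you derive the scale $i^*$ from a $(1+\e)$-approximate distance label, use the HPF padding property to locate the containing cluster at that scale, and walk the stored cluster-index paths down $\ell$ levels to recover the $\e$-subclusters. The paper instead bypasses distance estimation entirely: it defines, for each hierarchy $\mathbb H$ and each offset $j$, a \emph{compressed subhierarchy} $\check{\mathbb H}^j$ (keeping only levels $\equiv j \pmod \ell$ and contracting degree-$1$ chains), observes that the LCA of $x$ and $y$ in some $\check{\mathbb H}^j$ is precisely the cluster whose assigned subcluster pair contains them, and then applies the LCA labeling scheme of Alstrup--Halvorsen--Larsen (as adapted in CCL+24b) to recover the LCA's name in $O(1)$ time. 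That reformulation is what makes the whole problem a black-box invocation of an existing labeling result.

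The difference is not just stylistic; your approach has a concrete gap around aspect ratio that the paper's compression handles cleanly. You store, per base HP, the sequence of cluster indices along $v$'s root-to-leaf path, which has length $\Theta(\log_\mu \Phi)$. You dismiss the super-polynomial aspect-ratio case with ``the usual bucketing and truncation,'' but this doesn't work directly: the HPF is a genuine hierarchy, so you cannot simply delete scales without breaking the ``union of children'' structure that the subcluster-pair machinery depends on, and a per-vertex path can traverse $\omega(\log n)$ distinct clusters. The paper's compressed subhierarchy solves exactly this: after contracting all single-child chains, $\check{\mathbb H}^j$ has $O(n)$ clusters total, so Alstrup et al.'s labels stay $O(\log n)$ bits \emph{regardless} of $\Phi$. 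Your scheme also relies on (i) a $(1+\e)$-distance label whose error forces you to guess $i^*$ within $\pm 1$ and then verify well-separatedness --- information you do not actually store --- and (ii) an implicit assumption that the map from a subcluster pair $(C_1,C_2)$ of $C$ to the copy index in $\mathfrak H$ is a deterministic function that the decoder can recompute from child indices alone. Both are plausible but left unaddressed, whereas the paper sidesteps (i) entirely (the LCA \emph{is} the correct scale by construction) and makes (ii) explicit via the (L3) component of each name $L_x(C)$.
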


Equipped with these lemmas, we now describe our routing scheme to prove \Cref{thm:routing}. We first construct a $(1+\e)$-approximate greedy spanner
$G'$ for $G$.
Next construct a $(1+\e)$-spanning tree cover $\cT$ on $G'$, as guaranteed by \Cref{thm:spanning-main}. Observe that $G'$ has doubling dimension $O(\ddim)$, as distances are only distorted a $1+\e$ factor.%
\footnote{A radius-$r$ ball in $G'$ contains every vertex in a radius-$\frac{r}{1+\e}$ ball in $G$. Thus packing bound implies that a radius-$r$ ball in $G$ can be covered with $2^\ddim$ many radius-$r$ balls in $G'$. Now consider a radius-$2r$ ball in $G'$. Such a ball is covered by a single radius-$2r$ ball in $G$, so it can be covered by $2^\ddim$ radius-$r$ balls in $G$, so it can be covered by $2^\ddim \cdot 2^\ddim = 2^{O(\ddim)}$ radius-$r$ balls in $G'$.}
For each of the $\e^{-\tilde O(\ddim)}$ trees $T \in \cT$, we construct a fixed-port labeled routing scheme on $T$ using \Cref{lem:routing-spanning-tree}, where (for each instance of routing on a tree $T$) we set the port numbers around a vertex $x$ in $T$ to be the same as the provided port numbers around $x$ in $G$.\footnote{Note that the port number for $x$ in tree $T$ will not necessarily be in the range $[1, \deg_T(x)]$, but rather could be in the range $[1, \deg_G(x)]$; nevertheless each port number is $O(\log n)$ bits.}
The final routing table of a vertex $x$ in $V(G)$ stores the concatenation of the routing tables of $x$ that correspond to each of the trees in $\cT$.
The final label of $x$ is given as the concatenation of the labels of $x$ that correspond to each of the trees in $\cT$, as well as an additional label component from \Cref{lem:find-correct-tree} which lets us identify a distance-approximating tree.
While routing in $G$, we first identify a distance-approximating tree $T$ using the labels of the source and destination, and then route along $T$ using the stored labels/routing tables/port numbers for $T$.

The stretch of our routing scheme is $(1+\e)^3 \le 1+O(\e)$, because the the stretch of the greedy spanner is $(1+\e)$, the stretch of the tree cover on the spanner is $(1+\e)$, and we use a $(1+\e)$-stretch routing scheme on these trees.
The routing scheme uses $\e^{-\tilde O(\ddim)} \cdot \log n$ bit routing tables and labels.
The routing decision time is dominated by the time it takes to route along $T$, which is $\e^{-O(\ddim)}$.
Rescaling $\e \gets \e/O(1)$ proves~\Cref{thm:routing}.

\subsection{Routing along a single tree}
\label{ssec:routing-spanning-tree}
This section is dedicated to the proof of \Cref{lem:routing-spanning-tree}. Let \EMPH{$\e$} be some value in $(0,1)$. Let $G$ be a graph with doubling dimension $\ddim$, and let $G'$ be a $(1+\e/3)$-approximate greedy spanner of $G$. Let \EMPH{$T$} be a tree which is a subgraph of a $G'$.

\paragraph{Interval routing for bounded-degree trees.} We first describe a classic stretch-1 (i.e., exact) routing scheme of \cite{SK85} which works well for trees of small degree. Fix an arbitrary root and perform a DFS traversal over $T$. The \EMPH{DFS timestamp} of a vertex $x$, denoted \EMPH{$\dfs x$}, is the time $x$ was first reached in the traversal. The \EMPH{DFS interval} of each vertex $x$ is the closed interval $[a,b]$ where $a \coloneqq \dfs x$ and $b$ is the largest DFS timestamp of any descendant of $x$. Observe that a vertex $y$ is a descendant of $x$ if and only if the $\dfs y$ lies within the DFS interval of $x$.
Moreover, the DFS intervals of the \emph{children} of $x$ (together with $\dfs x$) form a partition of the DFS interval of $x$. In our routing scheme, the label of vertex $x$ is simply $\dfs x$. The local routing table of $x$ stores the DFS interval of $x$ and the port number of the parent of $x$, as well as the DFS intervals and port numbers of each child of $x$. To route to a target vertex $y$ from a current vertex $x$, we check the local routing table of $x$: if $\dfs y = \dfs x$ then we are done; otherwise if $\dfs y$ is outside the DFS interval of $x$ then we route to the parent of $x$; and otherwise $\dfs y$ lies in some DFS interval of a child of $x$ and we route along the corresponding port number.
The size of the label of a vertex $x$ is $O(\log n)$ bits. The size of the routing table of $x$ is $O(\deg(x) \cdot \log n)$ bits, as we need $O(\log n)$ bits per child to store the associated interval and port number.

\paragraph{Almost bounded-degree property.} In our setting, the tree $T$ might \emph{not} have bounded degree. However, we can achieve something similar if we restrict ourselves to looking at edges of roughly the same weight. Here is where we use the assumption that $T$ is a subgraph of a greedy spanner of a graph with bounded doubling dimension.
\begin{claim}
\label{clm:almost-degree}
    There is some $\EMPH{$\alpha$} = \e^{-O(\ddim)}$ such that: For any $\ell > 0$ and any vertex $x$ in $T$, there are at most $\alpha$ edges of $T$ incident to $x$ with weight in the range $[\ell, 2\ell]$.
\end{claim}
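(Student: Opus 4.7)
My plan is to show that any two distinct neighbors $y_1, y_2$ of $x$ in $T$ via edges of weight in $[\ell, 2\ell]$ must lie at $G$-distance $\Omega(\e \ell)$ from each other, and then invoke the packing bound in $G$'s doubling metric to conclude that the number of such neighbors is $\e^{-O(\ddim)}$.

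The main tool is the greedy spanner property of $G'$. First I would establish a preliminary observation: for any edge $(u,v) \in G'$, we have $w(u,v) = \dist_G(u,v)$. To see this, suppose otherwise, so that the shortest $u$-$v$ path in $G$ uses only edges strictly lighter than $(u,v)$. A telescoping argument --- summing the per-edge bound $\dist_{G'}(a,b) \le (1+\e/3)\, w(a,b)$, which holds after each such edge $(a,b)$ has been processed by the greedy algorithm (regardless of whether it was added) --- then yields $\dist_{G'_<}(u,v) \le (1+\e/3)\, \dist_G(u,v)$ at the moment $(u,v)$ is considered, contradicting the greedy rule that added $(u,v)$.

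Now fix two distinct such neighbors $y_1, y_2$ and assume without loss of generality that $w(x, y_1) \le w(x, y_2)$. When $(x, y_2)$ was considered, $(x, y_1)$ was already in $G'_<$, and the greedy rule gives $\dist_{G'_<}(x, y_2) > (1+\e/3)\, \dist_G(x, y_2) = (1+\e/3)\, w(x, y_2)$. The triangle inequality through $y_1$ gives $\dist_{G'_<}(x, y_2) \le w(x, y_1) + \dist_{G'_<}(y_1, y_2)$. If $\dist_G(y_1, y_2) \ge w(x, y_2) \ge \ell$ we are already done; otherwise every edge on the shortest $y_1$-$y_2$ path in $G$ has weight strictly less than $w(x, y_2)$ and hence has been processed before $(x, y_2)$, so the same telescoping argument yields $\dist_{G'_<}(y_1, y_2) \le (1+\e/3)\, \dist_G(y_1, y_2)$. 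Combining the two bounds and rearranging,
\[
\dist_G(y_1, y_2) \;>\; w(x, y_2) - \frac{w(x, y_1)}{1+\e/3} \;\ge\; w(x, y_2) \cdot \frac{\e/3}{1+\e/3} \;=\; \Omega(\e \ell),
\]
where the second inequality uses $w(x, y_1) \le w(x, y_2)$, and the last uses $w(x, y_2) \ge \ell$.

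To conclude, all such neighbors $y$ lie in $B_G(x, 2\ell)$ and are pairwise at $G$-distance $\Omega(\e \ell)$; the standard packing bound in a doubling metric of dimension $\ddim$ then gives at most $(O(1/\e))^{O(\ddim)} = \e^{-O(\ddim)}$ of them. I expect the main subtlety to be the telescoping argument, used twice above: it rests on the invariant that once an edge $(a,b)$ is processed by the greedy algorithm, $\dist_{G'}(a,b) \le (1+\e/3)\, w(a,b)$ holds from that moment onward, together with the observation that every edge on a shortest $G$-path weighs at most that path's total length.
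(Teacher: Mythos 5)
Your proof is correct and follows essentially the same strategy as the paper's: show that any two neighbors of $x$ via edges of weight in $[\ell, 2\ell]$ must be $\Omega(\e\ell)$-far apart in $G$, and then invoke the packing bound in the doubling metric. The paper's argument is much terser --- it simply asserts that the greedy rule would forbid both edges if the endpoints were too close --- whereas you carefully spell out the greedy-spanner invariant, the telescoping argument, and the case split needed (when $\dist_G(y_1,y_2) \ge w(x,y_2)$) to make that one-line assertion precise.
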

\begin{proof}
    We prove a stronger statement: the same bound holds in the greedy spanner $G'$ (and thus it holds is $T$, which is a subgraph of $G'$). For any vertex $v \in V(G')$, let \EMPH{$N_v$} denote the set of vertices in $V(G')$ that are connected to $v$ in $G'$ by an edge with weight in $[\ell, 2\ell]$. For every pair of vertices $a,b \in N_v$, we have $\dist_G(a,b) > \e \cdot 2 \ell$ --- otherwise, the $(1+\e)$-approximate greedy spanner $G'$ would not include both edges $(v,a)$ and $(v,b)$. This means that any ball of radius $\e \ell$ in $G$ covers only one vertex in $N_v$. On the other hand, every vertex of $N_v$ is covered by the ball of radius $2 \ell$ centered at $v$. Packing bound now implies that $|N_v| \le \e^{-O(\ddim)}$.
\end{proof}

\paragraph{Define labels and routing tables.} We are now ready to construct labels and routing tables for the vertices of $G$. As in the interval routing scheme, consider a DFS traversal of $T$ from an arbitrary root vertex. During this DFS traversal, always take the \EMPH{minimum-weight} edge possible. (We note the following consequence: for any vertex $x$, if we let $(c_1, c_2, \ldots, c_{k})$ denote the sequence of children of $v$ sorted in increasing order of DFS timestamp, we have that the weights of the corresponding edges are also in increasing order, i.e. $w((x, c_1)) \le \ldots \le w((x, c_k))$.)
The \EMPH{label} of vertex $x$ is $\dfs x$. Let \EMPH{$\beta = 2 \log (1/\e) \cdot \alpha$}, where $\alpha$ is the value from \Cref{clm:almost-degree}; the choice of $\beta$ will be made clear in the analysis. The \EMPH{routing table} of $x$ consists of the following four items:
\begin{enumerate}
    \item Store the DFS interval of $x$, and the port number of the parent of $x$.
    \item Let \EMPH{$c_1, \ldots, c_\beta$} denote the $\beta$ children
    of $x$ with the \emph{smallest} DFS timestamps. Store (in order) the DFS interval and port number corresponding to each of these $\beta$ \EMPH{children} vertices.
    \item Let $x'$ be the parent of $x$. Store the DFS interval of $x'$.
    \item Let $c'_1, c'_2, \ldots, c'_k$ be the children of $x'$ (i.e. the siblings of $x$), sorted in increasing order of DFS timestamps. Let $i$ be the index such that $x = c'_i$. Consider each of the vertices $c'_{i+1}, \ldots, c'_{i + \beta}$: store (in order) the DFS interval and the port number \emph{of $x'$} corresponding to each of these $\beta$ \EMPH{sibling} vertices.
\end{enumerate}

\paragraph{Define routing algorithm.} Given the label of a destination vertex $y$, an additional message header, and the routing table of the current vertex $x$, we now describe how to route to the next vertex. The message header will either be empty or will contain a port number. Let \EMPH{$x'$} denote the parent of $x$.
\begin{itemize}
    \item \textbf{Case 0.} Use (Item 1) of the routing table to check if $\dfs y = \dfs x$. If so, we are done.
    \item \textbf{Case 1.} Use (Item 1) of the routing table to check if $\dfs y$ is inside the DFS interval of $x$. If so:
    \begin{itemize}
        \item \textbf{Case 1a.} Check the children $(c_1, \ldots, c_\beta)$ stored by (Item 2). If there is some child $c_i$ whose DFS interval contains $\dfs y$, route to that child, sending an empty header.
        \item \textbf{Case 1b.} Otherwise, if the header contains a port number, route to that port. Send an empty header.
        \item \textbf{Case 1c.} Otherwise, route to the smallest child $c_1$, sending an empty header.
    \end{itemize}

    \item \textbf{Case 2.} Otherwise, $\dfs y$ is outside the DFS interval of $x$.
    \begin{itemize}
        \item \textbf{Case 2a.} Use (Item 3) to check if $\dfs y$ is outside the DFS interval of $x'$. If so, route to $x'$ (using the stored port number from Item 1) and send an empty header.
        \item \textbf{Case 2b.} Otherwise, use (Item 4) to check if $\dfs y$ is in any of the DFS intervals of the $\beta$ stored siblings $(c'_{i+1}, \ldots, c'_{i+\beta})$ of $x$. If so, $\dfs y$ is is in the DFS interval of some sibling $c'_z$, and we try to eventually route to $c'_z$. To do this, we route to the parent $x'$ and send a header with the stored port number (for $x'$) that corresponds to $c'_z$.
        \item \textbf{Case 2c.} Otherwise, we try to eventually route to the sibling $c'_{i+1}$. To do this, we route to the parent $x'$ and send a header with the stored port number (for $x'$) that corresponds to $c'_{i+1}$.
    \end{itemize}
\end{itemize}

\paragraph{Analysis.}
The size bound is simple. Let $x$ be a vertex. The label of $x$ is $O(\log n)$ bits. For the routing table of $x$, we need $O(\log n)$ bits to store the DFS intervals and port numbers for $x$, the parent of $x'$, $\beta$ children of $x$, and $\beta$ siblings of $x$; in total we need $O(\beta \log n) = \e^{-O(\ddim)} \log n$ bits. The message header consists of (at most) a single port number, so it is also $O(\log n)$ bits. The routing decision time is bounded by $\e^{-O(\ddim)}$, as the algorithm simply reads a sequence of $\e^{-O(\ddim)}$ words and tests whether a certain number is inside or outside of certain intervals. It remains to show the stretch bound.

\begin{claim}
    The routing scheme described above has stretch $1+\e$.
\end{claim}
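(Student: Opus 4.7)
The plan is to bound the algorithm's extra cost by exploiting a geometric decay of edge weights at each vertex, driven by the DFS-by-minimum-weight-edge traversal combined with \Cref{clm:almost-degree}. First I would observe that at every vertex $v$, the children $c_1, c_2, \ldots$ in DFS order are also ordered by incident edge weight (since DFS always descends via the lightest edge). By \Cref{clm:almost-degree}, each dyadic weight range at $v$ contributes at most $\alpha$ incident tree edges, so $\beta = 2\log(1/\e)\cdot\alpha$ children span at least $2\log(1/\e)$ dyadic ranges; this yields $w(v,c_{z-\beta}) \le 2\e^2 \cdot w(v,c_z)$ for every $z>\beta$, and a dyadic summation then gives $\sum_{j \le z-\beta} w(v,c_j) \le O(\alpha\e^2)\cdot w(v,c_z)$. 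In words, the ``slow'' children cumulatively weigh an $O(\alpha\e^2)$ fraction of the ``fast'' one --- this is the engine of the stretch analysis.

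Next, for any source--destination pair $(s,y)$, I would decompose the trajectory around the least common ancestor $\ell$ of $s$ and $y$ in $T$. During the \emph{ascent} from $s$, every intermediate vertex is a non-ancestor of $y$ whose parent is also a non-ancestor (except at the last step, the child of $\ell$), so Case 2a fires throughout, incurring no detour and spending exactly $d_T(s,\ell)$. At each \emph{descent} vertex $v$ with target child $c_{z(v)}$ on the path to $y$, if $z(v) \le \beta$ then Case 1a routes directly; otherwise the algorithm walks through $c_1, c_2, \ldots, c_{z(v)-\beta}$ in order, each visit being a round trip $v \to c_j \to v$ of cost $2w(v,c_j)$, until the stored-sibling list of $c_{z(v)-\beta}$ --- namely $c_{z(v)-\beta+1}, \ldots, c_{z(v)}$ --- contains the target, at which point Case 2b propagates the correct port back to $v$ and Case 1b forwards along it.

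By the structural step, the detour cost at each descent vertex is at most $O(\alpha\e^2)\cdot w(v,c_{z(v)})$; summing over the descent bounds the total detour by $O(\alpha\e^2)\cdot d_T(\ell,y) \le O(\alpha\e^2)\cdot d_T(s,y)$, yielding stretch $1 + O(\alpha\e^2)$. Choosing $\beta$ large enough that $O(\alpha\e^{\Omega(\ddim)})$ absorbs the $\e^{-O(\ddim)}$ factor in $\alpha$ --- equivalently, raising the constant in $\beta = \Theta(\log(1/\e))\cdot\alpha$ to one that depends on $\ddim$ --- and then rescaling $\e$ by a constant gives the claimed $(1+\e)$ stretch. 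The main obstacle I anticipate is establishing the sibling-search invariant cleanly: namely, that regardless of which entry port was used to reach a descent vertex $v$ (with either empty header, initiating Case 1c, or a header that triggers Case 1b), the forward search driven by Cases 2c and 1b always terminates at $c_{z(v)-\beta}$ with the correct port for $c_{z(v)}$ being propagated back, without double-counting any child or backtracking above $v$. This reduces to a careful enumeration over Cases 1a--c and 2a--c with header bookkeeping; once the invariant is in hand, the stretch bound falls out from the geometric weight decay above.
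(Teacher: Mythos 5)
Your proof takes essentially the same route as the paper: a detour-free ascent to the LCA, then at each descent vertex a geometric-series bound on the wasted sibling round-trips driven by \Cref{clm:almost-degree}, summed along the path---the paper phrases this as a step-by-step induction rather than an explicit LCA split, but the bookkeeping (DFS order equals edge-weight order, the sibling walk $c_1,\ldots,c_{z-\beta}$ via Cases 1b/2b/2c, the geometric series $\sum_{j\le z-\beta} 2\ell_j \le O(\alpha)\ell_{z-\beta}$, and the decay $\ell_{z-\beta}\le\e^2\ell_z$) is identical. You are also right on the quantitative point you flag: with the paper's $\beta = 2\log(1/\e)\alpha$ the detour at a descent vertex comes out to $O(\alpha\e^2)\ell_z$, not $O(\e)\ell_z$ (the paper's step $\ell_{z-\beta}\le\e\ell_z/4$ silently drops the $\alpha=\e^{-O(\ddim)}$ factor), and your fix of enlarging $\beta$ so that $\beta/\alpha=\Omega(\log(\alpha/\e))$---still $\beta=\e^{-O(\ddim)}$---is exactly the right repair. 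Your anticipated ``sibling-search invariant'' obstacle is real but benign: when the descent begins by arriving at the LCA from the $s$-side child $c_i$ with a nonempty header, Case 1b starts the walk at $c_{i+1}$ rather than $c_1$, so the visited indices form a subset of $\set{1,\ldots,z-\beta}$ and the bound only improves.
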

\begin{proof}
    The proof is by induction: if we start at some vertex $x$ and want to route to vertex $y$, we show there is some vertex $v$ on the path in $T$ between $x$ and $y$ such that we eventually route to $x$, after traveling only $(1+\e) \cdot \dist_T(x,v)$ distance.
    If $x = y$, we are done (Case 0). If $y$ is not a descendant of $x$, then $\dfs y$ lies outside the DFS interval of $x$, and we immediately route to the parent of $x$ (which lies on the shortest path between $x$ and $y$); this occurs in every subcase of Case 2. By induction, we are done. 
    
    The interesting case is when $y$ is a proper descendant of $x$. Let $(c_1, \ldots, c_k)$ denote the children of $x$, sorted in increasing order of DFS timestamps. Let \EMPH{$z$} denote the index such that $y$ is a descendant of the child \EMPH{$c_z$}.
    If $z \in [1,\beta]$, then we route to $c_z$ immediately and, by induction, we are done. Otherwise, we begin by routing to $c_1$. After routing to $c_1$, we will be in Case 2b or Case 2c (because $\dfs y$ belongs is in the DFS interval of a sibling of $c_1$). If $z \in [2, \beta+1]$, then we are in Case 2b; we route to $x'$ with a nonempty header and then immediately route to $c_z$ (by Case 1b on $x$). Otherwise, we are in Case 2c; we route to $x'$ with a nonempty header and then immediately route to $c_3$ (by Case 1b on $x$). In general, we continue traveling through the children $c_i$ in increasing order from $i \gets 1$ up until we reach the index $i \gets z-\beta$, after which we route to $c_z$.
    (Note that $z-\beta > 1$, as otherwise we would have immediately routed to $c_z$.) 
    It remains to analyze the total distance traveled before we reach $c_z$.

    For every child $c_i$ of $x$, let $\ell_i$ denote the weight of the edge $(x, c_i)$. The total distance traveled before reaching $c_z$ is
    \[\ell_z + \sum_{i=1}^{z - \beta} 2 \ell_i \]
    because we travel along each edge $(x, c_i)$ twice (once to reach $c_i$ for the first time, and once to route back to $x$ on the way to $c_{i+1}$) before eventually routing along $(x, c_z)$. 
    By \Cref{clm:almost-degree}, we have $\ell_i \ge \ell_{i-\alpha}/2$. By grouping together the terms of the sum $\sum_{i=1}^{z - \beta} 2 \ell_i$ into chunks of size $\alpha$, we arrive at a geometric series where the dominant term is $\alpha \cdot 2 \ell_{z-\beta}$. Thus,
    \[\sum_{i=1}^{z - \beta} 2 \ell_i \le 4 \cdot \alpha \cdot \ell_{z - \beta}.\]
    By definition, $\beta = 2 \log(1/\e) \cdot \alpha$. \Cref{clm:almost-degree} now implies
    \[\ell_{z-\beta} \le \ell_z / 2^{2 \log (1/\e)} = \e \cdot \ell_z / 4.\]
    We conclude $\sum_{i=1}^{z - \beta} 2 \ell_i \le \e \cdot \ell_z$, meaning that we travel a total distance of $(1+\e)\ell_z = (1+\e) \dist_T(x, c_z)$ before reaching $c_z$, as desired.
\end{proof}

\subsection{Finding the correct tree}
\label{ssec:find-correct-tree}
The goal of this section is to prove \Cref{lem:find-correct-tree}, which constructs labels that let us find an appropriate tree in our tree cover for routing. 
We restate it here for convenience.

\findCorrectTree*

The approach is nearly identical to \cite{CCL+24b}, who proved something similar for their Euclidean tree cover; we need to adapt some details to our specific tree cover construction. We begin by reducing the problem to a variant of lowest-common-ancestor labeling. We need to introduce some notation. Recall that the spanning tree cover of \Cref{thm:spanning-main} was constructed by starting with a pair-preserving $(\mu, \rho)$-HPF $\mathfrak{H} = \set{\mathbb{H}_1, \ldots, \mathbb{H}_\ell}$; for each $j \in [1, \log_\mu(1/\e)]$ and $\mathbb{H} \in \mathfrak{H}$, we constructed a tree $T^j_{\mathbb{H}}$. 
We now explicitly define an object that was implicitly used to construct the tree $T^i_{\mathbb H}$. For every hierarchy $\mathbb H \in \mathfrak H$ and $j \in [1, \log_\mu(1/\e)]$, we define the \EMPH{subhierarchy $\mathbb H^j$} to be the hierarchy obtained by starting with the sequence of partitions $\mathbb H = \set{P_1, \ldots, P_{i_{\max}}}$ and then \emph{coarsening} $\mathbb H$ by deleting from $\mathbb H$ every partition  $P_i$ with $i \not \equiv j \mod \log_\mu (1/\e)$.
Recall that any hierarchy can be viewed alternatively as a tree of clusters; viewing $\mathbb H^j$ under this perspective,
observe that for any cluster $C$ in $\mathbb H^j$, the children
of $C$ in $\mathbb H^j$ are precisely the $\e$-subclusters of $C$ in $\mathbb H$, and the leaves of $\mathbb H^j$ are singleton clusters (each containing a single vertex). Each cluster $C$ in $\mathbb H$ has at most $\e^{-O(\ddim)}$ $\e$-subclusters\footnote{this follows from the degree bound of \Cref{lem:HPF}; see the proof of \Cref{lm:pair_preserve_HPF}} and so $C$ has at most $\e^{-O(\ddim)}$ children in $\mathbb H^j$.
The spanning tree cover algorithm implicitly constructs the subhierarchy $\mathbb H^j$ while constructing the tree $T^j_{\mathbb H}$ (cf. \Cref{obs:subtree}).

For every $\mathbb H \in \mathfrak H$ and $j \in [1, \log_\mu(1/\e)]$, we define the \EMPH{compressed subhierarchy $\check{ \mathbb H}^j$} to be a tree obtained by viewing $\mathbb H^j$ as a tree of clusters, and contracting away all clusters that have exactly one child. 
If cluster $C$ in $\mathbb H$ is labeled with an $\e$-subcluster pair $(C_1, C_2)$, then we say that the corresponding cluster in $\check{\mathbb H}^j$ (if it was not deleted or contracted away) is also labeled with the same pair $(C_1, C_2)$; note that $C_1$ and $C_2$ are children of $C$ in $\check{\mathbb H}^j$. If $G$ has $n$ vertices, then $\check {\mathbb H}^j$ contains only $O(n)$ clusters. Let $\check {\mathfrak H}$ be the set of compressed subhierarchies $\check{\mathbb H}^j$ for all $\mathbb H \in \mathfrak H$ and $j \in [1, \log_\mu(1/\e)]$.

Given a compressed subhierarchy $\check {\mathbb{H}}^j$ and a pair of vertices $x, y \in V(G)$, we abuse notation and let \EMPH{$\lca(u,v)$} denote the cluster in $\check {\mathbb{H}}^j$ which is the lowest common ancestor (for short, LCA) of the leaves corresponding to $x$ and $y$ in $\check {\mathbb{H}}^j$. We say that $\check {\mathbb{H}}^j$ satisfies the \EMPH{LCA condition for $(x,y)$} if
$\lca(x,y)$ is labeled with a pair of subclusters $\set{C_1, C_2}$ such that $x \in C_1$ and $y \in C_2$.
\begin{observation}
\label{obs:reduce-to-lca}
    Let $x$ and $y$ be vertices.
    \begin{itemize}
        \item If some compressed subhierarchy $\check{\mathbb{H}}^j \in \check{\mathfrak{H}}$ satisfies the LCA condition for $(x,y)$, then the tree $T_{\mathbb{H}}^j$ preserves the distance between $x$ and $y$ up to a factor $1+\e$.
        \item There is some compressed subhierarchy $\check{\mathbb{H}}^j \in \check{\mathfrak{H}}$ which satisfies the LCA condition for $(x,y)$.
    \end{itemize}
\end{observation}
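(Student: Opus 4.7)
For existence, the plan is to invoke the pair-preserving guarantee of $\mathfrak{H}$ (\Cref{def:pair_HPF}, delivered by \Cref{lm:pair_preserve_HPF}) on the pair $(x,y)$. This produces a hierarchy $\mathbb{H}\in\mathfrak{H}$ and a cluster $C$ at some level $i$, whose associated subcluster pair $(C_1,C_2)$ contains $x\in C_1$ and $y\in C_2$ and satisfies $d_G(\cdot,\cdot)=d_{G[C]}(\cdot,\cdot)$ on $C_1\times C_2$. Setting $j \coloneqq i \bmod \log_\mu(1/\e)$, both level $i$ and level $i-\log_\mu(1/\e)$ are retained in the subhierarchy $\mathbb{H}^j$, so $C_1$ and $C_2$ appear as children of $C$ in $\mathbb{H}^j$; the $\rho$-well-separation forces $C_1\ne C_2$, so $C$ has at least two children in $\mathbb{H}^j$ and survives the contraction producing $\check{\mathbb{H}}^j$. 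Any common ancestor of $x$ and $y$ in $\check{\mathbb{H}}^j$ must be an ancestor of $C$ (since the disjoint children $C_1,C_2$ separate $x$ from $y$), while $C$ itself is a common ancestor, making $\lca(x,y)=C$ with label $(C_1,C_2)$, certifying the LCA condition.

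For sufficiency, suppose $\check{\mathbb{H}}^j$ satisfies the LCA condition for $(x,y)$; let $C \coloneqq \lca(x,y)$ at level $i \equiv j \pmod{\log_\mu(1/\e)}$ in $\mathbb{H}$ with associated pair $(C_1,C_2)\ni(x,y)$. \Cref{obs:subtree} identifies the portion of $T^j_{\mathbb{H}}$ on $C \cup \pi$ (where $\pi$ is the path fed to the recursive call on $C$) as the output of $\textsc{PathPreservingTree}(\mathbb{H}, G[C], \pi)$, so $x$--$y$ distances in this subtree equal those in $T^j_{\mathbb{H}}$. \Cref{lm:1+e-spanning-pathpreser}(\ref{it:1+e_distort_pair}) then bounds $d_{T^j_{\mathbb{H}}}(x,y) \le d_{G[C]}(x,y) + 44\e\mu^i$. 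Condition~(2) of \Cref{def:pair_HPF} applied at the associated pair of $C$ yields $d_{G[C]}(x,y)=d_G(x,y)$, and the $\rho$-well-separation forces $\mu^i \le \rho\cdot d_G(x,y)$; combining, the stretch is at most $1+44\rho\e=1+\tilde{O}(\ddim^3)\cdot\e$, which rescales to $1+\e$ by shrinking $\e$ by the constant factor $44\rho$.

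The most delicate step is that the sufficiency argument invokes condition~(2) of \Cref{def:pair_HPF} at the LCA cluster, whereas the pair-preserving guarantee a priori supplies such a witness cluster only for each specific vertex pair. I would handle this by unpacking the construction in \Cref{lm:pair_preserve_HPF}: the ball-containment argument there verifies condition~(2) as a uniform property of every cluster's associated subcluster pair in $\mathfrak{H}$, independent of which pair is being tested. With that in hand, both bullets reduce to routine LCA/tree bookkeeping on the tree-of-clusters view of $\mathbb{H}^j$ and $\check{\mathbb{H}}^j$.
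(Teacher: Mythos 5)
Your overall structure matches the paper's own proof: for the second bullet you invoke the pair-preserving guarantee of \Cref{def:pair_HPF} to produce a witness cluster $C$ at some level $i$, set $j \equiv i \pmod{\log_\mu(1/\e)}$, argue that $C$ has at least two children in $\mathbb{H}^j$ (namely $C_1,C_2$) and so survives the contraction, and conclude that $C$ is the LCA of $x$ and $y$ in $\check{\mathbb H}^j$; for the first bullet you identify the LCA cluster as the cluster to which \Cref{obs:subtree} and \Cref{lm:1+e-spanning-pathpreser}(\ref{it:1+e_distort_pair}) apply, invoke condition~(2) of \Cref{def:pair_HPF} to get $d_{G[C]}(x,y)=d_G(x,y)$, and combine with the $\rho$-well-separation lower bound $d_G(x,y)>\mu^i/\rho$ to bound the stretch. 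This is the same chain of reasoning the paper uses (it cites \Cref{lem:assigned-good-stretch} rather than unfolding it, but your unfolding reproduces it faithfully).

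The ``delicate step'' you flag at the end is a genuine issue, and you are right to raise it: the LCA condition only asserts membership ($x\in C_1$, $y\in C_2$), which is exactly condition~(1) of \Cref{def:pair_HPF}, whereas the stretch argument crucially uses condition~(2), $d_{G[C]}(x,y)=d_G(x,y)$, to turn the additive bound $d_T(x,y)\le d_{G[C]}(x,y)+44\e\mu^i$ into the multiplicative $1+O(\rho\e)$. The paper's one-line justification (``so the assumptions of \Cref{lem:assigned-good-stretch} are satisfied'') glosses over this same step, so flagging it is a contribution rather than a misreading.

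However, your proposed repair is not correct. You claim the ball-containment argument inside the proof of \Cref{lm:pair_preserve_HPF} establishes condition~(2) \emph{uniformly} across all cluster/pair assignments in $\mathfrak H$. It does not: that argument derives $d_{G[C]}(x,y)=d_G(x,y)$ from two ingredients that are both tied to the specific query pair $(u,v)$ and the witness cluster returned by the padding property, namely (a) the scale $i$ is chosen as the smallest integer with $d_G(u,v)\le \mu^i/(2\rho')$, and (b) the witness cluster $C$ satisfies $B(u,\mu^i/\rho')\subseteq C$. For an arbitrary cluster $C$ at level $i$ whose assigned pair happens to contain $x$ and $y$, neither of these holds: $\rho$-well-separation only gives $d_G(x,y)>\mu^i/\rho=\mu^{i-1}/(D\rho')$, which is much weaker than $d_G(x,y)\le \mu^i/(2\rho')$, and there is no containment $B(x,\cdot)\subseteq C$ available. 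Since $C$ has strong diameter $\mu^i$ but $d_G(x,y)$ can be as small as roughly $\mu^{i-1}$, a shortest $G$-path between $x$ and $y$ that exits $C$ could in principle be a $\Theta(\rho)$ factor shorter than the shortest path inside $G[C]$, so well-separation alone does not force equality. In short, the construction in \Cref{lm:pair_preserve_HPF} assigns \emph{every} $\rho$-well-separated subcluster pair to some copy of $C$ without filtering on condition~(2), so condition~(2) is verified only for the pair-dependent witness, not universally. Your fix as stated therefore does not close the gap; a correct patch would have to either restrict the assignment to pairs that satisfy condition~(2) (and re-verify the existence bullet still holds) or replace the equality $d_{G[C]}(x,y)=d_G(x,y)$ by a weaker multiplicative bound $d_{G[C]}(x,y)\le(1+O(\rho\e))\,d_G(x,y)$ that is actually established for every assigned pair.

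Everything else in your argument (the level bookkeeping $j\equiv i$, the survival of $C$ under contraction, the identification of the LCA, and the use of \Cref{obs:subtree} plus \Cref{lm:1+e-spanning-pathpreser}(\ref{it:1+e_distort_pair})) matches the paper and is sound.
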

\begin{proof}
    The first statement follows from \Cref{lem:assigned-good-stretch}: indeed, the LCA of $x$ and $y$ in $\check{\mathbb H}^j$ is some scale-$i$ cluster with $j \equiv i \mod \log_\mu(1/\e)$, and $C$ is assigned to two subclusters that contain $x$ and $y$ respectively, so the assumptions of \Cref{lem:assigned-good-stretch} are satisfied.
    (Note that \Cref{lem:assigned-good-stretch} only claims that distances are preserved up to factor $1+O(\rho \e)$, but this is rescaled to $1+\e$ later in the proof of \Cref{thm:spanning-main}.)

    The second statement follows from the definition of pair-preserving HPF. Indeed, the definition guarantees that there is some hierarchy $\mathbb{H} \in \mathfrak H$ and cluster $C$ in $\mathbb H$ that is assigned to a pair of distinct subclusters $(C_1, C_2)$ with $x \in C_1$ and $y \in C_2$. Let $j$ be the integer in $[1, \log_\mu(1/\e)]$ with $j \equiv i \mod \log_\mu (1/\e)$. By definition of $j$, cluster $C$ appears in the subhierarchy $\mathbb H^j$. Furthermore, cluster $C$ has at least two children subclusters in $\mathbb H^j$, namely $C_1$ and $C_2$; thus $C$ appears in the contracted subhierarchy $\check{\mathbb H}^j$. Finally, observe that $C$ is the LCA of $x$ and $y$ in $\check{\mathbb H}^j$.
\end{proof}

For the rest of this section, we let $\check{\mathbb{H}}^j$ be some fixed compressed subhierarchy; we aim to construct an $O(\log n \cdot \ddim \cdot \log 1/\e)$-bit label on the vertices $V(G)$ such that, given the labels of vertices $x$ and $y$, we can check: \emph{Does $\check{\mathbb{H}}^j$ satisfies the LCA condition for $(x, y)$?} \Cref{obs:reduce-to-lca} implies that constructing these labels is sufficient to prove \Cref{lem:find-correct-tree}. Indeed, we define the complete label of $x \in V(G)$ to be the concatenation of the $O(\log n \cdot \ddim \cdot \log 1/\e)$-bit labels for each of the $\e^{-\tilde O(\ddim)}$ hierarchies $\check{\mathbb{H}}^j$; given the labels of $x$ and $y$, we can find some $\check{\mathbb H}^j$ that satisfies the LCA condition for $(x,y)$ and return the index of tree $T_{\mathbb H}^j$.
The final label has size $\e^{-\tilde O(\ddim)} \cdot \log n$.

\paragraph{LCA labeling.} We begin by reviewing some tools for LCA labeling. Let $T$ be a tree.\footnote{Later, we will take $T$ to be the compressed subhierarchy $\check{\mathbb{H}}^j$; here we state the results for a general tree $T$ to emphasize that we do not need any specific properties of $\check{\mathbb{H}}^j$.} For any two vertices $x$ and $y$ in a tree, let \EMPH{$\lca(x,y)$} denote the lowest common ancestor. For any vertex $x$ in $T$, we define the weight of $x$ to be the number of descendants of $x$ (including itself). A vertex is \EMPH{heavy} if its weight is greater than half the weight of its parent, otherwise it is \EMPH{light}.
Every vertex has at most 1 heavy child.
For every vertex $x$, we define \EMPH{$\apices(x)$} to be the set containing the parent of every light ancestor of $x$. There are $O(\log |V(T)|)$ apices for each vertex $x$, where $|V(T)|$ denotes the number of vertices of $T$. Notice that $\lca(x,y)$ is in $\apices(x) \cup \apices(y)$. The following lemma from \cite{CCL+24b} is implicit from (a small modification of) the LCA labeling scheme of Alstrup, Halvorsen, and Larsen \cite{alstrup2014near}.

\begin{lemma}[Lemma 5.5 of \cite{CCL+24b}, adapted from Corollary 4.17 of \cite{alstrup2014near}]
\label{lem:lca-labeling}
    Suppose we are given, for every vertex $x$ in tree $T$, a function $L_x:V(T) \to \set{0,1}^k$ which assigns each vertex $v \in V(T)$ to some $k$-bit ``name'' $L_x(v)$. (That is, there is a different naming-function on $V(T)$ for every vertex $v$.) Then we can construct a labeling scheme on the vertices $V(T)$ that uses $O(k \log |V(T)|)$ bits per label, such that given the labels of two leaves $x$ and $y$, we can compute
    \begin{itemize}
        \item $L_x(\lca(x,y))$ if $\lca(x,y) \in \apices(x)$
        \item $L_y(\lca(x,y))$ if $\lca(x,y) \in \apices(y)$.
    \end{itemize}
    If $\lca(x,y) \in \apices(x) \cap \apices(y)$, then we can compute the pair $(L_x(\lca(x,y)), L_y(\lca(x,y)))$. This computation can be done in $O(1)$ time.
\end{lemma}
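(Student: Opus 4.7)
The plan is to combine a heavy-path decomposition of $T$ with an off-the-shelf LCA labeling scheme from \cite{alstrup2014near}, augmented to carry the payload data $L_v(\cdot)$ for every vertex $v$.

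First, I will compute a heavy-path decomposition of $T$: each internal vertex marks its child with the largest subtree as heavy, and the remaining children as light. Each vertex $v$ then has $|\apices(v)| = O(\log |V(T)|)$, since the apices of $v$ are in bijection with the light edges on the root-to-$v$ path, and subtree weights at least halve across each such edge. I fix an ordering $a_1(v), a_2(v), \ldots, a_{t(v)}(v)$ of the apices of $v$ by depth, and recall from the excerpt that for any leaves $x, y$ we have $\lca(x, y) \in \apices(x) \cup \apices(y)$.

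The label of each vertex $v$ will have two components. The first is an LCA sub-label of size $O(\log |V(T)|)$ bits, obtained from the Alstrup--Halvorsen--Larsen construction. Given the first components of two leaves $x, y$, a constant-time decoder outputs (i)~whether $\lca(x, y) \in \apices(x)$ and, if so, the index $i$ with $\lca(x, y) = a_i(x)$, and (ii)~the analogous data for $y$. The second component of $v$'s label is the concatenated payload list $L_v(a_1(v)) \parallel L_v(a_2(v)) \parallel \cdots \parallel L_v(a_{t(v)}(v))$, using $O(k \log |V(T)|)$ bits. The total label size is therefore $O(k \log |V(T)|)$. At query time, I decode the first components of $x$ and $y$ to recover the apex index $i$ (and/or $j$), then read off $L_x(a_i(x))$ (and/or $L_y(a_j(y))$) by directly indexing into the payload list; this takes $O(1)$ time on the word RAM.

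The main obstacle I anticipate is verifying that the AHL scheme can be adapted to expose, alongside the identifier of $\lca(x, y)$, its apex index $i$ at $x$ (and $j$ at $y$) in $O(1)$ time without inflating the sub-label beyond $O(\log |V(T)|)$ bits. This should follow naturally from the structure of their decoder: it compares heavy-path identifiers along the sequences of apices of $x$ and $y$, and the apex at which the two sequences first coincide is $\lca(x, y)$ itself, so its position $i$ in $x$'s apex list is intrinsic to the computation. With a small amount of bookkeeping (for instance, tagging each heavy-path identifier with its apex index), the decoder returns $i$ at no additional asymptotic cost, completing the proof.
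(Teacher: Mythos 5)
The paper does not actually prove this lemma; it is taken as a black box from~\cite{CCL+24b}, who in turn derive it as a small modification of the Alstrup--Halvorsen--Larsen NCA labeling scheme~\cite{alstrup2014near}. Your reconstruction is the right idea and matches that cited source: store at each vertex $v$ the payload list $L_v(a_1(v))\,\|\,\cdots\,\|\,L_v(a_{t(v)}(v))$ (which is $O(k\log|V(T)|)$ bits since $|\apices(v)|=O(\log|V(T)|)$), pair it with an $O(\log|V(T)|)$-bit apex-index-aware NCA sub-label, and at query time index into the payload list by the returned apex index. The one step you flag as an ``anticipated obstacle'' --- that the AHL scheme can be augmented to expose, in $O(1)$ time and within $O(\log|V(T)|)$ bits, the index $i$ with $\lca(x,y)=a_i(x)$ whenever $\lca(x,y)\in\apices(x)$ --- is precisely the ``small modification'' the paper alludes to, and it is not established in this paper either; you correctly identify it as the crux but do not prove it, which matches the paper's own treatment (citation without proof). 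Everything else in your argument (the halving bound giving $O(\log|V(T)|)$ apices, the fact that $\lca(x,y)\in\apices(x)\cup\apices(y)$, and the bookkeeping of label sizes) is correct.
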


\paragraph{Choosing the names $L_x$.} For every cluster $C$ in $\check {\mathbb H}^j$, assign an arbitrary ordering to its children. There are $\e^{-O(\ddim)}$ children
per cluster, so this ordering lets us specify a child of $C$ using only $O(\ddim \cdot \log 1/\e)$ bits. Let $x \in V(G)$ be a vertex, and treat $x$ as a leaf of $\check {\mathbb H}^j$. For every non-leaf cluster $C$ in $\check{\mathbb H}^j$, we define the ``name'' \EMPH{$L_x(C)$} to be a bit-string consisting of three parts:
\begin{itemize}
    \item (L1) Record which child cluster of $C$ contains $x$, using $O(\ddim \cdot \log 1/\e)$ bits.
    \item (L2) Record which child cluster of $C$ is \emph{heavy} (if there is one), using $O(\ddim \cdot \log 1/\e)$ bits.
    \item (L3) Recall that $C$ is assigned to a pair of children $(C_1, C_2)$ in $\check {\mathbb H}^j$. Record this assignment, using $O(\ddim \cdot \log 1/\e)$ bits.
\end{itemize}
Overall, $L_x(C)$ consists of $O(\ddim \cdot \log 1/\e)$ bits.

\paragraph{Defining and decoding labels.}
Define the label of each vertex in $V(G)$ according to \Cref{lem:lca-labeling}, with the definition of $L_x$ from the previous paragraph. Each label is $O(\log n \cdot \ddim \cdot \log 1/\e)$ bits, as required. We now show that, given the labels of two vertices $x$ and $y$, we can determine whether $\check{\mathbb{H}}^j$ satisfies the LCA condition for $(x,y)$. We use \Cref{lem:lca-labeling} to find information about the LCA of $x$ and $y$, in $O(1)$ time. Letting $C = \lca(x,y)$, we have two cases.
\begin{itemize}
    \item \textbf{Case 1:} $C \in \apices(x) \cap \apices(y)$. Then the labeling of \Cref{lem:lca-labeling} lets us recover $L_x(C)$ and $L_y(C)$. In particular, the (L1) parts of $L_x(C)$ and $L_y(C)$ let us recover the child clusters $C_x$ and $C_y$ containing $x$ and $y$, and the (L3) part lets us recover the pair of children $\set{C_1, C_2}$ that $C$ was assigned to. $\check{\mathbb H}^j$ satisfies the LCA condition for $(x,y)$ if and only if $\set{C_x, C_y} = \set{C_1, C_2}$.
    \item \textbf{Case 2:} Suppose that $C \not \in \apices(x) \cap \apices(y)$. Then, because $C \in \apices(x) \cup \apices(y)$, we may assume WLOG that $C \in \apices(x)$ but $C \not \in \apices(y)$. This means that $y$ belongs to a child of $C$ that is heavy. The labeling of \Cref{lem:lca-labeling} lets us recover $L_x(C)$. The (L1) part of the label lets us recover the child cluster $C_x$ containing $x$, the (L2) part lets us recover the child cluster $C_y$ containing $y$, and the (L3) part lets us recover the pair of children $\set{C_1, C_2}$ that $C$ was assigned to. As above, this lets us determine whether or not $\check {\mathbb{H}}^j$ satisfies the LCA condition for $(x,y)$.
\end{itemize}
The decoding procedure in both Case 1 and Case 2 involves only a linear scan over the labels $L_x(C)$ and $L_y(C)$ to extract the identifiers of children clusters and perform equality checks, so it can be done in $O(d \cdot \log 1/\e)$ time in the word RAM model. We have shown there is an $O(\log n \cdot \ddim \cdot \log 1/\e)$-bit labeling scheme that lets us determine whether $\check{\mathbb{H}}^j$ satisfies the LCA condition for any two vertices. Combined with the discussion after \Cref{obs:reduce-to-lca}, this proves \Cref{lem:find-correct-tree}. We have thus completed the proof of \Cref{thm:routing}.

\paragraph{Acknowledgement.~}
Hsien-Chih Chang and Jonathan Conroy are supported by the NSF CAREER Award No.\ CCF-2443017.
Hung Le and Cuong Than are supported by the NSF CAREER Award No.\ CCF-2237288, the NSF Grant No.\ CCF-2121952 and a Google Research Scholar Award. Shay Solomon is funded by the European Union (ERC, DynOpt, 101043159).  Views and opinions expressed are however those of the author(s) only and do not necessarily reflect those of the European Union or the European Research Council.  Neither the European Union nor the granting authority can be held responsible for them.  Shay Solomon is also supported by a grant from the United States-Israel Binational Science Foundation (BSF), Jerusalem, Israel, and the United States National Science Foundation (NSF).



\small
\bibliographystyle{alphaurl}
\bibliography{main,routing}

\end{document}